\def\l@subsubsection#1#2{}
\newcommand{\beq}{\begin{eqnarray}}
	\newcommand{\eeq}{\end{eqnarray}}
\newcommand{\Z}{\mathbb{Z}}
\newcommand{\la}{\langle}
\newcommand{\ra}{\rangle}
\DeclareMathOperator{\tr}{tr}
\newcommand{\di}{{\rm dist}}
\newcommand{\ind}{{\rm ind}}
\newcommand{\bsp}{\begin{split}}
	\newcommand{\esp}{\end{split}}
\newcommand{\hc}{{\rm h.c.}}
\newcommand{\ie}{{i.e., }}
\newcommand{\eg}{{e.g., }}
\newcommand{\mO}{{\mathcal{O}}}
\newcommand{\Le}{\mathcal{L}}
\newcommand{\Ri}{\mathcal{R}}
\newcommand{\C}{\mathcal{C}}
\newcommand{\Hi}{\mathcal{H}}
\newcommand{\sa}{\mathcal{S}}
\newcommand{\ZZ}{\mathcal{Z}}
\newcommand{\Al}{\mathcal{A}}
\newcommand{\au}{\mathrm{Aut}}
\newcommand{\B}{\mathcal{B}}
\newcommand{\Su}{\mathrm{Supp}}
\newcommand{\U}{\mathrm{U}(1)}
\definecolor{darkblue}{rgb}{0.,0.,0.4}
\definecolor{darkred}{rgb}{0.5,0.,0.}
\definecolor{BlueViolet}{RGB}{138,43,226}
\definecolor{SkyBlue}{RGB}{30,144,255}
\definecolor{DarkGreen}{RGB}{0,100,0}
\theoremstyle{plain}
\newtheorem*{theorem*}{Theorem}
\newtheorem{theorem}{Theorem}
\newtheorem{lemma}{Lemma}
\newcommand{\add}[1]{ {   #1 }}
\begin{document}
	\title{
Quantum Cellular Automata on Symmetric Subalgebras}
	
	\author{Ruochen Ma}
    \affiliation{Kadanoff Center for Theoretical Physics \& Enrico Fermi Institute, University of Chicago, Chicago, IL 60637, USA}

    \author{Yabo Li}
    \affiliation{Center for Quantum Phenomena, Department of Physics, New York University, 726 Broadway, New York, New York 10003, USA}

\author{Meng Cheng}
\affiliation{Department of Physics, Yale University, New Haven, Connecticut 06511-8499, USA}

\begin{abstract}
We investigate quantum cellular automata (QCA) on one-dimensional spin systems defined over a \emph{subalgebra} of the full local operator algebra—the symmetric subalgebra under a finite Abelian group symmetry $G$. For systems where each site carries a regular representation of $G$, we establish a complete classification of such subalgebra QCAs based on two topological invariants: (1) a surjective homomorphism from the group of subalgebra QCAs to the group of anyon permutation symmetries in a $(2+1)d$ $G$ gauge theory; and (2) a generalization of the Gross-Nesme-Vogts-Werner (GNVW) index that characterizes the flow of the symmetric subalgebra. Specifically, two subalgebra QCAs correspond to the same anyon permutation and share the same index if and only if they differ by a finite-depth unitary circuit composed of $G$-symmetric local gates. We also identify a set of operations that generate all subalgebra QCAs through finite compositions. As an example, we examine the Kramers-Wannier duality on a $\mathbb{Z}_2$ symmetric subalgebra, demonstrating that it maps to the $e$-$m$ permutation in the two-dimensional toric code and has an irrational index of $\sqrt{2}$. Therefore, it cannot be extended to a QCA over the full local operator algebra and mixes nontrivially with lattice translations.
\end{abstract}

\maketitle

\tableofcontents

\section{Introduction}

Quantum Cellular Automata (QCA) serve as models for the discrete-time evolution of quantum many-body systems, incorporating two essential principles of quantum many-body physics: unitarity and locality. Originally proposed as an alternative paradigm for quantum computation \cite{watrous1995one}, QCA have recently been linked to a broad range of topics, including the study of invertible topological phases \cite{2020FHHbeyondcoho}, periodically driven (Floquet) quantum dynamics \cite{2016FloquetPo,2017fQCA,2019fQCA,2016AFAI}, topological phases in many-body localized systems \cite{2024longelse}, and the simulation of quantum field theories \cite{2007QFT,2020QFT,2020QFT2}.

To be concrete, the general setup of QCA, as adopted in most existing literature, is as follows \cite{2019review,2020review}. We consider a spatial lattice with a finite-dimensional Hilbert space (either bosonic or fermionic) at each site. The full Hilbert space and the algebra of observables are simply the tensor product of those on each site. The QCA evolves the system over discrete time steps via a unitary operator or, equivalently, as an automorphism of the algebra of observables in the Heisenberg picture. Crucially, each local operator is transformed by the QCA into an operator with finite support on a nearby neighborhood, and the distance between the evolved operator and the initial one is uniformly bounded by a fixed finite constant across the entire system. In other words, QCA models discrete unitary dynamics with a strict causal cone.

In one spatial dimension (1D), our understanding of QCAs has reached a largely mature stage. QCAs on a spin system are completely classified by a group homomorphism to the GNVW index \cite{2009GNVW}, which takes values in positive rational numbers, with the group of finite-depth circuits (FDCs) as the kernel (having a trivial index). This means that two QCAs differ only by an FDC if and only if they have the same index. A representative example of a QCA with a nontrivial index is a lattice shift. Further enrichment of the classification arises when imposing the additional constraint that the QCA conserves a global symmetry \cite{2017MPUCirac,2018sMPU,2021U1QCA} or when considering the case where the underlying lattice system is fermionic \cite{2019fQCA}. QCAs in higher spatial dimensions remain a topic with many open questions that attract active research. Recently, significant progress has been made in the classification and construction of nontrivial examples \cite{2023FHH,Haah:2019fqd, 2020FH,2023Haah,2024FHH}.

A natural question arises: Since most current discussions focus on QCAs acting on the tensor product algebra of a lattice—\ie the product of operator algebras at each site (with a few exceptions \cite{2023JonesDHR,2024Jonesindex,2024JW}, discussed below)—can we generalize the notion of QCAs to be defined on a \emph{subalgebra} of this algebra, which may not have a tensor product structure? What would be the classification and characterization of such QCAs? In this work, we examine the simplest scenario by considering QCAs defined on the subalgebra that is symmetric under a finite Abelian group symmetry.

To be precise, we start with a 1D lattice chain, assigning a spin Hilbert space to each site. The chain is acted upon by a finite Abelian symmetry $G$. We assume the symmetry is implemented as a tensor product of onsite symmetry generators, and each site carries a regular representation of the symmetry group. The full algebra of observables is defined as all local operators on the chain—\ie operators with finite support for an infinite chain or operators whose size is much smaller than the system size for a finite lattice with a periodic boundary condition. We consider a subalgebra of this full local operator algebra, namely, the one invariant under the $G$ symmetry. A $G$-symmetric subalgebra QCA is then defined as an automorphism of the $G$-symmetric subalgebra that maps each local operator to another local operator in the same subalgebra, supported on a nearby finite neighborhood. Note that this definition includes $G$-symmetric FDCs (i.e. those that commute with the $G$ symmetries) as special cases.

One motivation for considering QCAs on symmetric subalgebras arises from recent development in generalized (non-invertible) symmetries \cite{2015GGS,2023ShaoTasi,2024noninvICTP}. Notably, from the viewpoint of local operator algebras, a broad class of conventional group-like symmetries—including tensor product symmetries, anomalous symmetries \cite{2014elsenayak}, and lattice translations—can themselves be viewed as ordinary QCAs, i.e., locality-preserving automorphisms of the full local operator algebra. A simple example of a QCA acting on a symmetric subalgebra is provided by the Kramers-Wannier (KW) duality. On a qubit chain with $\Z_2$ symmetry generated by the product of onsite Pauli-$X$ operators, the KW duality defines a locality-preserving automorphism on the $\Z_2$ symmetric subalgebra via $X_i \mapsto Z_i Z_{i+1}$ and $Z_i Z_{i+1} \mapsto X_{i+1}$, while it annihilates all $\Z_2$-odd local operators. This can be understood as the KW duality implementing a gauging of the $\Z_2$ global symmetry \cite{2019dualweb}. This fact makes it a QCA (specifically, an automorphism) defined \emph{only} on the symmetric subalgebra. Although non-invertible symmetries are an active area of research, most existing literature focuses on case-by-case studies of specific examples (e.g., KW duality, the Kennedy-Tasaki transformation \cite{kennedy1992hidden}). A systematic discussion starting from the axiomatic definition of QCA is lacking, making the full landscape of such transformations—especially a general understanding of their equivalence classes—unclear. For instance, one may ask: what are all possible QCAs on a symmetric subalgebra?

In this work, we present a complete classification of QCA defined on a subalgebra symmetric under a finite Abelian group symmetry\footnote{This encompasses all Tambara-Yamagami (TY) fusion categorical symmetries \cite{tambara1998tensor} in 1D.}, where each site carries a regular representation of the group. We prove that such QCAs, up to FDCs with $G$-symmetric local gates, are completely classified by two ``invariants".

    The first invariant is defined by how the QCA transforms two distinguished types of operators: (1) a symmetry operator restricted to a large but finite region, which creates a pair of $G$ domain walls; and (2) a product of two distant (but still finitely separated) charged operators, such that the total charge is neutral.  
    
    We demonstrate that by arranging these operators into distinct geometries, their commutation relations can be mapped to the exchange statistics of identical anyons and the braiding statistics of distinct anyons in a 2D $G$-gauge theory. Consequently, a QCA on the symmetric subalgebra corresponds to an anyon permutation symmetry. We further show that, even without assuming translation invariance, the corresponding anyon permutation symmetry is a global invariant determined solely by the QCA.
In fact, this correspondence establishes a surjective homomorphism from the group of subalgebra QCAs to the group of anyon permutations. Notably, since anyon permutation symmetry can be non-Abelian, the group of subalgebra QCAs can also be non-Abelian. This contrasts with the group of QCAs (modulo finite-depth circuits) defined on the full operator algebra (referred to as unitary QCAs, or uQCAs), which is always Abelian \cite{2022FrHH}.
    
    The second invariant is the generalization of the GNVW index to effectively capture how the symmetric subalgebra is shifted under a QCA. The GNVW index itself can be interpreted as quantifying the flow of an ``incompressible conserved quantity"—quantum information \cite{2021Gonginfoflow} or, more precisely, the dimension of the local operator algebra \cite{2009GNVW}—and it can be computed locally at any point, yielding a universal value. In Ref. \cite{2009GNVW}, this property is termed a ``locally computable global invariant." In this work, we introduce an index (denoted by $\ind$) that is locally computable and reduces to the GNVW index when applied to uQCAs. We demonstrate that this index shares many properties with the GNVW index and can be interpreted as characterizing the flow of the subalgebra. In summary, for a chain with a regular representation of a finite Abelian group $G = \prod_j \Z_{n_j}$ per site {(here all $n_j$ are prime powers, and the decomposition of $G$ is unique up to permutation of $n_j$)}, the index has the following properties: 
\begin{itemize} 
\item The index of a QCA can take values in \begin{equation} 
\ind (\alpha) \in \Big\{ \prod_j (\sqrt{n_j})^{m_j} \prod_x p_x^{q_x} \bigg| q_x \in \Z,\ m_j = 0,1 \Big\}, 
\end{equation} 
where $\{ p_x \}$ are the prime factors of $|G| = \prod_j n_j$. This value is a global invariant that is independent of the specific position where it is computed. 
\item For two QCAs $\alpha$ and $\beta$, we have $\ind(\alpha \circ \beta) = \ind(\alpha)\ind(\beta)$. 
\item An FDC with $G$-symmetric gates (sFDC) has index 1. {Moreover, a QCA that can be extended to a uQCA on the full operator algebra has an index equal to its GNVW index, while those with an irrational index are not extendable to a uQCA on the full algebra. }
\end{itemize}

We also conclude that any two QCAs associated with the same anyon permutation and the same index differ only by an FDC composed of symmetric gates.

More concretely, we further show that each QCA can be reduced to a finite composition of the following four operations:
\begin{itemize} 
\item A lattice translation of a sub-Hilbert space with prime dimension, implemented in a way that commutes with the $G$ symmetry;

\item An FDC that, when acting on a $G$-symmetric product state, entangles it into a $G$-symmetry protected topological (SPT) state \cite{2011chenspt1d}. These first two operations also appear in the classification of uQCAs. With the inclusion of additional ancilla qubits, these two operations enumerate all possible $G$-symmetric uQCAs \cite{2018sMPU}, up to FDCs with symmetric gates;

\item A product of onsite unitaries that implements a nontrivial outer automorphism of the group $G$. Since this operation swaps between different elements of $G$, it is not a symmetric uQCA. However, it constitutes an automorphism of the symmetric subalgebra, which by definition contains operators with only trivial charge under $G$;

\item Finally, noting that $G$ can always be expressed as a direct product of cyclic subgroups, $G = \prod_j \mathbb{Z}_{n_j}$, the last ingredient is the KW duality for each cyclic component, $\Z_{n_j}$.
\end{itemize}

We note that the two invariants are not entirely independent. This is evident from the four ``generating" QCAs: the first three classes of QCAs have integer index because they can all be lifted to uQCAs on the full algebra. As a result, a subalgebra QCA with square root ind must involve operations from the last class and can not be lifted to a uQCA. In other words, QCAs corresponding to certain anyon permutations must have square root ind. 

\subsection{Subalgebra QCAs and non-invertible symmetries}

As discussed earlier, an important motivation for this work is the connection between subalgebra QCAs and non-invertible symmetries. Let us now elaborate on it more. If a subalgebra QCA 
 $\alpha$ can not be lifted to a uQCA, one can still define it as a non-invertible operator on the full algebra, by simply composing it with the projector to the symmetric subspace. This way $\alpha$ can be thought of as a non-invertible symmetry. If we have a local Hamiltonian invariant under $\alpha$, the low-energy states of the Hamiltonian typically admit a continuum field-theoretic description, and $\alpha$ becomes a fusion category symmetry in the field theory. However, one should keep in mind that there is no one-to-one correspondence between subalgebra QCAs (or generally, lattice non-invertible symmetries) and fusion categories, as discussed in \cite{2024Seibergshaoseif}. For example, the same KW non-invertible symmetry may flow to different fusion category symmetries~\cite{2024Seibergshaoseif}. 

 Recently, there has been considerable discussion on whether a non-invertible symmetry mixes with lattice translations \cite{2024seibergshao,2024Seibergshaoseif}. Our index offers a quantitative characterization of this mixing, which is otherwise difficult to determine when the transformation is not a simple mapping between products of Pauli operators.

\subsection{$\Z_2$ subalgebra}

As an illustration of our results, let us consider a familiar example of a QCA defined on a symmetric subalgebra: the Kramers-Wannier (KW) duality on a qubit chain. The KW duality is an automorphism of the subalgebra that commute with the $\Z_2$ symmetry $X=\prod_i X_i$. It is generated by the standard set $\{ Z_i Z_{i+1},\, X_i \}$. The KW transformation acts on this symmetric subalgebra as follows: 
\begin{equation}
    \mathrm{KW}:\, Z_i Z_{i+1} \to X_{i+1}, \quad X_i \to Z_i Z_{i+1},
    \label{eq:KWdual}
\end{equation}
We will show below in Sec. \ref{sec:index} that $\ind({\rm KW})=\sqrt{2}$. This makes sense as ${\rm KW}^2$ is a unit translation $\rm T$, which carries GNVW index $\ind({\rm T})=2$.

Our classification theorem shows that KW is the only nontrivial $\Z_2$ subalgebra QCA, up to lattice translations and sFDCs. More precisely:

\begin{theorem*}
    For any QCA $\alpha$ on the $\Z_2$ symmetric subalgebra $\B$, its action on $\B$ can be implemented as ${\rm KW}^q\circ {\rm T}^p \circ W$,
 where $W$ is an FDC with symmetric gates, $\mathrm{T}$ denotes the lattice translation operator to the right, and $q=0,1, p\in \Z$; the index of the QCA is given by $\sqrt{2}^q2^p$. 
\end{theorem*}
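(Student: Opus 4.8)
\emph{Proof proposal.} The plan is to specialize the two-invariant classification developed above to $G=\Z_2$, for which the relevant $(2+1)d$ gauge theory is the toric code. First I would record that the group of anyon permutation symmetries of the toric code is $\Z_2$: any such permutation must fix the vacuum and the unique fermion $f = e\times m$, so it is determined entirely by whether it exchanges the two bosons $e$ and $m$. Under the surjective homomorphism from $\Z_2$ subalgebra QCAs to this group, $\alpha$ is sent to one of the two elements $\{1,\sigma_{em}\}$; and by the computation carried out in Sec.~\ref{sec:index} the generator ${\rm KW}$ is sent to $\sigma_{em}$. Hence there is a unique $q\in\{0,1\}$ with $\alpha$ and ${\rm KW}^q$ having the same image, i.e.\ $\beta:={\rm KW}^{-q}\circ\alpha$ has trivial anyon permutation. (This composition makes sense because ${\rm KW}$ is invertible as an automorphism of $\B$, with ${\rm KW}^{-1}={\rm KW}\circ{\rm T}^{-1}={\rm T}^{-1}\circ{\rm KW}$, since on $\B$ one checks directly that ${\rm KW}^2={\rm T}$ and that ${\rm KW}$ commutes with ${\rm T}$.)

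Next I would pin down $\ind(\beta)$. A subalgebra QCA with trivial anyon permutation lies in the kernel of the permutation homomorphism, which by the structure theory above consists precisely of (restrictions of) $G$-symmetric uQCAs on the full qubit chain; for such a QCA $\ind$ coincides with the ordinary GNVW index, which on a chain of qubits takes values in $\{2^p\mid p\in\Z\}$ (cf.\ \cite{2018sMPU} for the $\Z_2$-symmetric case). Thus $\ind(\beta)=2^p$ for some $p\in\Z$. Now $\beta$ and ${\rm T}^p$ have the same (trivial) anyon permutation and the same index $2^p$, so the general classification theorem produces a finite-depth circuit $W$ of $\Z_2$-symmetric gates with $\beta={\rm T}^p\circ W$ (the two orderings of ${\rm T}^p$ relative to the symmetric FDC agree after replacing $W$ by its ${\rm T}^p$-conjugate, which is again a symmetric FDC). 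Therefore $\alpha={\rm KW}^q\circ{\rm T}^p\circ W$. Finally, multiplicativity of $\ind$ together with $\ind({\rm KW})=\sqrt2$ (Sec.~\ref{sec:index}), $\ind({\rm T})=2$, and $\ind(W)=1$ gives $\ind(\alpha)=\sqrt2^{\,q}\,2^p$, completing the proof.

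The two inputs I expect to be least routine are exactly the ones imported from the general theory. The first is the identification ${\rm KW}\mapsto\sigma_{em}$: this requires arranging the finite-segment symmetry operator $\prod_{i=a}^{b}X_i$ (a pair of domain walls) and the neutral bi-charge operator $Z_aZ_b$ into the two geometries whose commutators reproduce, respectively, the self-statistics of $e$ (equivalently $m$) and the mutual braiding of $e$ with $m$, and then verifying that ${\rm KW}$ interchanges these two classes. The second is the compatibility statement that a trivial anyon permutation forces the index to be an honest power of $2$ rather than $\sqrt2$ times one --- i.e.\ that the kernel of the permutation homomorphism consists of QCAs liftable to uQCAs. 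Granted these, the $\Z_2$ statement is a short deduction, and the remaining bookkeeping about reordering ${\rm KW}$, ${\rm T}$, and symmetric FDCs is routine because ${\rm KW}$ commutes with ${\rm T}$ on $\B$ and conjugation by either preserves the class of symmetric finite-depth circuits.
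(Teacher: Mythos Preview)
Your argument is correct, but the route is essentially the reverse of the paper's. The paper does \emph{not} specialize the general finite-Abelian classification to $G=\Z_2$; instead it proves the $\Z_2$ case first, directly, via a five-step support-algebra argument (Theorem~\ref{thm:Z2QCA}): coarse-grain, build a sequence of local matrix subalgebras $\B_{2n}\subset\B$ isomorphic to full operator algebras on blocks, analyze the support algebras $\Le_{2n},\Ri_{2n}$ of $\alpha(\B_{2n})$, show they are matrix algebras of constant rank, and then explicitly construct the sFDC $W$ and translation power $q$ that undo $\alpha$ (or $\mathrm{KW}^{-1}\circ\alpha$). The index formula and multiplicativity for $\Z_2$ (Theorem~\ref{thm:Z2completeclassification}) are then read off from this structural decomposition. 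The general Abelian result in Sec.~\ref{sec:finiteabelian} is obtained by generalizing exactly this argument.

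So the two ``least routine'' inputs you flag --- that $\mathrm{KW}\mapsto\sigma_{em}$, and that trivial anyon permutation forces liftability to a $\Z_2$-symmetric uQCA with integer GNVW index --- are not black boxes in the paper: the first is the content of Prop.~\ref{prop:Z2auto}, and the second \emph{is} the substance of the five-step proof of Theorem~\ref{thm:Z2QCA}. Your deduction is logically sound once the general theory is in hand, and is the cleaner way to phrase the result a posteriori; but from the paper's internal logic it is circular, since the general Theorems~\ref{thm:Gauto} and~\ref{thm:completeG} that you invoke are proved by generalizing the $\Z_2$ argument you are trying to establish. What your approach buys is brevity and a clear conceptual picture (anyon permutation fixes $q$, index fixes $p$); what the paper's direct argument buys is that it actually constructs the sFDC $W$ and furnishes the prototype from which the general case is built.
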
  

From the 2D perspective, KW corresponds to the only nontrivial anyon permutation symmetry in the $\Z_2$ gauge theory (also known as $\Z_2$ toric code). Denote the $\Z_2$ electric charge by $e$, and the magnetic charge by $m$, then $e\leftrightarrow m$ is a symmetry of the topological order. One can view the $\Z_2$ symmetric subalgebra as the boundary algebra of a $\Z_2$ toric code on a disk (where the bulk is in the ground state), and then KW exactly maps to the $e\leftrightarrow m$ symmetry.

\subsection{$\Z_2\times\Z_2$ subalgebra}

Let us now discuss a second example: a qubit chain with a $\Z_2^e \times \Z_2^o$ symmetry generated by 
\begin{equation} 
\eta^e = \prod_{i:\mathrm{even}} X_i,\quad \eta^o = \prod_{i:\mathrm{odd}} X_i, \end{equation} 
which act on all even and all odd sites, respectively.

Besides translations, QCAs on this subalgebra are generated by the following classes: 
\begin{itemize} 
\item The KW transformations: There are two KW transformations associated with the $\Z_2^e$ and $\Z_2^o$ symmetries, respectively. We denote them as KW$_e$ and KW$_o$.
\item $spt$: In 1D, there is a nontrivial SPT phase protected by $\Z_2^e \times \Z_2^o$ symmetry, specifically the cluster chain \cite{2001cluster,2020MPSreview,2013SPTChen}, labeled by the nontrivial element of $H^2(\Z_2 \times \Z_2, \U) = \Z_2$. The entangler $spt$ acts on local operators as
\begin{equation}
    spt: \, X_i \leftrightarrow Z_{i-1} X_i Z_{i+1}.
\end{equation}
$spt$ transforms a product state with $X_i=1$ into the cluster state.

\item $out$: The group of outer automorphisms of $\Z_2 \times \Z_2$ is $S_3$, corresponding to the permutation of the three non-identity elements in $\Z_2 \times \Z_2$.
\end{itemize}

{We remark that, since the anyon permutation symmetry group in the 2D $\mathbb{Z}_2 \times \mathbb{Z}_2$ gauge theory is $(S_3 \times S_3)\rtimes \mathbb{Z}_2$, which is non-Abelian, the group of QCA on the symmetric subalgebra is therefore also non-Abelian; see Sec.~\ref{sec:Z2Z2}, in particular Fig.~\ref{Fig:anyon}, for details.}

Some cases of the QCAs have been considered in the context of non-invertible symmetries~\cite{2024seifshao}. In particular, we can now have examples of QCAs with integer indices but are nevertheless not uQCAs (which is impossible for $\Z_2$ subalgebra QCAs) 

\begin{itemize} 
\item 
First we define
\begin{equation} \alpha_1 = \mathrm{T}^{-1} \circ \mathrm{KW}_e \circ \mathrm{KW}_o. 
\label{eq:repD8-intro}
\end{equation} 

On the generators of the symmetric algebra $\B$, its action simply exchanges $X_i$ and $Z_{i-1} Z_{i+1}$. Computing the index of $\mathrm{D}$ gives \begin{equation} 
\ind(\alpha_1) = 1. 
\end{equation} 

Viewed as a non-invertible operator, $\alpha_1$ provides a realization of the Rep(D$_8$) fusion category symmetry in some lattice models. The fact that $\ind(\alpha_1)=1$ provides a quantitative characterization of the fact that the non-invertible operator in the Rep(D$_8$) fusion category does not mix with lattice translation \cite{2024seifshao}.

\item 
  We can also define
\begin{equation}
    \alpha_2 = \mathrm{KW}_e \circ \mathrm{KW}_o,
\end{equation}
which implements the transformation $X_i \mapsto Z_i Z_{i+2} \mapsto X_{i+2}$ on the standard generating set of the symmetric subalgebra $\B$. A direct computation yields
\begin{equation}
    \ind(\alpha_2) = 2.
\end{equation}

Viewed as a non-invertible operator, $\alpha_2$ realizes the Rep(H$_8$) fusion category symmetry on lattice. 

\end{itemize}

\subsection{Structure of the paper}
The rest of the paper is organized as follows. 

In Sec. \ref{sec:qca_general}, we define the setup and the notion of QCA, as well as the equivalence relations among them.

In Sec. \ref{sec:index} and \ref{sec:indexsym}, we introduce the index and establish their fundamental properties,  including its invariance under composition with FDCs consisting of symmetric gates, as well as its independence from the specific spatial location where it is computed. 

In Sec. \ref{sec:Z2subalg}, we study the simplest scenario—QCAs on the $\Z_2$ symmetric subalgebra—in greater detail and demonstrate that they can be completely classified by the index. In Sec. \ref{sec:finiteabelian}, we generalize the discussion to a general finite Abelian group. 

Sec. \ref{sec:Z2Z2} examines a concrete example of QCAs on the $\Z_2 \times \Z_2$ symmetric subalgebra and makes connections to recent discussions on non-invertible symmetries. We conclude our work in Sec. \ref{sec:discussion}.

\section{QCAs over a symmetric subalgebra}
\label{sec:qca_general}

In this section, we provide a definition of a quantum cellular automaton (QCA) over a subalgebra.

Consider an infinitely long one-dimensional lattice system $\Z$. On each lattice site $i \in \Z$, we assign a local Hilbert space $\Hi_i$, which supports an operator algebra $\Al_i$ that is isomorphic to a matrix algebra over the complex number. The full algebra of operators, $\Al(\Z)$, on the spin chain is defined as the algebra of all \emph{local} operators. Specifically, for any finite subset $\Lambda \subset \Z$, we denote the operators supported entirely on $\Lambda$ (acting as the identity elsewhere) by $\Al_\Lambda = \otimes_{i \in \Lambda} \Al_i$. The full operator algebra is then given by the union of such local algebras over all finite subsets $\Lambda$:\footnote{The quasi-local algebra on the chain can be obtained by taking the norm completion of $\Al(\Z)$ \cite{bratteli2012operator}. This added complexity, however, is irrelevant to the subsequent discussion in this work and is therefore not addressed here.} 
\begin{equation} 
\Al(\Z) = \cup_\Lambda \Al_\Lambda. 
\end{equation}
In this work, we study QCAs defined on a unital $*$-subalgebra $\B$, \ie a subalgebra $\B \subset \Al(\Z)$ that contains the identity element of $\Al(\Z)$ and is closed under the adjoint (Hermitian conjugation). Specifically, we require the subalgebra $\B$ to satisfy the following assumptions:

\begin{itemize} \item For any two subsets $I$ and $J$ of $\Z$, if $I\subset J$, then we have $\B_I \subset \B_J$, where $\B_I$ is the subalgebra of $\B$ with support fully on the interval $I$, and acts as identity in $\Z \setminus I$. 
\item If $I\cap J = \varnothing$, then $[\B_I,\B_J]=0$. \end{itemize}
Importantly, we do not assume that $\B$ factorizes as a tensor product of operator algebras on each site.

Our main focus is on the subalgebra of operators invariant under a global symmetry group $G$, referred to as $G$-symmetric subalgebra below. To illustrate, let us consider the example of a spin chain with $G=\Z_n$ symmetry. On each site, we place a local Hilbert space $\mathcal{H}_{\rm loc}=\mathbb{C}^n$, which we think of as the regular representation of $\Z_n$.\footnote{ In this work the symmetry group $G$ is taken to act onsite. This assumption is not essential, as shown in Ref.~\cite{disentanglesym}, since any anomaly-free finite group symmetry in 1D can be transformed into an onsite form by adding ancilla sites in the regular representation and then conjugating by a finite-depth circuit. On the other hand, we leave the study of QCA on subalgebras symmetric under an \emph{anomalous} symmetry to future work.
} The full operator algebra on a single site, $\Hi_{loc}$, is generated by the set $\{ X, Z \}$, where $X$ and $Z$ are generalized Pauli operators (e.g. $X$ is the shift operator and $Z$ is the clock operator). The $\Z_n$ symmetry is generated by $X=\prod_i X_i$.
The symmetric subalgebra $\B$ on the lattice is generated by a set of local unitaries $\{ Z_i^\dagger Z_{i+1}, X_i \}$, where $i$ labels a lattice site $i \in \Z$.  It is straightforward to generalize this definition to any finite Abelian group.

We introduce some additional notation. For any two sites on the chain, $i$ and $j$, the natural distance, denoted as ``dist", is given by $|i-j|$. We define the support of an operator $O$ as the smallest connected subset $\Lambda \subset \Z$ such that $O \in \Al_\Lambda$, denoted by $\Su(O)$. 

Furthermore, for any subset $\Lambda \subset \Z$, we define an ``enlarged" region, denoted by $\Lambda^{+l}$, which includes all sites within a distance no greater than $l$ from at least one site in $\Lambda$. With these definitions, we can now define a QCA on the subalgebra $\B$ as follows:
{\definition A QCA $\alpha$ on the subalgebra $\B$ is a locality-preserving (LP) $*$-automorphism of $\B$. Specifically, it satisfies the following conditions:
\begin{itemize}
    \item It preserves linearity, multiplication, and Hermitian conjugation:
    \begin{equation}
    \begin{split}
        \alpha(c_1 O_1+c_2O_2)&=c_1\alpha(O_1)+c_2\alpha(O_2), \\
        \alpha( O_1 O_2) & = \alpha(O_1) \alpha( O_2),\\
        \alpha(O^\dagger) & = \alpha( O)^\dagger,
    \end{split}
    \end{equation}
    where $O$, $O_1$ and $O_2$ are local operators within $\B$, and $c_1$, $c_2$ are complex numbers. 
    \item It has an inverse $\alpha^{-1}$ defined on the subalgebra $\B$.
    \item Both $\alpha$ and $\alpha^{-1}$ are LP, which means:
    \begin{equation}
        \Su(\alpha(O)) \subseteq \Su(O)^{+l},
    \end{equation}
    where $O \in \B$, and this condition similarly holds for $\alpha^{-1}$. The parameter $l$ is referred to as the spread of $\alpha$.
\end{itemize}
\label{def:QCA}
}

We remark that, although we work here on an infinite lattice, one can also consider a finite chain with a periodic boundary condition. The full operator algebra $\Al$ then includes all local operators on the chain with support much smaller than the system size $L$. We can similarly define a QCA on a subalgebra $\B \subset \Al$, to be an LP $*$-automorphism of $\B$. This definition remains well-defined as long as the system size $L$ is much larger than the spread $l$.

It is clear from the definition that QCAs on $\B$ form a group, where the group multiplication is given by the composition, and the identity element is the identity automorphism. 

A large class of QCAs on $\B$ is given by FDC made of local unitary operators in $\B$. Motivated by the definition of equivalence classes of QCAs in spin chains, we propose the following definition of equivalence relation between QCAs on $\B$:

{\definition Two QCAs on $\B$, $\alpha$ and $\beta$, are equivalent if
\begin{equation} 
\alpha(O) = W \circ \beta(O), \quad \forall O\in \B,
\end{equation} 
where $W$ is an FDC with local unitary operators in $\B$. \label{def:equivalence} }

It is straightforward to confirm that this definition is both symmetric and transitive, thereby establishing an equivalence relation.

It is also insightful to consider QCAs on the subalgebra $\B$ that can be embedded into FDCs on the full algebra. Obviously, they include all FDCs made of local unitary gates in $\B$. However, the converse is not always true. In fact, if the subalgebra 
 is defined by a finite symmetry $G$ in a spin chain where each site carries a regular representation of the group, a complete classification of  $G$-symmetric FDCs (i.e. FDCs commuting with $G$) has been obtained in \cite{2021Gonginfoflow}.  We briefly review this result as it will play an important role in our study. 
 
 Each $G$-symmetric FDC is associated with an element in the second cohomology group $H^2(G, \U)$. Physically, the cohomology class characterizes the SPT order created by applying the FDC to a $G$-symmetric product state. The following theorem shows that the cohomology class completely classifies $G$-symmetric FDCs:

 \begin{theorem}
 Let $G$ be a finite Abelian group, and suppose each site has a local Hilbert space in the regular representation of $G$. Two $G$-symmetric FDCs are equivalent (namely, they differ by a FDC with $G$-symmetric gates), if and only if they have the same cohomology class in $H^2(G, \U)$. 
 \label{thm:GsymFDC}
 \end{theorem}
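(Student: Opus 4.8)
The plan is to prove both directions. The ``only if'' direction is essentially a definition chase combined with a stability argument for the cohomology class. First I would recall that a $G$-symmetric FDC $U$, applied to a fixed $G$-symmetric product state $|0\rangle = \otimes_i |0\rangle_i$, produces a short-range-entangled $G$-symmetric state $U|0\rangle$, whose SPT class $[\omega_U] \in H^2(G,\U)$ is the standard matrix-product-state invariant (extract the projective phase $V_g V_h = \omega(g,h) V_{gh}$ from the action of the symmetry on the MPS boundary/virtual space). I would then show this class is unchanged if we compose $U$ on the left with a $G$-symmetric FDC $W$: since $W$ is itself an FDC with symmetric gates, $WU|0\rangle$ and $U|0\rangle$ lie in the same symmetric phase, so $[\omega_{WU}] = [\omega_U]$ by the known 1D classification of $G$-SPT phases for bosonic chains \cite{2011chenspt1d, 2013SPTChen}. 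Hence equivalent symmetric FDCs carry the same class; contrapositively, equal class is necessary for equivalence. (Strictly, one must also check the class does not depend on the choice of reference product state, which follows because any two $G$-symmetric product states on regular representations are related by an onsite symmetric unitary.)

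The substantive direction is ``if'': given two $G$-symmetric FDCs $U_1, U_2$ with $[\omega_{U_1}] = [\omega_{U_2}]$, I want to produce a $G$-symmetric FDC $W$ with $U_1 = W U_2$ as automorphisms of $\B$ — equivalently, $U_1 U_2^{-1}$ is equivalent to the identity, i.e. it is a symmetric FDC whose associated SPT class is trivial. So it suffices to prove the special case: \emph{a $G$-symmetric FDC $V$ with trivial cohomology class is equivalent to the identity}, meaning $V(O) = W(O)$ for all $O \in \B$ with $W$ a symmetric FDC. Here is where I would be careful: $V$ itself need not be built from symmetric local gates — a priori it is only an FDC on the full algebra that happens to commute with $G$. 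The key step is a \emph{disentangling / blending} construction. Because $[\omega_V]$ is trivial, $V|0\rangle$ is in the trivial symmetric phase, so by the 1D symmetric finite-depth-circuit structure theorem (e.g. the MPU classification of \cite{2018sMPU, 2017MPUCirac} or the explicit symmetric-circuit disentangler of \cite{2021Gonginfoflow}) there is a \emph{symmetric} FDC $W_0$ with $W_0 V |0\rangle = |0\rangle$. Then $W_0 V$ is a symmetric-$G$ circuit stabilizing the product state; for circuits on a chain of regular representations this forces $W_0 V$ to act on $\B$ as a symmetric FDC (using that the stabilizer of a product state inside the group of symmetric locality-preserving maps is generated by onsite symmetric unitaries together with nearby ``pair'' rearrangements — the standard blending argument of GNVW adapted to the symmetric setting, which is exactly the content of \cite{2021Gonginfoflow}). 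Composing, $V$ differs from a symmetric FDC by a symmetric FDC, hence $V$ is equivalent to the identity on $\B$. Applying this to $V = U_1 U_2^{-1}$ and rearranging gives $U_1 \sim U_2$.

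The main obstacle — and the place I would spend the most care — is the passage from ``$V$ fixes the product state and commutes with $G$'' to ``$V$ acts on $\B$ as a symmetric FDC.'' The subtlety is that $V$ is given as an FDC on the \emph{full} algebra with no presumed symmetric gate decomposition, and we only control its action on the symmetric subalgebra $\B$ and on one symmetric state; one must rule out the possibility that $V$ does something topologically nontrivial to $\B$ that is invisible to the product state (this is precisely why the index of Sec.~\ref{sec:index} must enter — a symmetric FDC has trivial index, and one should verify $V$ here does too, so that the blending argument is not obstructed). I would handle this by invoking the GNVW-style blending lemma in the symmetric category: restrict $V$ to a large interval, use that its restriction agrees with a symmetric circuit up to boundary effects because it preserves the symmetric product state on that interval, patch these local symmetric circuits together over the chain, and absorb the $O(l)$-size boundary mismatches into additional symmetric local gates. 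The regular-representation hypothesis is used here to guarantee that the requisite local symmetric unitaries (which must implement arbitrary rearrangements of the symmetric operator content on neighboring sites) actually exist inside $\B$.
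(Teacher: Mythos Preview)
Your reduction of the problem is correct and matches the paper: both directions come down to showing that a $G$-symmetric FDC with trivial class in $H^2(G,\U)$ is an FDC with $G$-symmetric gates. The difference is in how you propose to prove this key lemma, and your route has a genuine gap.

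The paper works directly with the matrix-product-unitary standard form and never passes through an SPT state. After blocking $k$ sites, the $G$-symmetric FDC is a two-layer brickwork of gates $u,v$; $G$-symmetry forces these gates to intertwine the physical representation $R_g^{\otimes k}$ with virtual-leg representations $x_g,y_g$, which are \emph{linear} precisely because the cohomology class is trivial (and have the correct dimension because the GNVW index is trivial). The decisive step is a character computation: since the regular representation satisfies $\tr R_g=\delta_{g,e}\dim R_g$, one finds $\tr(y_g\otimes x_g\otimes y_g)=\delta_{g,e}\dim(R_g)^{3k}$, so after blocking three more sites each virtual leg carries a representation unitarily equivalent to $R_g^{\otimes 3k}$. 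A gauge transformation on the virtual legs then makes the blocked $u$ and $v$ gates individually $G$-symmetric. The regular-representation hypothesis is used exactly once, in this character identity.

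Your route instead first disentangles the state (find a symmetric-gate $W_0$ with $W_0V|0\rangle=|0\rangle$) and then asserts that a $G$-symmetric FDC fixing the symmetric product state must act on $\B$ as a symmetric-gate FDC. The first step is fine, but the second is where all the content lies, and the ``GNVW blending adapted to the symmetric setting'' you invoke is not a proof: nothing in a local patching argument forces the patched gates to be individually $G$-symmetric rather than merely conspiring globally to commute with $G$. In the paper's language, what you still need is precisely that the virtual representations $x_g,y_g$ can be gauged to the physical one, and that requires the representation-theoretic input above --- which is also where the regular-representation hypothesis is genuinely consumed, not in the existence of local symmetric unitaries as you suggest. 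Your aside about checking the index is unnecessary: $V$ is by hypothesis an FDC on the full algebra, hence already has trivial GNVW index.
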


The proof of Theorem \ref{thm:GsymFDC} is given in Appendix \ref{app:sFDC}.

\section{The index}
\label{sec:index}

We now define a mathematical object, referred to as the \emph{index}, which will be shown to characterize the information flow induced by a QCA $\alpha$. To do so, we first review the definition of $\eta(\Al_1,\Al_2)$ \cite{2009GNVW}, a measure of the overlap between two operator algebras $\Al_1$ and $\Al_2$, both of which are subalgebras of $\Al(\Z)$. Let $\Al_1$ and $\Al_2$ be two operator algebras consisting of operators acting non-trivially on a finite-dimensional Hilbert space. Denote a complete orthonormal basis of $\Al_1$ by $\{ O_a \}$, \ie $\{ O_a \}$ satisfies $\tr(O_a^\dagger O_{a'}) = \delta_{a a'}$, where $\tr$ denotes the normalized trace such that $\tr(\mathbf{1}) = 1$, with $\mathbf{1}$ being the identity operator on the Hilbert space where $\Al_1$ and $\Al_2$ are defined. Similarly, let $\{ O_b \}$ be a complete orthonormal basis for $\Al_2$. The overlap $\eta(\Al_1,\Al_2)$ is defined as
\begin{equation}
    \eta(\Al_1,\Al_2) = \sqrt{ \sum_{O_a\in \Al_1, O_b\in \Al_2} |\tr(O_a^\dagger O_b)|^2},
\end{equation}
which effectively measures the dimension of the intersection of the two algebras, \ie $\eta(\Al_1,\Al_2) = \sqrt{\dim(\Al_1\cap \Al_2)}$. An important property of $\eta(\Al_1, \Al_2)$ is that it is basis-independent, \ie when $\Al_1$ and $\Al_2$ are viewed as vector spaces over the complex numbers, the overlap remains invariant under arbitrary unitary basis transformations. This can be verified directly as follows: let $V$ be a unitary that relates two bases, $\{ A_a \}$ and $\{ \tilde{A}_{a} \}$, of $\Al_1$. Then we have
\begin{equation}
\begin{split}
    & \eta(\Al_1,\Al_2)  = \sqrt{ \sum_{\tilde{A}_a, O_b} |\tr(\tilde{A}_a^\dagger O_b)|^2} \\
     = & \sqrt{ \sum_{a, a',a'', O_b} V_{aa'}^* V_{aa''} \tr(A_{a'}^\dagger O_b)  \tr(A_{a''} O_b^\dagger)} = \sqrt{ \sum_{A_a, O_b} |\tr(A_a^\dagger O_b)|^2}.
\end{split}    
\end{equation}
The same holds for any arbitrary basis change of $\Al_2$.

{\definition
The index of a QCA $\alpha$ on the subalgebra $\B$ is defined as 
\begin{equation} 
\ind(\alpha) = \frac{\eta(\alpha(\B_-), \B_+)}{\eta(\B_-, \B_+)}, 
\end{equation} 
where $\B_+$ and $\B_-$ are subalgebras of $\B$, supported non-trivially only within the intervals $I_+$ and $I_-$ shown in Fig.\ref{Fig:index}, respectively.
\label{def:ind}
}

\begin{figure}
\begin{center}
  \includegraphics[width=.55\textwidth]{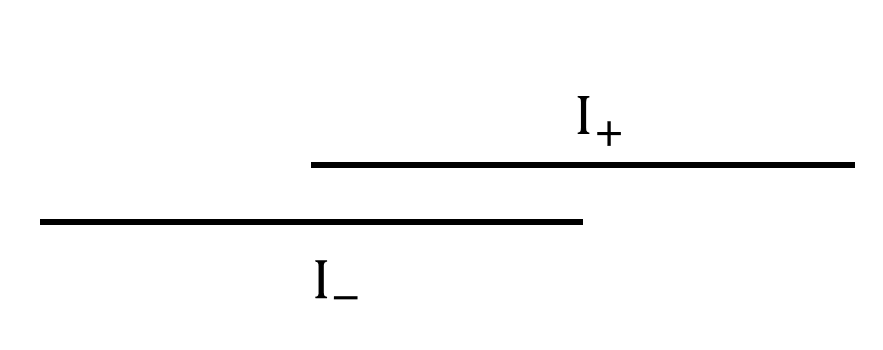} 
\end{center}
\caption{The two intervals used to define the index in Def. \ref{def:ind}. Each of the three segments, $I_-\cap\overline{I_+}$, $I_-\cap I_+$, and $I_+\cap\overline{I_-}$, is chosen to have length greater than $l$, for reasons that will be clarified shortly.
}
\label{Fig:index}
\end{figure}

Notice that when $\B = \Al(\Z)$, \ie the full operator algebra, the index coincides with the GNVW index defined in Ref.\cite{2009GNVW}. A quick verification can be done: consider the case where each site is assigned a $d$-dimensional Hilbert space, and $\alpha$ is the translation by one site to the right. A direct computation shows that $\ind (\alpha) = d$. For a more detailed discussion, see Ref.\cite{2023bulkboundary}.

We establish a key property of the index, stating that the index of a QCA $\alpha$ remains invariant when composed with a finite-depth circuit (FDC) with gates contained \emph{within} the subalgebra $\B$, provided the intervals are chosen so that $\partial I_+$ and $\partial I_-$ are sufficiently distant from each other. We denote the boundary sites of an interval $I$ by $\partial I$. We define the $\xi$-boundary of a set $I$, denoted by $\partial_\xi I$, to be
\begin{equation}
    \partial_\xi I  = \{ r \in \Z: \di(r,I)\leq \xi \, \mathrm{and} \, \di(r,\overline{I})\leq \xi \},  
\end{equation}
{where $\di(r,I)$ denotes the minimal distance from site $r$ to the set $I$ (with $\di(r,I)=0$ if $r\in I$), and similarly for $\di(r,\overline{I})$, where $\overline{I}$ denotes the complement of $I$.} An FDC is defined as a finite-depth circuit with $n$ layers of disjoint unitary gates, each with a diameter bounded above by a constant $\lambda$.

{\Proposition Let $\alpha$ be a QCA on the subalgebra $\B$ with a spread $l$, and let $W$ be an FDC whose gates lie within $\B$. We then have:
\begin{itemize}
    \item $\ind(\alpha\circ W) = \ind(\alpha\circ W')$, where $W'$ is obtained from $W$ by removing all gates except those supported non-trivially within $\partial_{n\lambda} I_- \cap \partial_l I_+$;
    \item $\ind(W\circ\alpha) = \ind(W' \circ\alpha)$, where $W'$ is obtained from $W$ by removing all gates except those supported non-trivially within $\partial_{l} I_- \cap \partial_{n\lambda} I_+$. 
\end{itemize}
Here an FDC $W$ acts on the algebra of operators by conjugation, \ie $W(O) = W^\dagger O W$.
\label{prop:circuitinv}
}

\begin{proof} The proposition follows straightforwardly from the illustration in Fig. \ref{Fig:Prop1}. The gates that are removed perform only a unitary basis transformation of $\B_-$ or $\B_+$ and, as such, do not influence the index. Let us consider the QCA $\alpha \circ W$. The blue gates in Fig.\ref{Fig:Prop1}, which lie outside $\partial_{n\lambda} I_-$, induce only a unitary basis rotation of $\B_-$ through conjugation and can therefore be removed. For the red gates (collectively referred to as $W'$), we have \begin{equation} 
\alpha \circ W' = 
\alpha(W') \circ \alpha, 
\end{equation} 
where $\alpha(W')$ remains a unitary in $\B$. If $W'$ is positioned outside $\partial_l I_+$, then by Def.\ref{def:ind}, $\alpha(W')$ induces only a unitary basis rotation of $\B_+$ through conjugation and thus does not affect the index. This argument similarly applies to the QCA $W \circ \alpha$. 
\end{proof}

\begin{figure}
\begin{center}
  \includegraphics[width=.55\textwidth]{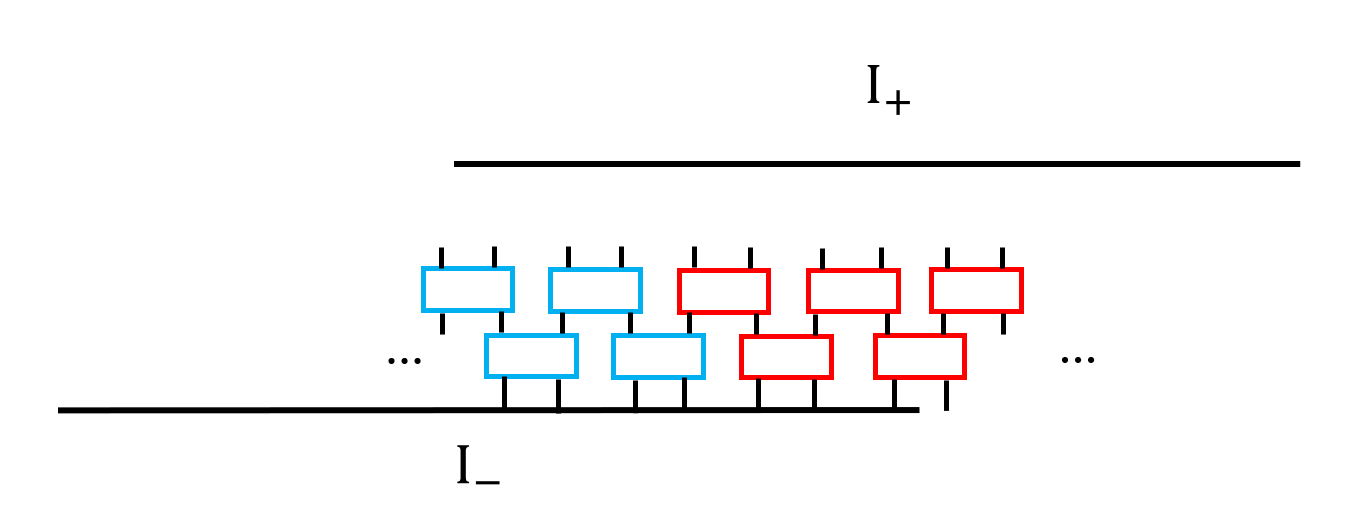} 
\end{center}
\caption{A two-layer ($n=2$) FDC. Neither the blue nor the red-colored gates affect the index as they only perform a unitary basis transformation of $\B_-$ or $\B_+$.
}
\label{Fig:Prop1}
\end{figure}

Proposition \ref{prop:circuitinv} leads to an immediate corollary, which states that when the boundaries of $I_+$ and $I_-$ are sufficiently far apart, the index remains invariant under an FDC with gates in $\B$. Moreover, such an FDC itself has an index of 1, since the identity QCA has an index of 1 according to Def. \ref{def:ind}.

{\corollary If the minimum distance between $\partial I_+$ and $\partial I_-$ is greater than $(n+1)\lambda + l$, the index is invariant under a finite-depth circuit $W$ with gates in $\B$. In particular, such an FDC itself has an index of 1. 
\label{cor:FDCinvariant}
}

\subsection{Index for QCAs over the subalgebra symmetric under a finite Abelian group}
\label{sec:indexsym}

Proposition \ref{prop:circuitinv} establishes that the index is invariant under FDCs with gates contained in $\B$. In this section, we demonstrate that if $\B$ is the subalgebra symmetric under a finite Abelian group (referred to as the symmetric subalgebra below), then, given an additional assumption on the onsite symmetry representation, the index of an LP automorphism is a global invariant. Specifically, this implies that the index is independent of the exact length and positions of the two intervals on $\Z$, as long as $\partial I_+$ and $\partial I_-$ are sufficiently separated. We first highlight two key properties of the symmetric subalgebra, which will be crucial for our subsequent results.

To be specific, we consider a finite Abelian group $G$, which is isomorphic to a direct product of cyclic groups, $G \cong \otimes_j \Z_{n_j}$, where ${ n_j }$ labels the order of each subgroup. On each site, we place a local Hilbert space $\Hi_{loc} = \otimes_j V_j$, such that $\text{dim}(V_j) = n_j$, where each $V_j$ is in the regular representation of the corresponding subgroup $\Z_{n_j}$. For simplicity in notation, we present the following observations using the example $G = \Z_n$, which extends naturally to each cyclic subgroup and thus applies to a general finite Abelian group. The full operator algebra on a single site, $\Hi_{loc}$, is generated by the set $\{ X, Z \}$, where $X$ and $Z$ are generalized Pauli operators.
{The symmetric subalgebra $\B$ satisfies the following properties: 
\begin{itemize}
\item The symmetric subalgebra $\B$ on the lattice is generated by a set of local unitaries $\{ Z_i^\dagger Z_{i+1}, X_i \}$, where $i$ labels a lattice site $i \in \Z$. An important property of $\B$ is that it has a complete orthonormal basis, with each element being a product of elements from this generating set. Hereafter we refer to the above generating set as the standard generating set of the symmetric subalgebra.
\item In this generating set, the elements commute up to a phase.  

\item  For any finite subset $I \subset \mathbb{Z}$, denote by $\B_I$ and $\Al_I$ the subalgebras of $\B$ and $\Al(\mathbb{Z})$, respectively, that are fully supported on $I$. The commutant of $\B_I$ in $\Al_I$ is generated by the symmetry operator restricted to the region $I$, \ie $g_I=\prod_{i \in I} X_i$.\footnote{ This is because, by definition, $\B_I$ is the commutant of $g_I$ in $\Al_I$. Its commutant in $\Al_I$ is therefore the double commutant of $g_I$, which, by the von Neumann double commutant theorem, coincides with the algebra generated by $g_I$.
} 
\end{itemize}
}

It is important to note the significance of the specific choice of local Hilbert space $\Hi_{loc}$, for which we provide an illustrative counterexample: Consider a chain with a three-dimensional Hilbert space per site, which is a direct sum of $\Z_2$ even and odd components: $\Hi_i = \Hi_i^e \oplus \Hi_i^o$, where dim$(\Hi_i^e)=1$ and dim$(\Hi_i^o)=2$ for all sites $i \in \Z$. We denote such a local Hilbert space as $\Hi_i = \mathbb{C}^{1|2}$, and a single-site $\Z_2$ odd operator at site $i$ as $O_i^o$. 
On each site, there are no $\Z_2$-odd unitary operators. Consequently, within this specific Hilbert space, the symmetric subalgebra—in particular, the two-site operator $O_i^o O_j^o$ ($i \neq j$), which clearly lies in the $\Z_2$ symmetric subalgebra—cannot be generated by a set of $\Z_2$-symmetric unitaries. Hereafter, we focus on the case where each site carries a regular representation of the symmetry, and we comment on where this assumption is important.

In Def.\ref{def:QCA}, we define a QCA such that both $\alpha$ and its inverse $\alpha^{-1}$ are LP. In fact, if $\alpha$ is a QCA on a symmetric algebra $\B$ with spread $l$, it is automatically guaranteed that $\alpha^{-1}$ is also a QCA with the same spread. This can be seen as follows: Consider a finite subset $I \subset \Z$, and denote the symmetric algebra on $I$ as $\B_I$. Let $J$ be the set of sites at least distance $l$ away from $I$, i.e., $J = \Z \setminus I^{+l}$, and define the symmetric algebra on $J$ as $\B_J$. Since $\alpha$ has a spread $l$, we have 
\begin{equation} [\B_I, \alpha(\B_J)] = 0. \end{equation} 
Because $\alpha$ is an automorphism of the symmetric algebra, it follows that 
\begin{equation} [\alpha^{-1}(\B_I), \B_J] = 0. 
\end{equation} 
This implies that $\alpha^{-1}(\B_I)$ either acts on $J$ as the identity or as $\prod_{i\in J} X_i$. However, the latter possibility is excluded because $\alpha^{-1}$ must map local operators to local operators within $\B$, and $\prod_{i \in J} X_i$ is a nonlocal operator acting on the entire region $J$. Therefore, $\alpha^{-1}$ is also an LP automorphism with spread $l$.

The following lemma will be useful in the subsequent discussion:
{
\lemma A QCA $\alpha$ on a symmetric subalgebra $\B$ preserves the Hilbert-Schmidt inner product, defined in terms of the normalized trace. Specifically, we have 
\begin{equation} 
\tr[\alpha(O_1)^\dagger \alpha(O_2)] = \tr(O_1^\dagger O_2), 
\end{equation} 
for operators $O_1, O_2 \in \B$.
\label{lem:tracepreserving}
}

\emph{Proof:} By the linearity of $\alpha$, and the fact that the standard generating set can generate $\B$, it suffices to prove the lemma for $O_1$ and $O_2$ that are products of elements in the standard generating set. We consider two cases:
(1) When $O_1 = O_2$, we need to show that $\tr[\alpha(O_1)^\dagger \alpha(O_2)] = 1$. This follows directly since 
\begin{equation} \tr[\alpha(O_1)^\dagger \alpha(O_2)] = \tr[\alpha(O_1^\dagger O_2)] = \tr(\mathbf{1}) = 1. 
\end{equation}

(2) When $O_1 \neq O_2$, their inner product is zero. This is because there always exists an element $O$ in the standard generating set such that $O^\dagger O_1^\dagger O_2 O = \omega O_1^\dagger O_2$, where $\omega$ is a nontrivial phase factor, i.e., $\omega \neq 1$. Thus, we have \begin{equation} 
\begin{split} 
\tr[\alpha(O_1)^\dagger \alpha(O_2)] & = \tr[\alpha(O)^\dagger \alpha(O_1)^\dagger \alpha(O_2) \alpha(O)]\\ & = \tr[\alpha(O^\dagger O_1^\dagger O_2 O)] = \omega \tr[\alpha(O_1)^\dagger \alpha(O_2)]. \end{split} 
\end{equation} 
Since $\omega \neq 1$, it follows that $\tr[\alpha(O_1)^\dagger \alpha(O_2)] = 0$, completing the proof of the lemma. \qed

We now establish two fundamental properties of the index for a QCA on the symmetric subalgebra $\B$. The first property asserts that the index of a QCA $\alpha$ on $\B$ is a global invariant; that is, its value is independent of the specific choice of intervals $I_+$ and $I_-$, provided they are sufficiently large. Let us denote the index calculated on the intervals $I_-$ and $I_+$ as $\ind(\alpha)[I_-, I_+]$.
{\theorem
If $\B$ is a symmetric subalgebra, and $\alpha$ is a QCA defined on $\B$ with operator spreading length $l$, then $\ind(\alpha)[I_-,I_+] = \ind(\alpha)[I_-\setminus a,I_+]$, where $a$ denotes sites outside of $\partial_{l} I_+$. Similarly, we have $\ind(\alpha)[I_-,I_+] = \ind(\alpha)[I_-,I_+ \setminus b]$, where $b$ denotes sites outside of $\partial_{l} I_-$.
\label{Thm:global}
}

\begin{proof}
The two intervals are illustrated in Fig. \ref{Fig:Thm1}, where we aim to show that the index remains invariant after truncating $a$ from $I_-$, where $a$ is the union of sites in $a_-$ and $a_+$. Let us denote the standard generating sets on $I_-$ and $I_+$ by $A$ and $B$, respectively. The subset of $A$ supported non-trivially on $a_-$ is labeled $A_-$, the subset supported non-trivially on $a_+$ is labeled $A_+$, and the remainder of the set, $A \setminus (A_- \cup A_+)$, is denoted by $A'$. The index of $\alpha$, computed using the intervals $I_-$ and $I_+$, can be expressed as $\ind(\alpha)[I_-, I_+] = \frac{\eta(\alpha(\langle A \rangle), \langle B \rangle)}{\eta(\langle A \rangle, \langle B \rangle)}$, where $\langle A \rangle$ denotes the algebra generated by the set $A$. The numerator can be evaluated as follows:
\begin{equation}
\begin{split}
    \eta(\alpha(\langle A \rangle), \langle B \rangle) & =\sqrt{\sum_{O_-\in \langle A_- \rangle,O\in \langle A' \rangle, O_+\in \langle A_+ \rangle, M\in \langle B \rangle} | \tr [\alpha(O_-)^\dagger \alpha(O)^\dagger \alpha(O_+)^\dagger M]|^2} \\
    & = \sqrt{\mathrm{dim}(\langle A_+ \rangle)\times\sum_{O\in \langle A' \rangle, M\in \langle B \rangle} | \tr [\alpha(O)^\dagger M]|^2}
\end{split},
\label{eq:truncateinterval}
\end{equation}
where we have employed the following observations:
\begin{itemize}
    \item The overlap can be computed using the complete orthonormal basis of $\langle A \rangle$, where each element is a product of elements in $A$, and similarly for $\langle B \rangle$. The term inside the square root is nonzero only when $O_-$ is the identity. This is because $M$ can be written as $M = \alpha(M')$, where $M' = \alpha^{-1}(M)$ is trivially supported on $a_-$, given that $\alpha^{-1}$ has a spread of $l$. Consequently, using Lemma \ref{lem:tracepreserving}, we have \begin{equation} 
    \tr [\alpha(O_-)^\dagger \alpha(O)^\dagger \alpha(O_+)^\dagger M] = \tr (O_-^\dagger O^\dagger O_+^\dagger M'), 
    \end{equation} 
    which is nonzero only when $O_-$ is the identity.
    \item The operator $\alpha(O_+)^\dagger$ is a unitary contained within the algebra $\langle B \rangle$, since $\alpha$ is a QCA on $\B$ with a spread $l$. When multiplied by $M$, it results only in a unitary basis transformation of $\langle B \rangle$ through the left regular representation. Therefore, the term inside the square root remains the same for any unitary operator $O_+ \in  \langle A_+ \rangle$, contributing a factor of $\text{dim}(\langle A_+ \rangle)$, which represents the dimension of the algebra $\langle A_+ \rangle$ as a vector space.
\end{itemize}
Computing the denominator in the same way results in
\begin{equation}
    \eta(\langle A \rangle, \langle B \rangle) = \sqrt{\mathrm{dim}(\langle A_+ \rangle)\times\sum_{O\in \langle A' \rangle, M\in \langle B \rangle} | \tr (O^\dagger M)|^2}.
\end{equation}
The first part of the theorem follows by noting that $A'$ generates the symmetric subalgebra on $I_-\setminus a$. Observing that
\begin{equation}
    \ind(\alpha)[I_-, I_+] = \frac{\eta[\alpha(\langle A \rangle), \langle B \rangle]}{\eta(\langle A \rangle, \langle B \rangle)} = \frac{\eta[\langle A \rangle,\alpha^{-1}(\langle B \rangle)]}{\eta(\langle A \rangle, \langle B \rangle)} ,
\end{equation}
and by repeating the argument presented above, we establish the second part of the theorem.
\end{proof}

\begin{figure}
\begin{center}
  \includegraphics[width=.55\textwidth]{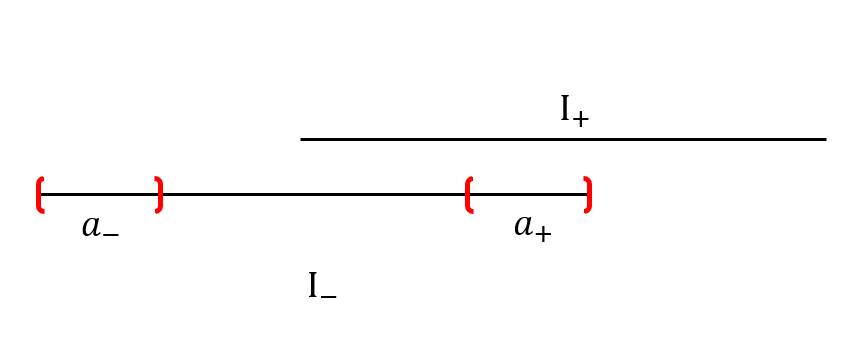} 
\end{center}
\caption{
}
\label{Fig:Thm1}
\end{figure}
A direct consequence of Theorem \ref{Thm:global} is that, although the index of a QCA is computed locally, it serves as a \emph{global} invariant. Specifically, one can expand, shrink, or move the intervals arbitrarily, as long as $\partial_l I_-$ and $\partial_l I_+$ do not intersect, and still obtain the same index value, determined solely by the QCA $\alpha$. This property is reminiscent of the GNVW index of a unitary QCA (uQCA), referred to as a ``locally computable global invariant" in Ref. \cite{2009GNVW}.

Secondly, the index is inverted under spatial reflection. The intuition behind this result is that the index quantifies a ``flow" of the subalgebra $\B$ under $\alpha$ to the right; hence, it becomes inverted when measured to the left.
{\Proposition If we define a reflected index for a QCA on the symmetric subalgebra $\B$ by exchanging $I_-$ and $I_+$, specifically:
\begin{equation}
    \ind_r(\alpha) = \frac{\eta(\alpha(\B_+), \B_-)}{\eta(\B_+,\B_-)},
\end{equation}
then the following relation holds: 
\begin{equation}
    \ind(\alpha)\ind_r(\alpha) = 1.
\end{equation}
\label{prop:reflection}
}

\begin{proof}
Consider the geometry shown in Fig. \ref{Fig:Prop2}. Let $B$, $A_-$, and $A_+$ represent the standard generating sets fully supported on the intervals $J$, $I_-$, and $I_+$, respectively. Define $A_m$ as the elements in the standard generating set that are nontrivially supported within the interval $I_m$ (so that the generators on $\partial I_m$ also have support in $I_-$ or $I_+$). Let $I = I_- \cup I_m \cup I_+$. The intervals $I_-$, $I_m$, and $I_+$ are adjacent and non-overlapping, chosen such that $J \cap \partial_l I = \varnothing$, $\partial J \cap \partial_l I_m = \varnothing$, and the length of $I_m$ exceeds $2l+1$. Clearly, $A = A_- \cup A_+ \cup A_m$ generates the symmetric algebra on $I = I_- \cup I_m \cup I_+$. We then have:
\begin{equation}
    \eta(\alpha(\la A \ra),\la B \ra) = \eta[\la A \ra ,\alpha^{-1}(\la B \ra)] = \eta(\la A \ra, \la B \ra),
    \label{eq:reflection}
\end{equation}
due to (1) $\alpha$ preserving the Hilbert-Schmidt inner product as per Lemma \ref{lem:tracepreserving}, and (2) $\alpha^{-1}(\langle B \rangle) \subset \langle A \rangle$, because $\alpha$ is a QCA with spread $l$. Further, by a same argument as that leading to Eq.\eqref{eq:truncateinterval}, we find: 
\begin{equation} 
\eta(\alpha(\langle A \rangle), \langle B \rangle) = \sqrt{\mathrm{dim}(\langle A_m \rangle)} \times \eta[\alpha(\langle A_- \rangle), \langle B \rangle] \times \eta[\alpha(\langle A_+ \rangle), \langle B \rangle], 
\end{equation} 
where we used the facts that (1) $\alpha(\langle A_m \rangle)$ is supported in $J$ and, when multiplied with $\langle B \rangle$, only induces a unitary basis transformation of $\langle B \rangle$, contributing a factor $\sqrt{\mathrm{dim}(\langle A_m \rangle)}$; and (2) the distance between $\mathrm{Supp}[\alpha(\langle A_- \rangle)]$ and $\mathrm{Supp}[\alpha(\langle A_+ \rangle)]$ exceeds 1 (since $I_m$ has length greater than $2l+1$), so they overlap with two subalgebras of $\la B \ra$ that are spatially disjoint, resulting in
\begin{equation}
    \eta[\alpha(\la A_- \ra \otimes \la A_+ \ra), \la B \ra] =\eta[\alpha(\la A_- \ra  ), \la B \ra] \times \eta[\alpha(\la A_+ \ra  ), \la B \ra] .
\end{equation}
Calculating the right-hand side of Eq. \eqref{eq:reflection} yields: 
\begin{equation} \eta(\langle A \rangle, \langle B \rangle) = \sqrt{\mathrm{dim}(\langle A_m \rangle)} \times \eta[\langle A_- \rangle, \langle B \rangle] \times \eta[\langle A_+ \rangle, \langle B \rangle], 
\end{equation} 
from which Prop. \ref{prop:reflection} follows.
\end{proof}

\begin{figure}
\begin{center}
  \includegraphics[width=.60\textwidth]{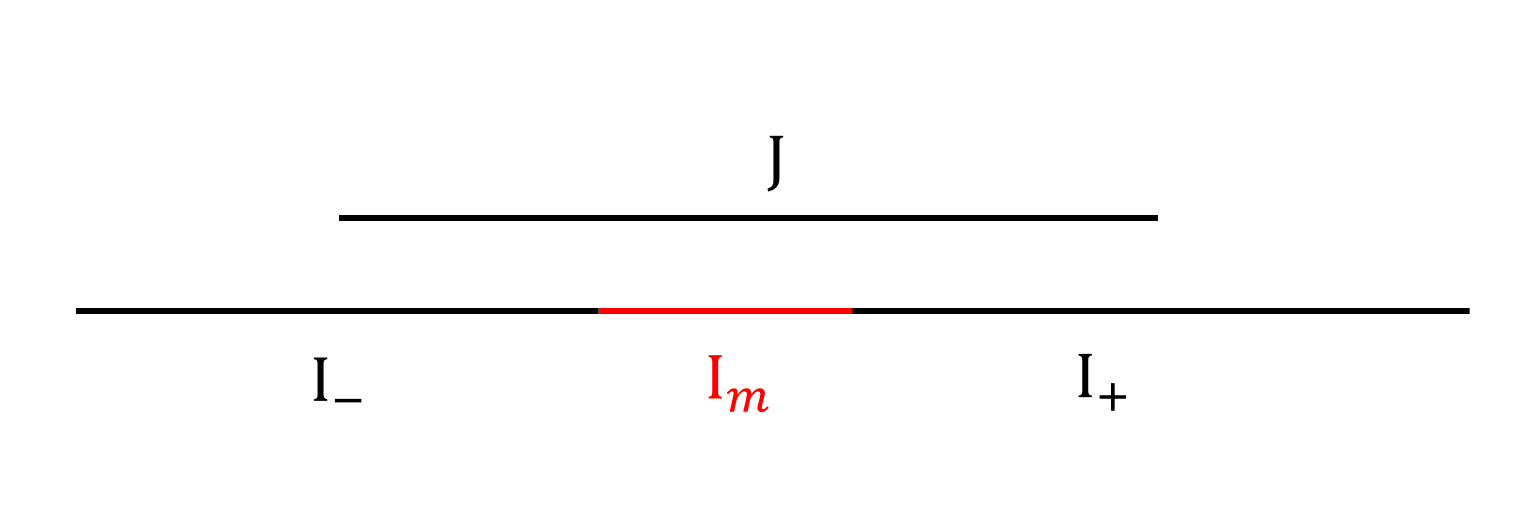} 
\end{center}
\caption{The middle region $I_m$, shown in red, is selected to have a length greater than $2l+1$.}
\label{Fig:Prop2}
\end{figure}

Let us now illustrate the definition using the example of the KW duality in Eq. \eqref{eq:KWdual}, viewed as a QCA on the $\Z_2$ symmetric subalgebra generated by the standard set $\{ Z_i Z_{i+1},\, X_i \}$. Recall that the KW transformation acts on this symmetric subalgebra as follows: 
\begin{equation}
    \mathrm{KW}:\, Z_i Z_{i+1} \to X_{i+1}, \quad X_i \to Z_i Z_{i+1},
\end{equation}
which preserves locality with a spread $l=1$. To compute the index, we take two intervals $I_-$ and $I_+$, overlapping on $\xi>1$ sites, and denote the standard generating sets fully supported on these intervals as $A_-$ and $A_+$, respectively. According to Def. \ref{def:ind}, the index is given by
\begin{equation}
    \ind(\mathrm{KW}) = \frac{\eta[\mathrm{KW}(\la A_- \ra),\la A_+ \ra]}{\eta(\la A_- \ra, \la A_+\ra)},
\end{equation}
where 
\begin{itemize} 
\item $A_-$ and $A_+$ share $2\xi - 1$ common elements (specifically, $\xi$ $X_i$ operators and $(\xi-1)$ $Z_i Z_{i+1}$ operators), each of which has order 2. Thus, the denominator is $\eta(\la A_- \ra, \la A_+\ra) = \sqrt{2^{2\xi-1}}$; 
\item $\mathrm{KW}(A_-)$ and $A_+$ share $2\xi$ common elements—the additional one being $Z_j Z_{j+1} \in A_-$, where $j+1$ is the leftmost site of $I_+$, which transforms into $X_{j+1} \in A_+$. Therefore, the numerator is $\eta[\mathrm{KW}(\la A_- \ra), \la A_+\ra] = \sqrt{2^{2\xi}}$. 
\end{itemize} 
Combining these, we find $\ind(\mathrm{KW}) = \sqrt{2}$. The fact that the index is not 1 provides a quantitative characterization of how the KW transformation intertwines with lattice translation \cite{2024Seibergshaoseif}.

Before closing this section, we note that recent work \cite{2024Jonesindex} have proposed generalizations of the GNVW index to algebras without a tensor product structure, using abstract von Neumann algebra theory and the Jones index for subfactors. However, in these approaches, the proposed index relies on analyzing the transformation of a subalgebra on a half-infinite chain, which is not locally computable and is challenging to extend to systems with periodic boundary conditions \cite{2024Jonesindex}. We will examine the relationship between these two indices in a forthcoming work \cite{index}.

\section{QCAs over the $\Z_2$ symmetric subalgebra}
\label{sec:Z2subalg}

In this section, we delve deeper into QCA's defined on the $\Z_2$ symmetric subalgebra. Specifically, we demonstrate that QCA's on the $\Z_2$ symmetric subalgebra are completely classified by their index. Moreover, we establish a homomorphism from the group of QCA's to a $\Z_2$ invariant—which measures whether the index is rational or not—that determines whether the QCA on $\B$ can be extended to a uQCA on the full operator algebra $\Al(\Z)$. We also highlight the connection of our index to one defined on a 1D fermionic system. As we will discuss, many of the arguments presented here can be generalized to other symmetric subalgebras in a straightforward manner.

\subsection{Transformations of string operators}
We begin by defining two types of string operators that play a central role in the subsequent discussions: (1) the symmetry string $S_X = \prod_{i=m}^n X_i$, confined to a finite region with length $|n - m|$ much larger than the spread $l$ of the QCA $\alpha$ (specifically, $|n - m| > 2(2l + 1)^2+2l$, for reasons that will become clear shortly); and (2) the long-range charge-hopping operator $S_Z = Z_m Z_n$ with $|n - m| > 2(2l + 2)^2+2l$, which we refer to as the charge string.\footnote{Hereafter, we always assume the length of a string operator to be larger than $2(2l + 2)^2+2l$.} These two types of string operators are referred to as patch operators in Ref.\cite{2019patch,2023patch}, and physically correspond to creating a pair of charges or domain walls at large separations. The assumption that $\alpha$ is a QCA on $\B$ imposes strong constraints on the behavior of these operators.
{\Proposition 
Under $\alpha$, the possible transformations of the operator $S_X$ are: (1) $\alpha(S_X) = O_L S_X^{-l} O_R$, where $O_L$ and $O_R$ are unitary operators supported within a distance $l$ from $m$ and $n$, respectively, and $S_X^{-l} = \prod_{i=m+l+1}^{n-l-1} X_i$, or (2) $\alpha(S_X) = O_L O_R$, which is a product of local unitaries supported within a distance $l$ from $m$ and $n$, respectively.

Similarly, the charge string $S_Z$ is transformed, under $\alpha$, to either (1) $\alpha(S_Z) = V_L V_R$, or (2) $\alpha(S_Z) = V_L S_X^{-l} V_R$, where $V_L$ and $V_R$ are unitaries supported within a distance $l$ from $m$ and $n$, respectively.
\label{prop:stringtransform}
}

\begin{proof} We first examine the symmetry string $S_X$. We denote the standard generating set on the interval $\Lambda = [m,n]$ by $A$, and the subset supported solely on $\Lambda^{-l} = [m+l+1,n-l-1]$ by $B$. Since $\alpha$ is an automorphism of $\B$, it holds that \begin{equation} \alpha(S_X) \alpha(O) = \alpha(O) \alpha(S_X) \end{equation} for any arbitrary $\Z_2$-even operator $O$ with $\Su(O) \subset \Lambda$. Since $\alpha^{-1}$ is also a QCA on $\B$ with the same spread $l$, it follows that $\langle B \rangle \subset \alpha(\langle A \rangle)$. Thus, $\alpha(S_X)$ must commute with $\langle B \rangle$, the subalgebra of $\B$ supported solely on $\Lambda^{-l}$. Consequently, it must act on $\Lambda^{-l}$ either as the identity or as $S_X^{-l}$:
\begin{equation}
\alpha(S_X) = O_1 \otimes \mathbf{1} + O_2 \otimes S_X^{-l},\label{string}
\end{equation}
where $O_1$ and $O_2$ are operators supported only within $\partial_l \Lambda$, due to the LP nature of $\alpha$. The unitarity of $S_X$ further imposes the conditions
\begin{equation}
    \begin{split}
        O_1^\dagger O_1 + O_2^\dagger O_2 &= \mathbf{1}, \\
        O_1^\dagger O_2 + O_2^\dagger O_1 &= 0.
    \end{split}
    \label{eq:unitarityofstring}
\end{equation}

Moreover, note that $S_X$ is a depth-1 FDC consisting of gates within the symmetric algebra. Since $\alpha$ is a locality-preserving automorphism of $\B$, each $\alpha(X_i)$ is a unitary supported on at most $(2l+1)$ sites. Therefore, one can arrange $\alpha(S_X)$ as an FDC with at most $(2l+1)$ layers of disjoint gates. Consequently, $\alpha(S_X)$ is a uQCA with spread upper bounded by $(2l+1)^2$. Hence, we have
\begin{equation}
\Su[\alpha(S_X) O \alpha(S_X)^\dagger]\subset \Su(O)^{+(2l+1)^2},
\label{eq:stringFDC}
\end{equation}
for any operator $O$ with $\Su(O) \subset \Lambda^{-l}$. Specifically, when $|n - m| > 2(2l + 1)^2 + 2l$, one can choose $O$ as a $\Z_2$-odd operator supported on $\Lambda^{-l}$ whose distance to $\partial_l \Lambda$ is greater than $(2l + 1)^2$. As a result, Eq.\eqref{eq:stringFDC} requires that
\begin{equation}
    \begin{split}
        O_1^\dagger O_1 - O_2^\dagger O_2 & = \pm \mathbf{1}, \\
        O_1^\dagger O_2 - O_2^\dagger O_1 &= 0.
    \end{split}
    \label{eq:stringtransform}
\end{equation}
Combining these results with Eq. \eqref{eq:unitarityofstring}, we deduce that one of $O_1$ and $O_2$ is a unitary operator, while the other is zero.

Finally, we demonstrate that $O_1$ and $O_2$ must factorize as a product of unitary operators supported near $m$ and $n$, respectively. Suppose $O_1 \neq 0$. Since $\alpha(S_X)$ is a FDC with spread $(2l+1)^2$, when $|n-m| > (2l+1)^2+2l$, we must have $\Su(O_1 O O_1^\dagger) \subset \partial_l \Lambda \cap [m-(2l+1)^2-l, m+(2l+1)^2+l] = [m-l, m+l]$ for arbitrary $O$ with $\Su(O) \subset [m-l, m+l]$, and similarly for $\Su(O) \subset [n-l, n+l]$. We therefore conclude that $O_1 = O_L O_R$ with the desired property stated in Prop. \ref{prop:stringtransform}.

Since $S_Z$ is a depth-2 FDC, \ie $S_Z = \prod_{i=m}^{n-1} Z_i Z_{i+1}$, the operator $\alpha(S_Z)$ can be arranged as an FDC with $(2l+2)$ layers of gates, each with a size bounded above by $(2l+2)$. The transformation rule for the charge string $S_Z$ then follows similarly.
\end{proof}

Prop. \ref{prop:stringtransform} imposes a constraint on the transformation of strings defined on a specific large interval $\Lambda = [m,n]$.\footnote{The proof of Prop.\ref{prop:stringtransform} relies on the two string operators, $S_X$ and $S_Z$, being FDCs, a property that may not hold when the onsite Hilbert space carries a general representation of $G$. This is one reason we focus on an onsite Hilbert space in the regular representation.} As we will now demonstrate, even without assuming translation invariance, the form of the transformation (i.e., either possibility (1) or (2) in Prop. \ref{prop:stringtransform}) is a global invariant, meaning it does not depend on the specific choice of interval, as long as its length is sufficiently large. Furthermore, by considering the composition rules and commutation relations between two strings, we can further constrain the transformation rules:
\begin{figure}
\begin{center}
  \includegraphics[width=.50\textwidth]{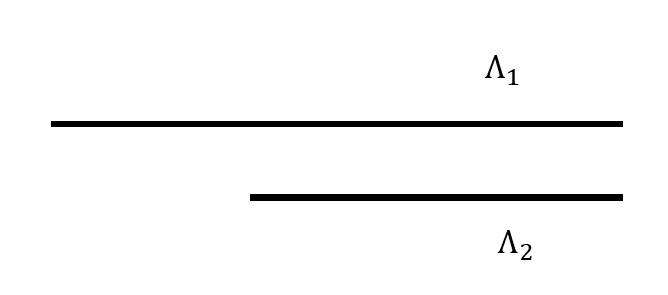} 
\end{center}
\caption{Both intervals, $\Lambda_1 = [a,b]$ and $\Lambda_2$, have lengths greater than $2(2l+2)^2+2l$, with their right endpoints coinciding at the site $b$. The distance between their left endpoints exceeds $2l$. 
}
\label{Fig:exchange}
\end{figure}

{\Proposition Under $\alpha$, the boundary unitary operators $O_L$ and $O_R$ (and similarly for $V_L$ and $V_R$) defined in Prop. \ref{prop:stringtransform} must have well-defined parity. Moreover, there are two consistent transformation rules: 
\begin{itemize} 
\item $\alpha(S_X) = O_L S_X^{-l} O_R$, with $O_L$ and $O_R$ being $\Z_2$ even, and $\alpha(S_Z) = V_L V_R$, with $V_L$ and $V_R$ being $\Z_2$ odd, or 
\item $\alpha(S_X) = O_L O_R$, with $O_L$ and $O_R$ being $\Z_2$ odd, and $\alpha(S_Z) = V_L S_X^{-l} V_R$, with $V_L$ and $V_R$ being $\Z_2$ even, 
\label{prop:Z2auto}
\end{itemize} 
where the form of the transformation is independent of the specific interval on which the string is supported.

}
\begin{proof} Let us first consider the symmetry string $S_X$. Since $\alpha$ is a QCA on the $\Z_2$ symmetric subalgebra, both $O_L$ and $O_R$ must commute with the $\Z_2$ global symmetry up to a phase, with their phases canceling out. Further, note that 
\begin{equation} 
\mathbf{1} = \alpha(S_X) \alpha(S_X) = O_L^2 O_R^2, 
\end{equation} 
which requires the phase to be $\pm 1$. Therefore, $O_L$ and $O_R$ must have well-defined and identical $\Z_2$ parity. Similarly, the same reasoning shows that $V_L$ and $V_R$ must have well-defined, same $\Z_2$ parity. {Since $\alpha$ is a $*$-automorphism of $\B$, the product $O_L O_R$ (and similarly $V_L V_R$) is Hermitian. We adopt the convention that each of these boundary operators is individually Hermitian, which can always be achieved by an opposite phase rotation between $O_L$ and $O_R$.}

To demonstrate that the form of the transformation is a global invariant, consider two symmetry strings arranged as depicted in Fig. \ref{Fig:exchange}, where the right endpoints of the two strings coincide exactly at the same lattice site. Since
\begin{equation}
    \alpha(S_X^{\Lambda_1}) \alpha(S_X^{\Lambda_2}) = \alpha(S_X^{\Lambda_1\setminus \Lambda_2}),
\end{equation}
and this is supported within $(\Lambda_1 \setminus \Lambda_2)^{+l}$ due to the LP nature of $\alpha$, it follows that (1) $S_X^{\Lambda_1}$ and $S_X^{\Lambda_2}$ must have the same transformation form as in Prop. \ref{prop:stringtransform}. Specifically,
\begin{itemize}
    \item $\alpha(S_X^{\Lambda_1}) = O_L S_X(\Lambda_1^{-l}) O_R$ and $\alpha(S_X^{\Lambda_2}) = O_L' S_X(\Lambda_2^{-l}) O_R'$, where $S_X(\Lambda_1^{-l}) = \prod_{i=a+l+1}^{i=b-l-1} X_i$ is the symmetry string supported in the interior of $\Lambda_1$, and similarly for $S_X(\Lambda_2^{-l})$. Here, $O_L$ and $O_R$ are unitaries supported within a distance $l$ from the left and right boundaries of $\Lambda_1$, respectively, and similarly for $O_L'$ and $O_R'$, or
    \item $\alpha(S_X^{\Lambda_1}) = O_L O_R$ and $\alpha(S_X^{\Lambda_2}) = O_L' O_R'$; 
\end{itemize}
and (2) $O_R = O_R'$. By considering two symmetry strings with identical left endpoints and applying the same argument to the charge string, we establish that the transformation rule of string operators is a global invariant determined solely by the QCA $\alpha$, \ie one can arbitrarily expand, shrink, or move the strings, and their transformations under $\alpha$ remain invariant.

\begin{figure}
\begin{center}
  \includegraphics[width=.50\textwidth]{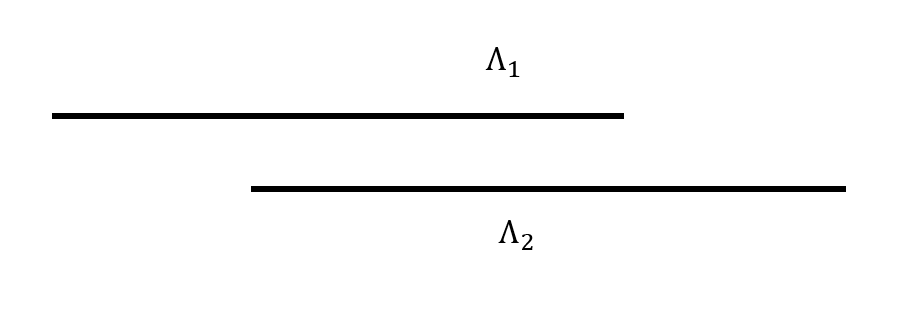} 
\end{center}
\caption{Both intervals, $\Lambda_1$ and $\Lambda_2$, have lengths greater than $2(2l+2)^2+2l$, with $\partial_l \Lambda_1 \cap \partial_l \Lambda_2 = \varnothing$.
}
\label{Fig:braiding}
\end{figure}

Finally, the two consistent charge assignments can be determined by analyzing the commutation relations between strings, which can be categorized into two cases:
\begin{itemize}
    \item For two strings of the same type, \ie either both symmetry strings or both charge strings, we arrange them as shown in Fig. \ref{Fig:exchange}. The fact that $\alpha$ is an automorphism of the symmetric algebra imposes the following constraint:
    \begin{equation}
    \begin{split}
        \alpha(S_X^{\Lambda_1}) \alpha(S_X^{\Lambda_2}) &= \alpha(S_X^{\Lambda_2}) \alpha(S_X^{\Lambda_1}), \\ 
        \alpha(S_Z^{\Lambda_1}) \alpha(S_Z^{\Lambda_2}) &= \alpha(S_Z^{\Lambda_2}) \alpha(S_Z^{\Lambda_1}).
    \end{split}
    \end{equation}
    \item For two strings of distinct types, \ie one symmetry string and one charge string, we arrange them as depicted in Fig. \ref{Fig:braiding}. The fact that $\alpha$ is an automorphism of the symmetric algebra imposes the following constraint: 
    \begin{equation}
        \alpha(S_X^{\Lambda_1}) \alpha(S_Z^{\Lambda_2}) = -\alpha(S_Z^{\Lambda_2}) \alpha(S_X^{\Lambda_1}).
    \end{equation}
\end{itemize}
By combining these two constraints, one arrives at the two consistent transformations stated in the Proposition. 
\end{proof}

In the following section, we will highlight the connection between the transformation rules of strings and anyon permutation symmetries in 2D $G$ gauge theory. For now, simply note that the two consistent transformations are either leaving the string type unchanged or swapping the charge string with a symmetry string, and vice versa. An example of a QCA satisfying the first transformation rule in Prop. \ref{prop:Z2auto} is an FDC with symmetric gates, i.e., each gate commutes with the $\Z_2$ global symmetry. In contrast, the KW duality provides an instance of the second transformation rule in Prop. \ref{prop:Z2auto}. Specifically, we have
\begin{equation}
    \begin{split}
        \mathrm{KW}(\prod_{i=m}^n X_i)& = Z_m Z_{n+1},\\
        \mathrm{KW}(Z_m Z_n) & = \prod_{i=m+1}^n X_i. 
    \end{split}
    \label{eq:KWduality}
\end{equation}

\subsection{The classification}

It is known that for uQCA's in 1D, i.e., QCA's defined on the entire operator algebra $\Al(\Z)$, the complete classification is given by the GNVW index \cite{2009GNVW}. In particular, 1D uQCA's form a group under finite composition, where finite-depth circuits constitute a normal subgroup. The group of 1D uQCA's defined on a spin system, modulo finite-depth circuits, is isomorphic to a group of rational numbers under multiplication: 
\begin{equation}
    \mathrm{uQCA}/\mathrm{FDC} \cong \prod_i p_i^{q_i},
    \label{eq:valueGNVW}
\end{equation}
where the set $\{p_i \}$ includes all prime divisors of the local Hilbert space dimension $d$ (assuming the local Hilbert spaces are uniform across lattice sites), and $\{ q_i \}$ are integers. The physical interpretation of Eq.\eqref{eq:valueGNVW} is that a $p_i$-dimensional subspace at each site is shifted by $q_i$ sites.

In the remainder of this section, we will demonstrate that, similarly, QCA's defined on the $\Z_2$ symmetric subalgebra are completely classified by the index in Def. \ref{def:ind}.

{\theorem QCA's on the $\Z_2$ symmetric subalgebra $\B$ form a group under finite composition, with the subgroup of FDC's with $\Z_2$ symmetric gates being normal. Furthermore, for any QCA $\alpha$, its action on $\B$ can be implemented in one of the following forms:
\begin{itemize} 
\item $\mathrm{T}^q \circ W^\dagger$, where $W$ is an FDC with symmetric gates, $\mathrm{T}$ denotes the lattice translation operator to the right, and $q \in \Z$; 
\item $\mathrm{KW} \circ \mathrm{T}^q \circ W^\dagger$, where $W$ is an FDC with symmetric gates. \end{itemize}
\label{thm:Z2QCA}
}

\begin{proof}
It is straightforward to verify that LP automorphisms of the symmetric algebra form a group under finite composition. We now demonstrate that sFDC's form a normal subgroup. For a given sFDC $W$, we have $\alpha \circ W \circ \alpha^{-1} = \alpha(W)$. Therefore, it suffices to show that $\alpha(W)$ remains an sFDC. This follows directly, as illustrated in Fig. \ref{Fig:sFDC}, where we depict a single layer of gates, with the spread of $\alpha$ taken as 1. Under $\alpha$, each unitary gate is mapped to another local unitary with larger (but still finite) support. The resulting gates can then be rearranged into multiple, but finitely many, layers, with gates in the same layer having disjoint supports. Since the original gates in a single layer commute with each other, the transformed ones do as well, so their order remains irrelevant. As $W$ consists of finitely many such layers, we conclude that $\alpha(W)$ is an sFDC.
\begin{figure}
\begin{center}
  \includegraphics[width=.75\textwidth]{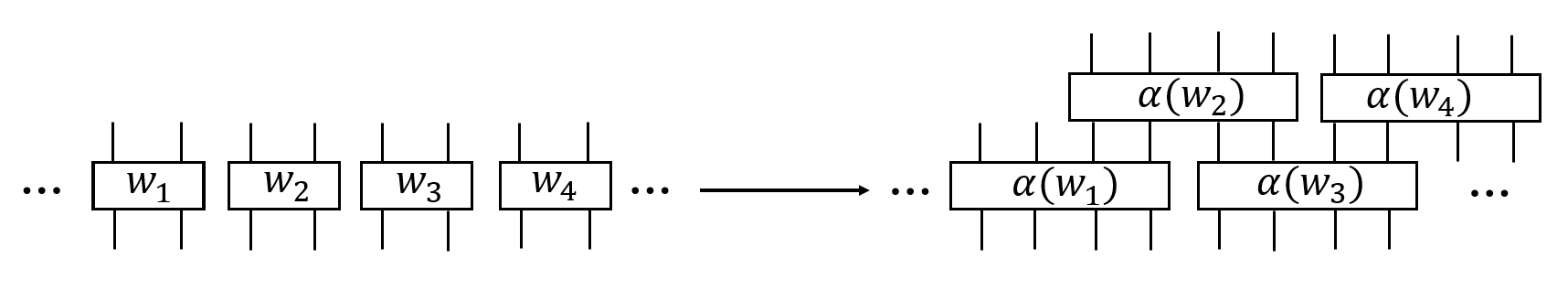} 
\end{center}
\caption{If $W$ is an sFDC, then $\alpha(W)$ is also an sFDC.
}
\label{Fig:sFDC}
\end{figure}

Now we demonstrate the second part of the theorem. The approach involves analyzing the action of the QCA on a subalgebra of $\B$, which is isomorphic to the full operator algebra on a finite interval. The argument is organized into five steps.

\emph{Step 1:} To begin, we coarse-grain the system by grouping every $l$ sites into a new unit cell, effectively reducing the spread of the QCA $\alpha$ to 1. Here, we focus on a subalgebra of $\B$ that is isomorphic to the full operator algebra supported on a finite interval. Specifically, for each single-site $Z$ operator in the unit cells labeled by $2n$ and $2n+1$ in Fig.\ref{Fig:partition} (denoted as $I_{2n}$ and $I_{2n+1}$, respectively), we define an operator
\begin{equation} 
z_i = Z_i Z_m, 
\label{eq:locallyodd} 
\end{equation} where $m - i > 2(2l+2)^2+2l$. The set of all such operators with $i\in I_{2n}\cup I_{2n+1}$ is denoted by $A^o_{2n}$. Additionally, we denote the set of single-site $X$ operators in $I_{2n}\cup I_{2n+1}$ as $A_{2n}$. It is important to note that, unlike operators in $A_{2n}$, operators in $A^o_{2n}$ have a right component fixed on site $m$, located well outside the interval $I_{2n}\cup I_{2n+1}$. Two observations will be crucial in the following discussion. 
\begin{itemize} 
\item The algebra $\B_{2n} = \langle A_{2n}, A^o_{2n} \rangle$ forms a subalgebra of the symmetric subalgebra $\B$, thus having a well-defined transformation under $\alpha$. For each generator of $\B_{2n}$, we can further define a \emph{local} $\Z_2$ parity by examining the portion supported within $I_{2n} \cup I_{2n+1}$: operators in $A_{2n}$ are classified as locally even, while those in $A^o_{2n}$ are classified as \emph{locally} odd. By construction, any operator in $\B_{2n}$ can then be expressed as a linear combination of both locally even and locally odd operators.

\item The algebra $\B_{2n}$, generated by $A_{2n}^o = \{ z_i | i \in I_{2n} \cup I_{2n+1} \}$ and $A_{2n} = \{ X_i | i \in I_{2n} \cup I_{2n+1} \}$, is isomorphic to the algebra of all operators supported within $I_{2n} \cup I_{2n+1}$, denoted by $\mO_{2n}\otimes \mO_{2n+1}$. This isomorphism is given by the map:
\begin{equation}
    \begin{split}
        att: &Z_i \mapsto z_i = Z_i Z_m, \\
        &X_i \mapsto X_i.
    \end{split}
    \label{eq:attach}
\end{equation}
\end{itemize}

We then repeat this construction for nearby unit cells $[I_{2k}, I_{2t+1}]$ ($k+3 < n < t-3$), where $[I_{2k}, I_{2t+1}]$ denotes all sites covered by the unit cells from $I_{2k}$ to $I_{2t+1}$. We require these unit cells to be at a distance greater than $2(2l+2)^2+2l$ from the site $m$. For example, we define $\B_{2n-2} = \langle A_{2n-2}, A^o_{2n-2} \rangle$, where $A_{2n-2} = \{ X_i | i \in I_{2n-2} \cup I_{2n-1} \}$ and $A^o_{2n-2}$ is a set of operators defined by 
\begin{equation} z_i = Z_i Z_m, \, \forall i \in I_{2n-2} \cup I_{2n-1}. \end{equation} 
Importantly, all locally odd operators (i.e., those formed by a product of an odd operator in $[I_{2k}, I_{2t+1}]$ and another odd operator far away) share the same right endpoint $Z_m$. This approach allows us to construct a sequence of mutually commuting algebras, denoted $\B_{2k}, \dots, \B_{2t}$. {By construction, these algebras commute with each other.} It is also straightforward to verify that the algebra generated by this sequence, \begin{equation} 
\mathcal{M} = \langle \B_{2k},\dots, \B_{2t} \rangle, 
\end{equation} 
is isomorphic to the full operator algebra supported on the interval $[I_{2k}, I_{2t+1}]$. In particular, $\mathcal{M}$ contains the $\Z_2$ symmetric subalgebra supported within the interval $[I_{2k}, I_{2t+1}]$.

\emph{Step 2:} We now construct another sequence of algebras, which can be regarded as the ``image" of the algebra sequence from Step 1 under the QCA $\alpha$. Let us consider the transformation of a locally odd operator as defined in Eq.\eqref{eq:locallyodd}. Provided that the distance between $i$ and $m$ is sufficiently large, this operator must transform in accordance with one of the cases in Prop.\ref{prop:Z2auto}. For the moment, we focus on the first case, specifically, 
\begin{equation} 
\alpha(z_i) = V_L V_R, 
\end{equation} 
where $V_L$ and $V_R$ are $\Z_2$-odd unitaries with supports in $[i-l,i+l]$ and $[m-l,m+l]$, respectively, each chosen to be Hermitian. Since all locally odd operators defined in Step 1 share the right endpoint operator $Z_m$, Prop.\ref{prop:Z2auto} implies that the right endpoint component of $\alpha(z_i)$, namely $V_R$, is identical for all these operators. This observation motivates us to define a new set of locally odd operators as \begin{equation} 
z'_i = Z_i V_R, 
\end{equation} 
where the right-hand component is now the transformed operator. Since $V_R$ is $\Z_2$-odd, each $z_i'$ is contained in $\B$. We then define $\mathcal{C}_{2n-1} = \langle A_{2n-1}, A^o_{2n-1} \rangle$, where $A_{2n-1}$ denotes the set of single-site $X$ operators on $I_{2n-1} \cup I_{2n}$, and $A^o_{2n-1} = \{ z'_i  |  i \in I_{2n-1} \cup I_{2n} \}$. Similarly, we repeat this construction for nearby unit cells $[I_{2k-1}, I_{2t+2}]$. For example, we define $\mathcal{C}_{2n+1} = \langle A_{2n+1}, A^o_{2n+1} \rangle$, where $A_{2n+1}$ consists of $X$ operators on $I_{2n+1} \cup I_{2n+2}$, and $A^o_{2n+1} = \{ z'_i  |  i \in I_{2n+1} \cup I_{2n+2} \}$. This procedure generates a new sequence of algebras, denoted by $\mathcal{C}_{2k-1},\dots ,\mathcal{C}_{2t+1}$.

Again, by construction, the algebras within the sequence $\mathcal{C}_{2k-1}, \dots, \mathcal{C}_{2t+1}$ have the following properties: 
(1) they commute with each other; and (2) the algebra generated by this sequence, \begin{equation} 
\mathcal{N} = \langle \mathcal{C}_{2k-1}, \dots, \mathcal{C}_{2t+1} \rangle, \end{equation} 
is isomorphic to the algebra of all operators supported within the interval $[I_{2k-1}, I_{2t+2}]$, and, in particular, contains the subalgebra of $\B$ that is fully supported within $[I_{2k-1}, I_{2t+2}]$.

\emph{Step 3:} In this step, we analyze the transformation of $\B_{2n}$ under the QCA $\alpha$. Let us begin by selecting an element $b \in \B_{2n}$. Under the action of the QCA, $\alpha(b)$ clearly lies within $\mathcal{C}_{2n-1} \otimes \mathcal{C}_{2n+1}$, since $b$ can be expressed as a linear combination of locally even and locally odd operators, all of which are mapped into $\mathcal{C}_{2n-1} \otimes \mathcal{C}_{2n+1}$.

To proceed further, we introduce the concept of a support algebra \cite{2000Zanardi,2009GNVW}. Generally, given a subalgebra $\Al \subset \B_1 \otimes \B_2$ of a tensor product of two finite-dimensional algebras, and selecting linearly independent bases $\{ a_\nu \}$ for $\Al$ and $\{ b^2_\mu \}$ for $\B_2$, each element of the basis $\{ a_\nu \}$ can be uniquely expanded as \begin{equation} 
a_\nu = \sum_\mu b^{1}_{\nu\mu} \otimes b^2_{\mu}, 
\label{eq:supp0} 
\end{equation} 
with $b^{1}_{\nu\mu} \in \B_1$. The support algebra of $\Al$ in $\B_1$, denoted $\sa(\Al, \B_1)$, is the algebra generated by the set $\{ b^1_{\nu\mu} \}$ arising in this expansion, where $\nu$ ranges over all basis elements of $\Al$. This support algebra has two useful properties \cite{2009GNVW}: (1) $\sa(\Al, \B_1)$, as constructed, is independent of the choice of basis, and is the smallest subalgebra $\mathcal{C}$ of $\B_1$ such that $\Al \subset \mathcal{C} \otimes \B_2$; and (2) if subalgebras $\Al_1 \subset \B_1 \otimes \B_2$ and $\Al_2 \subset \B_2 \otimes \B_3$ commute within $\B_1 \otimes \B_2 \otimes \B_3$, then $\sa(\Al_1, \B_2)$ and $\sa(\Al_2, \B_2)$ commute within $\B_2$.

Given that $\alpha(\B_{2n}) \subset \mathcal{C}_{2n-1} \otimes \mathcal{C}_{2n+1}$, we can define the support algebras of $\alpha(\B_{2n})$ within $\mathcal{C}_{2n-1}$ and $\mathcal{C}_{2n+1}$, respectively: 
\begin{equation} 
\begin{split} \Le_{2n} & = \sa(\alpha(\B_{2n}), \mathcal{C}_{2n-1}), \\ 
\Ri_{2n} & = \sa(\alpha(\B_{2n}), \mathcal{C}_{2n+1}), 
\end{split} 
\end{equation} for $n \in [k, t]$. A geometric illustration of these algebras is provided in Fig \ref{Fig:partition}. Note that all support algebras thus obtained are mutually commuting, which follows from the mutual commutativity of distinct $\B_{2n}$'s and property (2) from the preceding paragraph (needed to show that $\mathcal{L}_{2n}$ and $\mathcal{R}_{2n-2}$ commute). Moreover, since $\B_{2n}$ is closed under Hermitian conjugation (adjoint) and $\alpha$ preserves this operation, the support algebras are also closed under taking the adjoint.

\begin{figure}
\begin{center}
  \includegraphics[width=.45\textwidth]{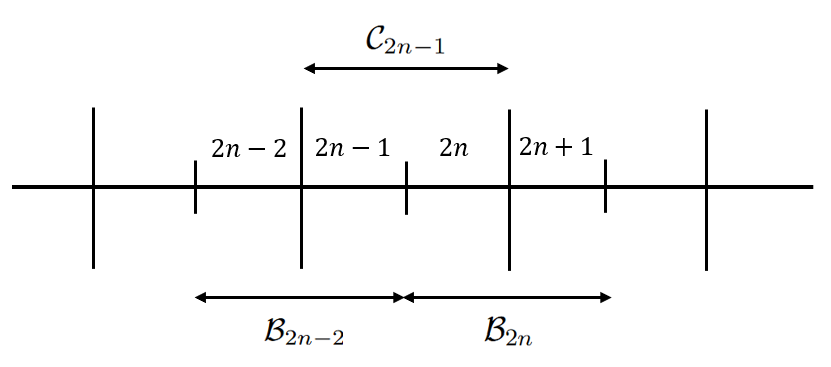} 
\end{center}
\caption{The unit cell structure of the chain: The locally even piece and the left portion of the locally odd piece in $\B_{2n}$ are supported within $I_{2n} \cup I_{2n+1}$.
}
\label{Fig:partition}
\end{figure}

We now show that all support algebras $\Le_{2n}$ and $\Ri_{2n}$ for $n\in[k+1,t-1]$ are isomorphic to matrix algebras, and having a basis in which each element has a well-defined local $\Z_2$ parity: in mathematical terms, these algebras are $\Z_2$ graded algebras \cite{knus1969algebras}. The argument can be divided into four parts:
\begin{itemize}
    \item We first prove that 
    \begin{equation} \alpha(\B_{2n}) = \Le_{2n} \otimes \Ri_{2n}, \, n\in[k+1,t-1]. 
    \label{eq:supp1} 
    \end{equation} 
By the definition of support algebras, we have $\alpha(\B_{2n}) \subset \Le_{2n} \otimes \Ri_{2n}$. If $\alpha(\B_{2n})$ were a strict subset, \ie $\alpha(\B_{2n}) \subsetneq \Le_{2n} \otimes \Ri_{2n}$, then there would exist a non-identity element $\epsilon \in \Le_{2n} \otimes \Ri_{2n}$ in the relative commutant of $\alpha(\B_{2n})$\footnote{Because $\alpha(\B_{2n})$ and $\Le_{2n} \otimes \Ri_{2n}$ are both closed under taking adjoint, they are semisimple and, by the Wedderburn-Artin theorem, are isomorphic to direct sums of matrix algebras. Consequently, if $\alpha(\B_{2n}) \subsetneq \Le_{2n} \otimes \Ri_{2n}$, the relative commutant of $\alpha(\B_{2n})$ must contain non-identity elements.} that commutes with $\alpha(\B_{2j})$ for all $j\in [k,t]$. In fact, $\alpha(\B_{2j})$ with $j\neq n$ lies within other support algebras and therefore commutes with $\Le_{2n} \otimes \Ri_{2n}$. Consequently, $\alpha^{-1}(\epsilon)$ commutes with the algebra $\mathcal{M}$. For $n \in [k+1, t-1]$, it also holds that $\alpha^{-1}(\epsilon) \in \mathcal{M}$. This implies that $\alpha^{-1}(\epsilon)$, and hence $\epsilon$, must be the identity, as $\mathcal{M}$, being a matrix algebra, has no nontrivial center \cite{bratteli2012operator}. This establishes Eq.\eqref{eq:supp1}.

\item We observe that 
\begin{equation} 
\mathcal{C}_{2n-1} = \Ri_{2n-2} \otimes \Le_{2n}, 
\label{eq:supp2} 
\end{equation} 
for $n\in [k+1,t-1]$. Clearly, $\Ri_{2n-2} \otimes \Le_{2n} \subset \mathcal{C}_{2n-1}$. If $\Ri_{2n-2} \otimes \Le_{2n}$ were a strict subset, \ie $\Ri_{2n-2} \otimes \Le_{2n} \subsetneq \mathcal{C}_{2n-1}$, then there would be a non-identity element $\epsilon \in \mathcal{C}_{2n-1}$ that commutes with all support algebras. This implies that $\alpha^{-1}(\epsilon)\in\mathcal{M}$ commutes with the algebra $\mathcal{M}$, which is isomorphic to a matrix algebra. However, this is impossible by the same argument as in the previous paragraph.

\item The support algebras $\Le_{2n}$ and $\Ri_{2n}$ for $n \in [k+1, t-1]$ are each isomorphic to a matrix algebra. This follows because, for these values of $n$, we see from Eq.\eqref{eq:supp1} that 
\begin{equation} 
 \B_{2n} = \alpha^{-1}(\Le_{2n} \otimes \Ri_{2n}) = \alpha^{-1}(\Le_{2n}) \otimes \alpha^{-1}(\Ri_{2n}). 
\label{eq:supp3}
\end{equation} 
If any one of these support algebras for $n \in [k+1, t-1]$ had a non-trivial center, then $\B_{2n}$ would also have a non-trivial center. This is not possible since $\B_{2n}$ is isomorphic to a matrix algebra, \ie the algebra of all operators on $[I_{2n}, I_{2n+1}]$. Thus, each of $\Le_{2n}$ and $\Ri_{2n}$ for $n \in [k+1, t-1]$ must have a trivial center and therefore be isomorphic to a matrix algebra (by the Wedderburn-Artin Theorem \cite{bratteli2012operator}).

\item Finally, we demonstrate that each support algebra is a $\Z_2$-graded algebra, \ie each has a basis in which every element possesses a well-defined local parity. This is straightforward: in Eq.\eqref{eq:supp0}, we can select the basis $\{ a_\nu \}$ for $\alpha(\B_{2n})$ and $\{ b_\mu^2 \}$ for $\mathcal{C}_{2n\pm 1}$ such that each element in both bases has a well-defined local parity. Therefore, each element in $\{ b_{\nu\mu}^1 \}$ also has a well-defined parity. Consequently, the support algebra $\sa(\alpha(\B_{2n}), \mathcal{C}_{2n\mp 1})$ possesses a basis in which each element exhibits a specific local parity. By the construction of $\mathcal{C}_{2n\pm 1}$, we see that each $\Le_{2n}$ has a basis $\{ M_{2n}^{L,\mu}, N_{2n}^{L,\sigma}\otimes V_R \}$, where $\{ M_{2n}^{L,\mu} \}$ represents the locally even operators, fully supported within $I_{2n-1} \cup I_{2n}$, and $\{ N_{2n}^{L,\sigma}\otimes V_R \}$ represents the locally odd operators, with $N_{2n}^{L,\sigma}$ being $\Z_2$ odd and supported within $I_{2n-1} \cup I_{2n}$. A similar argument applies to $\Ri_{2n}$, for which we obtain a basis $\{ M_{2n}^{R,\mu}, N_{2n}^{R,\sigma}\otimes V_R \}$.

\end{itemize}

Combining these results: from Eqs. \eqref{eq:supp1} and \eqref{eq:supp2}, we find that for $n \in [k+1, t-1]$, \begin{equation} 
\begin{split} 
\dim(\B_{2n}) & = \dim(\Le_{2n}) \times \dim(\Ri_{2n}), \\ 
\dim(\mathcal{C}_{2n-1}) & = \dim(\Le_{2n}) \times \dim(\Ri_{2n-2}), \end{split} 
\end{equation} 
which implies that $\{ \Le_{2n} \}$ are isomorphic to $2^{d_L} \times 2^{d_L}$ matrix algebras, and $\{ \Ri_{2k} \}$ are isomorphic to $2^{d_R} \times 2^{d_R}$ matrix algebras, with $d_L + d_R = 2l$. Notably, the dimensions of the matrix algebras, $d_L$, are identical for all $n \in [k+1, t-1]$, and the same holds for $d_R$. 

\emph{Step 4:} In the fourth step, we demonstrate that a QCA with the first type of transformation rule, as described in Prop. \ref{prop:Z2auto}, acts on the symmetric subalgebra, up to an sFDC, as a translation. We begin by constructing local gates supported within $[I_{2k}, I_{2t+1}]$, collectively denoted as $W_\Lambda$, which satisfy 
\begin{equation} 
\begin{split} 
&W_\Lambda \circ \mathrm{T}^{-q} \circ \alpha(X_i) = X_i, \\ 
&W_\Lambda \circ \mathrm{T}^{-q} \circ \alpha(Z_i Z_m) = Z_i V_R', 
\end{split} 
\label{eq:recover} 
\end{equation} 
where $i \in [I_{2k+2}, I_{2t-1}]$, and $V_R'$ is an odd operator supported near the distant site $m$, which is both unitary and Hermitian. Eq. \eqref{eq:recover} implies that the standard generating set of $\B$ within $[I_{2k+2}, I_{2t-1}]$, and hence its subalgebra supported on this interval, is invariant under $W_\Lambda \circ \mathrm{T}^{-q} \circ \alpha$. By extending this construction to an FDC $W$ over the entire chain, we conclude that $\alpha$, when acting on the symmetric subalgebra, is equivalent to \begin{equation} 
\alpha(\B) = \mathrm{T}^q \circ W^\dagger(\B). 
\end{equation}

To begin, we divide each interval $I_{2n-1} \cup I_{2n}$, with $n \in [k+1, t-1]$, into two adjacent segments: the left segment containing $d_R$ sites, and the right segment containing $2l - d_R = d_L$ sites. We denote the algebra of all operators on the left $d_R$ sites as $\mathcal{T}_{2n-2}^R$ and on the right $d_L$ sites as $\mathcal{T}_{2n}^L$, corresponding to $2^{d_R} \times 2^{d_R}$ and $2^{d_L} \times 2^{d_L}$ matrix algebras, respectively. We demonstrate below that there exists a unitary $W_{2n-1}$, supported entirely within $I_{2n-1} \cup I_{2n}$, which maps the support of $\Le_{2n}$ and $\Ri_{2n-2}$ onto $\mathcal{T}_{2n}^L$ and $\mathcal{T}_{2n-2}^R$, respectively. Here, for locally odd operators in $\Le_{2n}$ and $\Ri_{2n-2}$ (those with a right component at a distant site), we refer only to the support of the left component.

Indeed, there is an algebra isomorphism from $\Le_{2n}$ to another algebra, $\tilde \Le_{2n} = \langle M_{2n}^{L, \mu}, N_{2n}^{L, \sigma} \rangle$, via the following transformation $\rho$ on the basis elements: 
\begin{equation} 
\begin{split} \rho: & M_{2n}^{L, \mu}  \mapsto M_{2n}^{L, \mu} \\ & N_{2n}^{L, \sigma} \otimes V_R  \mapsto N_{2n}^{L, \sigma}, 
\end{split} 
\label{eq:truncateend} 
\end{equation} 
thus making $\tilde\Le_{2n}$ isomorphic to a $2^{d_L} \times 2^{d_L}$ matrix algebra. More precisely, $\rho$ is an isomorphism from $\mathcal{C}_{2n-1}=(\mO_{2n-1} \otimes \mO_{2n})^e \oplus [(\mO_{2n-1} \otimes \mO_{2n})^o \otimes V_R]$ to $\mO_{2n-1} \otimes \mO_{2n}$ for all $n \in [k+1, t-1]$, where $\mO^{e(o)}$ denotes the $\Z_2$ even (odd) component of an algebra $\mO$; specifically, it acts by truncating the pattern $V_R$, an order 2 unitary, from the $\Z_2$ odd component.\footnote{The transformation $\rho^{-1}$ differs from $att$ as defined in Eq.\eqref{eq:attach} only by the operator attached: for $att$, the attached operator is $Z_m$, while for $\rho^{-1}$, it is $V_R$, which is also an order-2 unitary localized near site $m$.} A similar argument applies to obtain $\tilde \Ri_{2n-2} = \langle M_{2n-2}^{R, \mu}, N_{2n-2}^{R, \sigma} \rangle$, which is also isomorphic to a $2^{d_R} \times 2^{d_R}$ matrix algebra.

An observation follows: the algebras $\tilde\Le_{2n}$ and $\tilde\Ri_{2n-2}$ are now fully supported within $I_{2n-1} \cup I_{2n}$, respectively. Since $\mathcal{T}_{2n}^L$ and $\tilde\Le_{2n}$, as well as $\mathcal{T}_{2n-2}^R$ and $\tilde\Ri_{2n-2}$, are matrix algebras of identical dimensions, there exists a local unitary $W_{2n-1}$, supported on $I_{2n-1} \cup I_{2n}$, such that \begin{equation} 
\begin{split} 
W_{2n-1} \tilde\Ri_{2n-2} W_{2n-1}^\dagger &= \mathcal{T}_{2n-2}^R, \\ 
W_{2n-1} \tilde\Le_{2n} W_{2n-1}^\dagger &= \mathcal{T}_{2n}^L, 
\end{split} 
\end{equation} 
as desired. { Specifically, (1) since $\tilde\Le_{2n}$ and $\mathcal{T}_{2n}^L$ are two representations of the same finite-dimensional matrix algebra on the Hilbert space $I_{2n-1}\cup I_{2n}$, they differ only by a unitary change of basis $W_{2n-1}$ due to the Skolem–Noether theorem; (2) under the identification $\mO_{2n-1}\otimes\mO_{2n} \cong \tilde\Le_{2n}\otimes \tilde\Ri_{2n-2} $, the commutant of $\tilde\Le_{2n}$ within $\mO_{2n-1}\otimes\mO_{2n}$ is $\tilde\Ri_{2n-2}$, and similarly the commutant of $\mathcal{T}_{2n}^L$ in $\mO_{2n-1}\otimes\mO_{2n}$ is $\mathcal{T}_{2n-2}^R$. Therefore
\begin{equation}
    W_{2n-1} \tilde\Ri_{2n-2} W_{2n-1}^\dagger = W_{2n-1} (\tilde\Le_{2n})'  W_{2n-1}^\dagger = (W_{2n-1} \tilde\Le_{2n}  W_{2n-1}^\dagger)'= (\mathcal{T}_{2n}^L)' = \mathcal{T}_{2n-2}^R,
\end{equation}
so the same unitary $W_{2n-1}$ simultaneously send $\tilde \Ri_{2n-2}$ to $\mathcal{T}_{2n-2}^R$}. This construction applies to all $n\in [k+1,t-1]$.

We then perform a translation to the right by $-q = d_L - l$ sites (or to the left if $-q < 0$), which shifts the algebras $\mathcal{T}_{2n}^L \otimes \mathcal{T}_{2n}^R$ back into the interval $I_{2n} \cup I_{2n+1}$. Due to Eq.\eqref{eq:supp1}, we have 
\begin{equation} 
\mathcal{T}_{2n}^L \otimes \mathcal{T}_{2n}^R \cong \tilde\Le_{2n} \otimes \tilde\Ri_{2n} \cong \Le_{2n} \otimes \Ri_{2n} \cong \B_{2n} \cong \mathcal{O}_{2n} \otimes \mathcal{O}_{2n+1}, 
\label{eq:chain} 
\end{equation} 
where $\mathcal{O}_{2n} \otimes \mathcal{O}_{2n+1}$ denotes the algebra of all operators on $[I_{2n}, I_{2n+1}]$, and each tensor product in Eq.\eqref{eq:chain} is a matrix algebra. Consequently, the transformation \begin{equation} \mathrm{T}^{-q} \circ W_{2n-1} \circ W_{2n+1} \circ \rho \circ \alpha \circ att \end{equation} 
restricts to an automorphism of $\mO_{2n} \otimes \mO_{2n+1}$, which must be inner, as it acts on a matrix algebra (by the Skolem-Noether's Theorem), and can therefore be implemented by a local unitary $Q_{2n}^\dagger \in \mO_{2n} \otimes \mO_{2n+1}$ for $n \in [k+1, t-1]$. Thus, we conclude that
\begin{equation} 
W_\Lambda \circ \mathrm{T}^{-q} \circ \rho \circ \alpha \circ att 
\label{eq:recoverlocal} 
\end{equation} acts as the identity on $\otimes_{n=k+1}^{t-1} (\mO_{2n}\otimes \mO_{2n+1})$, where 
\begin{equation} 
W_\Lambda = \left(\prod_{n=k+1}^{t-1} Q_{2n}\right) \otimes \mathrm{T}^{-q} \left(\prod_{n=k+1}^{t-1} W_{2n-1}\right) \mathrm{T}^{q} 
\end{equation} is a unitary operator acting non-trivially within $[I_{2k}, I_{2t+1}]$.

Finally, Eq.\eqref{eq:recoverlocal} indicates that 
\begin{equation} 
\begin{split} 
&\alpha(Z_i Z_m) = \rho^{-1} \circ \mathrm{T}^q \circ W_\Lambda^\dagger(Z_i), \\ 
&\alpha(X_i) = \rho^{-1} \circ \mathrm{T}^q \circ W_\Lambda^\dagger(X_i). 
\end{split} 
\label{eq:FDCconstruction1}
\end{equation} 
Since $\mathrm{T}^q \circ W^\dagger_\Lambda$ is a uQCA, matching the supports of operators on both sides of these equalities, we must have \begin{equation} 
\begin{split} 
&\alpha(Z_i Z_m) = [\mathrm{T}^q \circ W_\Lambda^\dagger(Z_i)] \otimes V_R, \\ &\alpha(X_i) = \mathrm{T}^q \circ W_\Lambda^\dagger(X_i), 
\end{split} 
\label{eq:FDCconstruction2}
\end{equation} 
which directly leads to Eq.\eqref{eq:recover}, where $V_R' = \mathrm{T}^{-q} V_R \mathrm{T}^q$.

\emph{Step 5:} We extend the construction in the previous step across the entire system, obtaining an FDC acting on the chain given by \begin{equation} 
W = (\prod_n Q_{2n}) \otimes \mathrm{T}^{-q} (\prod_n W_{2n-1}) \mathrm{T}^{q}. 
\end{equation} 
As in Eq.\eqref{eq:recover}, the transformation $W \circ \mathrm{T}^{-q} \circ \alpha$ leaves the standard generating set of $\B$ invariant, thus acting as the identity on $\B$. We now demonstrate that $W$, as constructed, must be an FDC with symmetric gates. Since 
\begin{equation} 
\alpha(O) = \mathrm{T}^q \circ W^\dagger(O),\quad \forall O\in \B, 
\label{eq:WsymFDC}
\end{equation} 
we know that $\mathrm{T}^q \circ W^\dagger$ is a uQCA that also acts on the symmetric subalgebra as an automorphism. Generally, if a uQCA implements an automorphism of a subalgebra symmetric under a finite Abelian group $G$, it must map a global symmetry generator $g$ to another global symmetry generator $h$ (since these are the only operators that commute with the entire symmetric subalgebra $\B$). Furthermore, the group composition law must be preserved, \ie it acts as an automorphism of the symmetry group $G$, which is trivial in the case of $\Z_2$. Thus, $\mathrm{T}^q \circ W^\dagger$ is a uQCA that commutes with the $\Z_2$ symmetry (see Remark 1 below), and the $\Z_2$ symmetric FDC $W$ is classified by $H^2(\Z_2, \U)$ \cite{2018sMPU,2013Hastings,2016else}, which physically labels the 1D SPT it can entangle. For $G=\Z_2$, this cohomology group is trivial. According to Theorem \ref{thm:GsymFDC}, any such symmetric FDC (\ie a FDC that commutes with the symmetry as a whole, that entangles a trivial SPT) can always be written as an FDC with symmetric local gates. This completes the proof for the first transformation rule listed in Prop.\ref{prop:Z2auto}: its action on $\B$ is identical to $\mathrm{T}^q \circ W^\dagger$, where $W$ is an sFDC.

For a QCA $\alpha$ on the $\Z_2$ symmetric subalgebra $\B$ with the second type of transformation rule, \ie it swaps the symmetry string with the charge string, we can always combine it with a KW transformation as defined in Eq.\eqref{eq:KWduality}, such that $\mathrm{KW}^{-1} \circ \alpha$ now has a transformation rule of the first type in Prop.\ref{prop:Z2auto}. By the preceding discussion, we have $\mathrm{KW}^{-1} \circ \alpha(\B) = \mathrm{T}^q \circ W^\dagger (\B)$, which implies $\alpha(\B) = \mathrm{KW} \circ \mathrm{T}^q \circ W^\dagger (\B)$, where $W$ is a $\Z_2$ sFDC. This completes the proof of the theorem.
\end{proof}

\textbf{Remark 1:} In principle, it is possible that the uQCA $\mathrm{T}^q \circ W$, as constructed in the proof of Theorem \ref{thm:Z2QCA}, commutes with the $\Z_2$ symmetry up to a phase, representing the $\Z_2$ charge of this uQCA. In this case, we can compose it with a single-site operator $Z_i$, such that the resulting uQCA, \ie $\mathrm{T}^q \circ W \circ Z_i$, is $\Z_2$ symmetric. The additional $Z_i$ operator only introduces a minus sign to one local generator, $X_i$, in $\B$, which we ignore for simplicity.

 Since $H^2(\Z_2, \U)=\Z_1$, following Theorem \ref{thm:GsymFDC}, we will hereafter use the terms symmetric FDC and FDC with symmetric gates interchangeably. Given Theorem \ref{thm:Z2QCA}, which serves as a structural result for QCAs on the $\Z_2$ symmetric subalgebra, we can characterize their equivalence classes.

{\theorem \begin{enumerate}
    \item The index of a QCA defined on the $\Z_2$ symmetric subalgebra can take values within 
\begin{equation} 
\ind (\alpha)\in \big\{ (\sqrt{2})^q 2^p \mid p \in \Z,  q=0,1 \big\}. 
\end{equation} 
Moreover, for two QCA's defined on $\B$, the index is multiplicative under composition, \ie \begin{equation} 
\ind(\alpha\circ \beta) = \ind(\alpha)\ind(\beta). 
\end{equation} 
\item 
QCA's defined on the $\Z_2$ symmetric subalgebra are completely classified by the index, up to FDC's with symmetric gates. 
\end{enumerate}

\label{thm:Z2completeclassification}
}

\begin{proof} By Theorem \ref{thm:Z2QCA}, we know that the index of any QCA defined on $\B$ is equal to the index of $\mathrm{KW}^q \circ \mathrm{T}^p$, where $q \in \{ 0,1 \}$ and $p \in \Z$. This follows from Corollary \ref{cor:FDCinvariant}, which states that the index remains invariant under an FDC with symmetric gates. Additionally, since both $\mathrm{KW}$ and $\mathrm{T}$ map elements of the standard generating set of $\B$ to other elements in the same set, calculating their index is straightforward: $\ind (\mathrm{KW} \circ \mathrm{T}^p) = \sqrt{2}^q 2^p$, establishing the possible values of the index.

For two QCA's $\alpha$ and $\beta$ on $\B$, we have $\alpha(\B) = \mathrm{KW}^{q_1} \circ \mathrm{T}^{p_1} \circ W_1$ and $\beta(\B) = \mathrm{KW}^{q_2} \circ \mathrm{T}^{p_2} \circ W_2$, where $W_1$ and $W_2$ are FDC's with symmetric gates. This implies \begin{equation} \alpha \circ \beta (\B) = \mathrm{KW}^{q_1 + q_2} \circ \mathrm{T}^{p_1 + p_2} \circ W', \end{equation} where $W' = (\mathrm{KW}^{q_2} \circ \mathrm{T}^{p_2})^{-1} W_1 (\mathrm{KW}^{q_2} \circ \mathrm{T}^{p_2}) \circ W_2$ is also an FDC with symmetric gates, according to Theorem \ref{thm:Z2QCA}, as FDC's with symmetric gates form a normal subgroup of QCA's on $\B$. Thus, we find that \begin{equation} 
\ind(\alpha \circ \beta) = (\sqrt{2})^{q_1 + q_2} 2^{p_1 + p_2} = \ind(\alpha)\ind(\beta). 
\end{equation}

Finally, for a QCA $\alpha$ with $\ind(\alpha) = \sqrt{2}^q 2^p$, where $q \in \{ 0,1 \}$ and $p \in \Z$, Theorem \ref{thm:Z2QCA} implies that $\alpha$ must act on $\B$ as $\alpha(\B) = \mathrm{KW}^q \circ \mathrm{T}^p \circ W$, where $W$ is an FDC with symmetric gates. This establishes statement (2) in the theorem. 
\end{proof}

Theorem \ref{thm:Z2completeclassification} implies the existence of a homomorphism from the group of QCAs on the $\Z_2$ symmetric algebra to a $\Z_2$ invariant that measures whether the index is rational. The kernel of this homomorphism acts on any $O \in \B$ identically to a uQCA and can thus be naturally extended to the full operator algebra by defining $\alpha(\Al(\Z)) = \mathrm{T}^q \circ W^\dagger (\Al(\Z))$. In contrast, QCAs with an irrational index swap the two types of string operators and cannot be extended to QCAs acting on the full operator algebra. Readers may also recall the index theory for QCAs on fermionic chains, where the same irrational index, \ie $\sqrt{2}$, can arise \cite{2019fQCA,2017fQCA,2020fMPU}. The fermionic QCA exhibiting this irrational index corresponds to the Majorana translation. In Appendix \ref{app:fermionize}, we illustrate a more direct connection between the QCA on the $\Z_2$ symmetric subalgebra and a fermionic QCA.

We include an additional remark before concluding this section:

\textbf{Remark 2:} In the classification of uQCA with symmetries, it is common to consider stable equivalence \cite{2017MPUCirac,2018sMPU,2019fQCA}, where one is permitted to add ancillas in nontrivial (potentially arbitrary) representations of the symmetry. However, this approach appears unnatural in our context: unlike the full local operator algebra, the symmetric subalgebra does not factorize as a tensor product between the system and the ancilla. Therefore, in this work, we fix the onsite representation to be regular and do not incorporate additional ancillas.

\section{General finite Abelian group}
\label{sec:finiteabelian}
We now extend the preceding discussion to the subalgebra that is symmetric under a general finite Abelian group $G$, which can always be decomposed as a direct product of cyclic groups. We begin by examining the transformation rules for symmetry and charge strings, analogous to Prop.\ref{prop:stringtransform}:

{\Proposition Let $\alpha$ be a QCA on the $G$ symmetric subalgebra. We have:

\begin{enumerate}
    \item  The transformation of a symmetry string $S_g = \prod_{i=m}^n u_i(g)$ supported on $\Lambda = [m,n]$, where $n - m > 2(2l+2)^2+2l$, with $u_i(g)$ as the on-site regular representation of an element $g \in G$, is given by $\alpha(S_g) = O_L(g) S_{\sigma(g)}^{-l} O_R^\dagger(g)$, where: \begin{itemize} 
\item $S_{\sigma(g)}^{-l} = \prod_{i=m+l+1}^{n-l-1} u_i[\sigma(g)]$ represents a symmetry string supported on $\Lambda^{-l}$, with $\sigma$ as an endomorphism of $G;$ 
\item $O_L(g)$ and $O_R(g)$ are unitaries that satisfy $O_L(g)^{|g|}= O_R(g)^{|g|} = 1$ (with $|g|$ denoting the order of an element $g \in G$) supported within a distance $l$ of site $m$ and $n$, respectively. They both carry the same, well-defined charge under $G$, determined by a homomorphism $\mu: G \mapsto \hat{G} = H^1(G, U(1))$. 
\end{itemize} 
\item The transformation of a charge string $S_{\hat{g}} = Z_m^{\hat{g}} (Z_n^{\hat{g}})^\dagger$, where $Z_m^{\hat{g}}$ denotes a single-site unitary charged operator associated with the character $\hat{g} \in H^1(G, U(1))$, is given by $\alpha(S_{\hat{g}}) = V_L(\hat{g}) S_{\gamma(\hat{g})}^{-l} V^\dagger_R(\hat{g})$, where $\gamma: \hat{G} \mapsto G$ is a homomorphism. Here, $V_L(\hat{g})$ and $V_R(\hat{g})$ are unitaries that satisfy $V_L(\hat{g})^{|\hat{g}|} = V_R(\hat{g})^{|\hat{g}|} =1$, supported within a distance $l$ of $m$ and $n$, respectively, each carrying the same charge under $G$, determined by an endomorphism $\nu: \hat{G} \mapsto \hat{G}$.

\item The transformation rules for the strings, specifically $\sigma$, $\mu$, $\gamma$, and $\nu$, are global invariants for a QCA $\alpha$. That is, they remain independent of the particular choice of intervals supporting the strings, provided the interval length is sufficiently large.

\end{enumerate}
\label{prop:Gtransform}
}

\begin{proof}
By the same reasoning that leads to Eq. \eqref{string}, we find 
\begin{equation} 
\alpha(S_g) = \sum_{h \in G} O_h \otimes S_h^{-l}, 
\end{equation} 
where $S_h$ is the symmetry generator associated with $h \in G$, restricted to the interval $\Lambda^{-l}$, and $\{ O_h \}$ are operators supported within $\partial_l \Lambda$. Since $\alpha(S_g)$ is an FDC with a spread upper bounded by $(2l+1)^2$, analogous to Eq. \eqref{eq:stringtransform}, we have \begin{equation} 
\sum_h \chi(h) O_h O_h^\dagger = e^{i\theta_\chi}\mathbf{1}, \end{equation} 
for any character $\chi \in \hat{G} = H^1(G,\U)$, with $e^{i\theta_\chi}$ a pure phase. The orthogonality of characters, \ie $\sum_{\chi\in \hat{G}} \chi(g)^* \chi(h) = |G|\delta_{g,h}$, where $|G|$ is the order of the group, implies that each $O_h O_h^\dagger$ must be a real, non-negative $c$-number. Consequently, one of $\{ O_h \}$ must be a unitary operator, with all others being zero. The fact that $\alpha(S_g)$ is an FDC with spread $(2l+1)^2$ further implies that the nonzero operator factorizes as a product of two unitaries, supported near $m$ and $n$ respectively, leading to 
\begin{equation} 
\alpha(S_g) = O_L(g) S_{\sigma(g)}^{-l} O_R^\dagger(g), 
\end{equation} 
where $\sigma: G \mapsto G$ is an endomorphism since $\alpha$ is an automorphism of the symmetric algebra, \ie $\alpha(S_g) \alpha(S_h) = \alpha(S_g S_h) = \alpha(S_{gh})$. Furthermore, since $\alpha(S_g)^{|g|} = 1$, it follows that $O_L(g)^{|g|} [O_R(g)^\dagger]^{|g|} = 1$. By convention, we can choose $O_L(g)^{|g|} = O_R(g)^{|g|} = 1$ individually.

Now we demonstrate that $O_L(g)$ and $O_R(g)$ carry an identical charge under $G$, determined by a homomorphism $\mu: G\mapsto \hat{G}$. Since $\alpha(S_g)$ lies within the symmetric algebra, it must commute with each element $h \in G$. Consequently, each endpoint unitary operator must commute with $h$ up to a $\U$ phase factor, 
\begin{equation} O_L^\dagger(g) h O_L(g) h^\dagger = \mu_g(h) \in \U, \label{eq:character} \end{equation}
where the phase factor obtained from $O_L(g)$ and $O_R(g)$ must be identical. Clearly, $\mu_g(\cdot)$ is a character of $G$, as, by the definition in Eq.\eqref{eq:character}, we have $\mu_g(kh) = \mu_g(k)\mu_g(h)$ for $k, h \in G$.

Furthermore, $\mu$ defines a homomorphism from $G$ to its character group $\hat{G} = H^1(G, \U)$. This follows from the fact that $\alpha$ is an automorphism of the symmetric algebra, which implies 
\begin{equation} 
O_L(g) O_L(h) = f(g,h) O_L(gh), 
\end{equation} 
with $f: G \times G \mapsto \U$ being a phase factor.\footnote{ In fact, $f$ is a group 2-cocycle due to the relation
\begin{equation}
    [O_L(g)O_L(h)]O_L(k) = O_L(g) [ O_L(h) O_L(k)], \, \forall g,h,k\in G.
\end{equation}
} This leads directly to 
\begin{equation} \mu_g(\cdot) \mu_h(\cdot) = \mu_{gh}(\cdot), \end{equation} 
as required. This completes the proof of part (1) of the Proposition. The same argument applies to charge strings $S_{\hat{g}} = Z_m^{\hat{g}} (Z_n^{\hat{g}})^\dagger$ without further modification, thus completing the proof of part (2) of the Proposition. 

To demonstrate part (3), consider two strings of the same type, \eg two symmetry strings associated with the same element $g \in G$, arranged in the geometry shown in Fig. \ref{Fig:exchange}. We denote these strings as $S_{\Lambda_1}$ and $S_{\Lambda_2}$. Since $\alpha$ is a QCA on $\B$, we have \begin{equation} \alpha(S_{\Lambda_1})\alpha(S_{\Lambda_2}^\dagger) = \alpha(S_{\Lambda_1\setminus \Lambda_2}), \end{equation} 
which is supported within $(\Lambda_1\setminus \Lambda_2)^{+l}$. This indicates that both strings, $S_{\Lambda_1}$ and $S_{\Lambda_2}$, transform under $\alpha$ according to the same $\sigma$ and $\mu$. In fact, their endpoint operators must be identical, denoted $O_R^\dagger(g)$. Thus, the interval can be extended, contracted, or shifted without affecting $\sigma$ and $\mu$. The same reasoning applies to charge strings without modification.

\end{proof}

As an analog to Prop. \ref{prop:Z2auto}, we can further constrain the consistent transformations by examining the commutation relations between two strings, each of which may be either a symmetry string or a charge string. The next result demonstrates that all consistent string transformations can be interpreted as a mapping to a braided auto-equivalence (anyon permutation symmetry) of a 2D ``bulk" topological order, specifically the Drinfeld center $\mathcal{C} = \mathcal{Z}[G]$, \ie a $G$ gauge theory in 2D. For $G = \otimes_j \Z_{n_j}$, a finite Abelian group, the anyon content of $\mathcal{C}$ consists simply of the gauge charges and gauge fluxes of each component, i.e., $\{ e_j, m_j \}$. The braiding and exchange statistics of the anyons are as follows: (1) all charges and fluxes are bosons; (2) an $e_j$ braiding with $m_j$ produces a phase factor $\omega_j = e^{i2\pi/n_j}$ and vice versa, while all other braidings yield a trivial phase.

{\theorem All QCA’s defined on the subalgebra symmetric under a finite Abelian group $G$, denoted by $\B$, form a group under finite composition. There exists a homomorphism from the group of QCA’s to $\au(\mathcal{C})$, the braided auto-equivalence of the bulk topological order, and this homomorphism is surjective.
\label{thm:Gbulk}
}

\begin{proof}
First, note the simple one-to-one correspondence between the strings and anyons of $\mathcal{C}$: the symmetry string $S(g_j) = \prod_{i=m}^n u(g_j)$, where $u(g_j)$ is the onsite symmetry operator generating the subgroup $\Z_{n_j} \subset G$, corresponds to the gauge flux $m_j$ in the bulk, while $S(Z[j]) = Z_m[j] (Z_n[j])^\dagger$, where $Z[j]$ is the generalized Pauli operator carrying a unit of $\Z_{n_j}$ charge, corresponds to the gauge charge $e_j$ in the bulk.

QCA’s on the symmetric subalgebra $\B$, being LP automorphisms, clearly form a group under composition. We claim that the transformation rules for strings given in Prop. \ref{prop:Gtransform} can be mapped to a braided auto-equivalence of $\mathcal{C}$. Specifically:
\begin{itemize}
    \item Two strings of the same type, for example, $S_g$ with the same $g \in G$, can be arranged as shown in Fig. \ref{Fig:exchange}. Denote these two strings by $S_{\Lambda_1}$ and $S_{\Lambda_2}$. According to Prop. \ref{prop:Gtransform}, under a QCA $\alpha$, the right endpoint operators of these strings are identical and thus commute. Since $\alpha$ is a QCA on $\B$, we have 
    \begin{equation} 
    [\alpha(S_{\Lambda_1}), \alpha(S_{\Lambda_2})] = 0, 
    \end{equation} 
    indicating that the left endpoint operator $O_L(g)$, as defined in Prop. \ref{prop:Gtransform}, must commute with the transformed symmetry element $\sigma(g)$. From the bulk perspective, this condition ensures that $\alpha$ preserves the (bosonic) exchange statistics of a gauge flux. The same argument applies to the charge string $S_{\hat{g}}$, and thus holds for all anyon types.
    \item Two strings of distinct types can be arranged as shown in Fig. \ref{Fig:braiding}. Since $\alpha$ is a QCA on $\B$, we find that $\alpha(S_{\Lambda_1})$ commutes with $\alpha(S_{\Lambda_2})$ up to the same phase factor as $S_{\Lambda_1}$ and $S_{\Lambda_2}$. From the bulk perspective, this condition ensures that $\alpha$ preserves the braiding statistics between distinct types of anyons.
\end{itemize}
Additionally, this mapping is a group homomorphism, as for any arbitrary string $S$, $(\alpha \circ \beta)(S) = \alpha[\beta(S)]$.

Finally, to show that this homomorphism is surjective, it suffices to identify a set of QCA's on $\B$ that generate the entire group $\au(\C)$. As discussed in Ref. \cite{2015brauer}, for a finite Abelian group $G$, three types of elementary transformations (defined below), namely the SPT entanglers classified by $H^2(G, \U)$, the outer automorphism group of $G$, denoted by Out$(G)$, and the KW duality associated with each cyclic component of $G$\footnote{For Abelian $G$, $\mathrm{Out}(G)$ is the same as $\mathrm{Aut}(G)$.}, collectively generate the full group $\au(\C)$. 
\end{proof}

{ We remark that it has been shown in Ref.~\cite{2023JonesDHR} (Theorem C) that there exists a homomorphism from the group of QCA (modulo FDC) acting on the quasi-local algebra of operators symmetric under a general fusion category $\mathcal{D}$ to ${\rm BrPic}(\mathcal{D})$, \ie the anyon permutation symmetry in $\ZZ[\mathcal{D}]$. Here, for the more restricted case where $\mathcal{D}$ is a finite Abelian group, we establish that a \emph{surjective} homomorphism actually exists. We expect that the homomorphism of Ref.~\cite{2023JonesDHR} reduces to our construction in this restricted setup (the ``superselection sectors'' there appear to play the same role as string operators here), but the precise connection between the two constructions, and in particular whether the homomorphism is surjective in the more general setup, remain interesting questions for future work.} 

Before proceeding, we define three types of elementary transformations on the $G$-symmetric subalgebra:

\begin{itemize} 
\item KW$_j$: The KW duality for the $j$-th cyclic subgroup of $G$. Recall that the $\Z_n$ KW duality is defined by the following transformation on the generating set: \begin{equation} 
\begin{split} 
\mathrm{KW}: & X_i \mapsto Z_i^\dagger Z_{i+1}, \\ 
& Z_i^\dagger Z_{i+1} \mapsto X_{i+1}, \end{split} 
\label{eq:ZnKW}
\end{equation} 
where $Z$ and $X$ are $\Z_n$ Pauli operators satisfying $Z_i X_i = \omega X_i Z_i$ with $\omega = e^{2\pi i/n}$, and commute on distinct sites.

\item $spt$: An SPT entangler, which is a $G$-symmetric FDC with \emph{asymmetric} gates. Physically, applying $spt$ to a $G$-symmetric product state produces a $G$-SPT state, labeled by an element $\omega\in H^2(G,\U)$. For $G$ finite Abelian, the entangler $spt$ leaves all $Z[j]$ operators unchanged while transforming the $X[j]$ operators as follows. Arranging the regular representations of each cyclic subgroup of $G$ sequentially from left to right at each site, the transformation rule is expressed as 
\begin{equation} 
X_i[j] \mapsto V_L(\omega) X_i[j] V_R(\omega)^\dagger. \label{eq:spten} \end{equation} 
Here, $V_L(\omega)$ is a product of the nearest Pauli-$Z$ operators located to the left of $X_i[j]$ (supported either on the $i$-th or $(i-1)$-th site). The charge of $V_L(\omega)$ under 
$G$ is specified by 
\begin{equation} \chi_V  (\cdot)= \frac{\omega(g_j, \cdot)}{\omega(\cdot, g_j)} \in H^1(G, \U), \end{equation} 
which is the slant product of $\omega$ with $g_j$, the generator of $\Z_{n_j} \subset G$. Similarly, $V_R(\omega)$ is defined as the product of the nearest Pauli-$Z$ operators (either on site $i$ or $(i+1)$) to the right of $X_i[j]$, carrying the same charge $\chi_V$. This transformation rule can be understood intuitively through the decorated domain wall construction of SPT phases \cite{2014decorated}.

\item $out$: An FDC implementing an element of $\mathrm{Out}(G)$, constructed as a product of onsite unitaries. On site $i$, it transforms $u_i(g)$, the onsite symmetry operator, to another onsite symmetry operator according to an element of $\mathrm{Out}(G)$. The onsite charged operators are transformed such that their commutation relations with the symmetry operators are preserved.

\end{itemize}
These three elementary transformations are, by construction, QCA's on the $G$-symmetric subalgebra. The following conclusion describes the structure of QCA's defined on the symmetric subalgebra, stating that, up to FDC's with symmetric gates, they are all compositions of the three elementary transformations and a generalized translation.

{\theorem In the group of QCA's defined on $\B$, the subalgebra symmetric under a finite Abelian group $G = \otimes_j \Z_{n_j}$, the group of FDC's with $G$-symmetric gates forms a normal subgroup. Let $\alpha$ be a QCA on $\B$. Then its action on $\B$ can be implemented in the form: \begin{equation} 
\alpha(O) =  K \circ \mathrm{T} \circ W^\dagger (O),\quad \forall O\in \B, 
\end{equation} where $K$ is a product of the three elementary transformations, \ie {KW$_j$, $out$, $spt$}, $W$ is an FDC with $G$-symmetric gates, and $\mathrm{T}$ represents a generalized translation (specified precisely below). 
\label{thm:Gauto}
}

\begin{proof} The argument illustrated in Fig.\ref{Fig:sFDC} directly establishes that the group of FDC's with $G$-symmetric gates forms a normal subgroup. To complete the proof of the Theorem, we proceed by examining the action of the QCA on a subalgebra of $\B$, which is isomorphic to the full operator algebra on a finite interval, similarly to Theorem \ref{thm:Z2QCA}. Below, we outline the proof and note where differences arise.

Firstly, by Theorem \ref{thm:Gbulk}, $\alpha$ transforms the string operators in accordance with an element of $\au(\C)$. By composing it with $K^{-1}$, which is a product of the three elementary transformations defined earlier in this section, we can arrange for $K^{-1}\circ \alpha$ to implement a trivial anyon permutation, \ie it preserves all string types. We now show that $K^{-1}\circ \alpha$ is thus a generalized translation, composed with an FDC with $G$-symmetric gates. This reasoning is structured into five steps.

\emph{Step 1:} We begin by coarse-graining the system, grouping every $l$ sites into a new unit cell to effectively reduce the spread of the QCA $\alpha$ to 1. We then focus on a subalgebra of $\B$ that is isomorphic to the full operator algebra supported on a finite interval. Specifically, for each single-site $Z[j]$ operator in the unit cells $I_{2n} \cup I_{2n+1}$ (where $Z[j]$ denotes the Pauli-$Z$ operator for the $j$-th cyclic subgroup of $G$), we define an operator 
\begin{equation} 
z_i[j] = Z_i[j] Z_m^\dagger[j],\, \forall j, \label{eq:locallyGcharged} 
\end{equation} 
where $m - i > 2(2l+2)^2+2l$. The set of all such operators with $i \in I_{2n} \cup I_{2n+1}$ is denoted by $A^o_{2n}$. Additionally, we denote the set of single-site $X[j]$ operators in $I_{2n} \cup I_{2n+1}$ for all $j$ as $A_{2n}$. Two key observations regarding these operators will be critical in the subsequent discussion:
\begin{itemize} 
\item The algebra $\B_{2n} = \langle A_{2n}, A^o_{2n} \rangle$ forms a subalgebra of the symmetric subalgebra $\B$, thus possessing a well-defined transformation under $\alpha$. For each generator of $\B_{2n}$, we can also define a \emph{local} $G$ charge by examining the portion supported within $I_{2n} \cup I_{2n+1}$. Specifically, operators in $A_{2n}$ are locally $G$-symmetric, while the local $G$ charge of generators in $A^o_{2n}$ is defined as the $G$ charge of the left component, \ie $Z_i[j]$.

\item The algebra $\B_{2n}$, generated by $A_{2n}^o$ and $A_{2n}$, is isomorphic to the algebra of all operators supported within $I_{2n} \cup I_{2n+1}$, denoted by $\mO_{2n} \otimes \mO_{2n+1}$. This isomorphism is established by the following map: \begin{equation} 
\begin{split} 
att: &Z_i[j] \mapsto z_i[j] = Z_i[j] Z_m^\dagger[j], \\ 
&X_i[j] \mapsto X_i[j],\, \forall j. 
\end{split} 
\label{eq:Gattach} 
\end{equation} 
\end{itemize}

We then extend this construction to nearby unit cells $[I_{2k}, I_{2t+1}]$ ($k+3 < n < t-3$), which are located at a sufficiently large distance from the fixed site $m$. Notably, all operators with the same local $G$ charge—specifically, those formed by the product of a $Z[j]$ operator within $[I_{2k}, I_{2t+1}]$ and another operator located far away—share the same right endpoint $Z_m[j]$. This setup allows us to construct a sequence of mutually commuting algebras, denoted $\B_{2k}, \dots, \B_{2t}$. It is also straightforward to verify that the algebra generated by this sequence,
\begin{equation} 
\mathcal{M} = \langle \B_{2k},\dots, \B_{2t} \rangle, 
\end{equation}
is isomorphic to the full operator algebra supported on the interval $[I_{2k}, I_{2t+1}]$. In particular, $\mathcal{M}$ contains the $G$-symmetric subalgebra with support within $[I_{2k}, I_{2t+1}]$.

\emph{Step 2:} We now construct another sequence of algebras, which can be regarded as the ``image" of the algebra sequence from Step 1 under the QCA $\alpha$. Consider the transformation of a locally charged operator as defined in Eq.\eqref{eq:locallyGcharged}:
\begin{equation} 
K^{-1}\circ\alpha(z_i[j]) = V_L[j] V_R[j]^\dagger, 
\label{eq:Grightend} 
\end{equation}
where $V_L[j]$ and $V_R[j]$ are order-$n_j$ unitaries carrying a unit of charge under $\Z_{n_j}$ and are neutral under the other cyclic subgroups, with supports in $[i-l,i+l]$ and $[m-l,m+l]$, respectively. By arranging two charge strings in the geometry shown in Fig.\ref{Fig:exchange}, and observing that operators in the set $\{ z[j], \forall j \}$ are mutually commuting, we conclude that the right endpoint operators $\{ V_R[j]^\dagger, \forall j \}$ also commute. Thus, we define a new set of locally charged operators as
\begin{equation} 
z'_i[j] = Z_i[j] V_R[j]^\dagger, 
\end{equation}
where the right-hand component is now the transformed operator. Since $V_R$ carries the same charge as $Z_m[j]$, each $z_i'[j]$ lies within $\B$. We then define $\mathcal{C}_{2n-1} = \langle A_{2n-1}, A^o_{2n-1} \rangle$, where $A_{2n-1}$ denotes the set of single-site $\{ X[j] \}$ operators on $I_{2n-1} \cup I_{2n}$, and $A^o_{2n-1} = \{ z'_i[j] | i \in I_{2n-1} \cup I_{2n}, \forall j \}$. Repeating this construction for unit cells $[I_{2k-1}, I_{2t+2}]$ generates a new sequence of algebras, denoted by $\mathcal{C}_{2k-1}, \dots, \mathcal{C}_{2t+1}$.

By construction, the algebras in the sequence $\mathcal{C}_{2k-1}, \dots, \mathcal{C}_{2t+1}$ satisfy the following properties: (1) they commute with each other; (2) the algebra generated by this sequence,
\begin{equation} 
\mathcal{N} = \langle \mathcal{C}_{2k-1}, \dots, \mathcal{C}_{2t+1} \rangle, 
\end{equation}
is isomorphic to the algebra of all operators supported within the interval $[I_{2k-1}, I_{2t+2}]$ and; in particular, (3) $\mathcal{N}$ includes the $G$-symmetric subalgebra fully supported within $[I_{2k-1}, I_{2t+2}]$.

\emph{Step 3:} In this step, we examine the transformation of $\B_{2n}$ under the QCA $\alpha$. The transformed algebra $\alpha(\B_{2n})$ lies within $\mathcal{C}_{2n-1} \otimes \mathcal{C}_{2n+1}$. Accordingly, we define the support algebras of $\alpha(\B_{2n})$ within $\mathcal{C}_{2n-1}$ and $\mathcal{C}_{2n+1}$ as follows:
\begin{equation} 
\begin{split} 
\Le_{2n} & = \sa(\alpha(\B_{2n}), \mathcal{C}_{2n-1}), \\ 
\Ri_{2n} & = \sa(\alpha(\B_{2n}), \mathcal{C}_{2n+1}), 
\end{split} 
\label{eq:definesupport}
\end{equation}
for $n \in [k, t]$. A geometric illustration of these algebras is provided in Fig. \ref{Fig:partition}. Note that all support algebras thus obtained commute with one another, a property that follows from the mutual commutativity of distinct $\B_{2n}$'s.

Now we characterize the support algebras. Following a similar reasoning as in the proof of Theorem \ref{thm:Z2QCA}, we find: (1) Each $\alpha(\B_{2n})$ decomposes as a tensor product of its support algebras, \ie
\begin{equation} 
\alpha(\B_{2n}) = \Le_{2n} \otimes \Ri_{2n}, , n\in[k+1,t-1]. \label{eq:supp1G} 
\end{equation} 
(2) We also have 
\begin{equation} 
\mathcal{C}_{2n-1} = \Ri_{2n-2} \otimes \Le_{2n}, \label{eq:supp2G} \end{equation}
for $n\in [k+1,t-1]$. (3) The support algebras $\Le_{2n}$ and $\Ri_{2n}$ for $n \in [k+1, t-1]$ each have a trivial center, \ie they are central simple algebras, and thus each is isomorphic to a matrix algebra \cite{bratteli2012operator}. (4) Lastly, each support algebra is a $G$-graded algebra, meaning it has a basis in which every element has a well-defined local $G$ charge.

By combining these results from Eqs. \eqref{eq:supp1G} and \eqref{eq:supp2G}, we deduce that for $n \in [k+1, t-1]$,
\begin{equation} 
\begin{split} 
\dim(\B_{2n}) & = \dim(\Le_{2n}) \times \dim(\Ri_{2n}), \\
\dim(\mathcal{C}_{2n-1}) & = \dim(\Le_{2n}) \times \dim(\Ri_{2n-2}), 
\end{split} \label{eq:Gtranslation} 
\end{equation}
which implies that ${ \Le_{2n} }$ are isomorphic to $d_L \times d_L$ matrix algebras, and ${ \Ri_{2k} }$ are isomorphic to $d_R \times d_R$ matrix algebras, with $d_L \times d_R = |G|^{2l}$, as each site carries a regular representation of $G$. Notably, the value of $d_L$ remains constant for all $n \in [k+1, t-1]$, and the same applies to $d_R$.

\emph{Step 4:} We now demonstrate that the QCA $K^{-1}\circ \alpha$ acts on the symmetric subalgebra, up to a symmetric FDC, as a generalized translation. To establish this, first observe that the Hilbert space on each site is a tensor product of regular representations of each cyclic subgroup of $G$. We decompose these onsite regular representations of each cyclic subgroup into prime factor components. For example, if $G$ contains a $\Z_4$ subgroup, we decompose the 4-dimensional regular representation of $\Z_4$ into a tensor product of two 2-dimensional subspaces. Importantly, this decomposition is chosen so that the on-site $\Z_4$ symmetry operator factorizes across these subspaces. For the example of $\Z_4$, consider its 4-dimensional regular representation in which the $X$ operator acts as
\begin{equation} 
X = \begin{pmatrix} 1 & & &\\ & e^{i\frac{\pi}{2}} & & \\ & & e^{i\pi} & \\ & & & e^{i\frac{3\pi}{2}} 
\end{pmatrix}. 
\label{eq:primefactors}
\end{equation}
This can be decomposed as a tensor product of two 2-dimensional subspaces, where $X$ acts as
\begin{equation} 
X = \begin{pmatrix} 1 & \\ & e^{i\frac{\pi}{2}}
\end{pmatrix}\otimes \begin{pmatrix} 1 & \\ & e^{i\pi} \end{pmatrix}. \end{equation}
We apply this decomposition across all cyclic subgroups of $G$ so that the symmetry operator factorizes accordingly. Specifically, for each cyclic subgroup $\Z_{n_j}$, where $n_j = \prod_k p_k$ with $\{ p_k \}$ as the prime factors of $n_j$, we decompose its regular representation such that \begin{equation} 
X[j] = \begin{pmatrix} 1 & & & \\ & e^{i\frac{2\pi}{n_j}} & & \\ & & \dots & \\ & & & e^{i\frac{2\pi (p_1 - 1)}{n_j}} 
\end{pmatrix} \otimes 
\begin{pmatrix} 1 & & & \\ & e^{i\frac{2\pi p_1}{n_j}} & & \\ & & \dots & \\ & & & e^{i\frac{2\pi p_1 (p_2 - 1)}{n_j}} \end{pmatrix} \otimes 
\begin{pmatrix} 1 & & & \\ & e^{i\frac{2\pi p_1 p_2}{n_j}} & & \\ & & \dots & \\ & & & e^{i\frac{2\pi p_1 p_2 (p_3 - 1)}{n_j}} \end{pmatrix} \otimes \dots 
\label{eq:symmetryprimefactors}
\end{equation} 
In this decomposition, the Hilbert space on each site can be expressed as \begin{equation} \Hi_i = \otimes_x V_x, 
\label{eq:decomposehilbertspace}
\end{equation} 
where each $V_x$ represents a $p_x$-dimensional Hilbert space with $p_x$ prime. Distinct $V_x$'s in the decomposition may have the same dimension.

According to Eq. \eqref{eq:Gtranslation}, each $\Le_{2n}$ for $n \in [k+1, t-1]$ forms a rank-$d_L = \prod_x p_x^{q_x}$ matrix algebra, while each $\Ri_{2n}$ is a rank-$d_R = \prod_x p_x^{2l - q_x}$ matrix algebra, where $q_x \in \Z$ ranges within $[0, 2l]$. Following a method analogous to the proof of Theorem \ref{thm:Z2QCA}, we partition each interval $I_{2n-1} \cup I_{2n}$, with $n \in [k+1, t-1]$, into two spatially adjacent segments: the left segment includes $2l - q_x$ copies of $V_x$ for each prime factor $p_x$, while the right segment includes $q_x$ copies of $V_x$. Here, since $q_x$ can vary across different prime factors, ``spatially adjacent” refers to the partitioning of the onsite Hilbert spaces within $I_{2n-1} \cup I_{2n}$ associated with each prime factor into left and right sections. For $V_x$'s of the same dimension, we denote them as $\{ V_x^{(y)} \}$, where the superscript $y$ labels the multiplicity index. The selection of the corresponding $q_x^{(y)}$ is arbitrary, provided that (1) $\sum_y q_x^{(y)}$ is fixed, and (2) either all $q_x^{(y)} \in [0, l]$ or all $q_x^{(y)} \in [l, 2l]$. This condition is always achievable.

We denote the algebra of all operators on the left segment as $\mathcal{T}_{2n-2}^R$ and on the right segment as $\mathcal{T}_{2n}^L$, which correspond to $d_R \times d_R$ and $d_L \times d_L$ matrix algebras, respectively. We claim that for $n\in[k+1,t-1]$, there exists a unitary $W_{2n-1}$, supported entirely within $I_{2n-1} \cup I_{2n}$, that maps the support of $\Le_{2n}$ and $\Ri_{2n-2}$ onto $\mathcal{T}_{2n}^L$ and $\mathcal{T}_{2n-2}^R$, respectively. Here, for locally charged operators in $\Le_{2n}$ and $\Ri_{2n-2}$ (i.e., those with a right component at a distant site), we refer only to the support of the left component.

Indeed, there exists an operator isomorphism $\rho$ given by 
\begin{equation} 
\rho: \mathcal{C}_{2n-1} = \oplus_{\hat{g} \in \hat{G}} (\mO_{2n-1} \otimes \mO_{2n})^{\hat{g}} \otimes V_R^\dagger[\hat{g}] \mapsto \mO_{2n-1} \otimes \mO_{2n} = \oplus_{\hat{g} \in \hat{G}} (\mO_{2n-1} \otimes \mO_{2n})^{\hat{g}}, \quad n \in [k+1, t-1], 
\end{equation} 
where $\mO^{\hat{g}}$ denotes the component with a specific $G$ charge, $\hat{g} \in H^1(G, U(1))$, in a $G$-graded algebra $\mO$. The right endpoint operator $V_R[\hat{g}]$ is a product of operators in ${V_R[j]}$, defined in Eq. \eqref{eq:Grightend}, such that $V_R[\hat{g}]$ carries charge $\hat{g}$. Thus, for $n \in [k+1, t-1]$, we define $\tilde\Le_{2n} = \rho(\Le_{2n})$, which is isomorphic to $\mathcal{T}_{2n}^L$ as both are $d_L \times d_L$ matrix algebras, and similarly, we set $\tilde\Ri_{2n-2} := \rho(\Ri_{2n-2}) \cong \mathcal{T}_{2n-2}^R$. Since the algebras $\tilde\Le_{2n}$ and $\tilde\Ri_{2n-2}$ are now fully supported within $I_{2n-1} \cup I_{2n}$, there exists a local unitary $W_{2n-1}$, supported on $I_{2n-1} \cup I_{2n}$, such that 
\begin{equation} 
\begin{split} W_{2n-1} \tilde\Ri_{2n-2} W_{2n-1}^\dagger &= \mathcal{T}_{2n-2}^R, \\ 
W_{2n-1} \tilde\Le_{2n} W_{2n-1}^\dagger &= \mathcal{T}_{2n}^L, 
\end{split} \end{equation} 
and this construction holds for all $n \in [k+1, t-1]$.

We then apply a generalized translation, denoted by $\mathrm{T}^{-1}$, to shift the algebras $\mathcal{T}_{2n}^L \otimes \mathcal{T}_{2n}^R$ back within the interval $I_{2n} \cup I_{2n+1}$. In particular, we shift each subspace $V_x$ on each site to the right by $q_x - l$ lattice sites. For subspaces with the same prime dimension, note that, under the choice where either all $q_x^{(y)} \in [0, l]$ or all $q_x^{(y)} \in [l, 2l]$, the shifts do not occur in opposite directions. According to Eq.\eqref{eq:supp1G}, we have 
\begin{equation} 
\mathcal{T}_{2n}^L \otimes \mathcal{T}_{2n}^R \cong \tilde\Le_{2n} \otimes \tilde\Ri_{2n} \cong \Le_{2n} \otimes \Ri_{2n} \cong \B_{2n} \cong \mathcal{O}_{2n} \otimes \mathcal{O}_{2n+1}, \label{eq:Gchain} \end{equation} 
where each tensor product in Eq.\eqref{eq:Gchain} is a matrix algebra. Consequently, the transformation \begin{equation} 
\mathrm{T}^{-1} \circ W_{2n-1} \circ W_{2n+1} \circ \rho \circ K^{-1} \circ \alpha \circ att \end{equation} restricts to an automorphism of $\mO_{2n} \otimes \mO_{2n+1}$, which, being an inner automorphism, can be implemented by a local unitary $Q_{2n}^\dagger \in \mO_{2n} \otimes \mO_{2n+1}$ for $n \in [k+1, t-1]$. Thus, we conclude that \begin{equation} W_\Lambda \circ \mathrm{T}^{-1} \circ \rho \circ K^{-1} \circ \alpha \circ att \label{eq
} \end{equation} acts as the identity on $\otimes_{n=k+1}^{t-1} (\mO_{2n}\otimes \mO_{2n+1})$, where \begin{equation} W_\Lambda = \left(\prod_{n=k+1}^{t-1} Q_{2n}\right) \otimes \mathrm{T}^{-1} \left(\prod_{n=k+1}^{t-1} W_{2n-1}\right) \mathrm{T} \end{equation} is a unitary operator with non-trivial action restricted to $[I_{2k}, I_{2t+1}]$. Extending this construction across the entire system, we obtain an FDC acting on the chain, given by \begin{equation} 
W = \left(\prod_n Q_{2n}\right) \otimes \mathrm{T}^{-1} \left(\prod_n W_{2n-1}\right) \mathrm{T}. 
\end{equation} 
Following the same reasoning as in Eqs \eqref{eq:FDCconstruction1} and \eqref{eq:FDCconstruction2}, the transformation $W \circ \mathrm{T}^{-1} \circ K^{-1} \circ \alpha$ acts on the symmetric subalgebra $\B$ as identity.

\emph{Step 5:} Finally, we demonstrate that $W$ is an FDC with $G$-symmetric gates. This follows straightforwardly: {By construction, $K^{-1}\circ \alpha$, and therefore $\mathrm{T} \circ W^\dagger$, implements a trivial anyon permutation (it preserves all string types). Furthermore, as in Eq.~\eqref{eq:primefactors}, we decompose the local Hilbert space such that the symmetry operator factorizes across its prime factors, and $\mathrm{T}$ commutes with the $G$-symmetry and also implements a trivial anyon permutation. As a consequence, the FDC $W$, as a composition of $\mathrm{T}$ and $K^{-1}\circ\alpha$, when acting on $\B$, preserves all string types -- in particular the symmetry string:
\begin{equation}
W^\dagger(S_g) = \mathrm{T}^{-1}\circ K^{-1}\circ \alpha (S_g) = O_L(g) S_g^{-l} O_R(g)^\dagger, \quad \forall g\in G,  
\label{eq:trivialGSPT}
\end{equation}
with $O_L(g)$ and $O_R(g)$ carrying trivial $G$-charge (here we use the same notation as in Prop.~\ref{prop:Gtransform}). This implies that $W^\dagger$ is a $G$-symmetric FDC that entangles a \emph{trivial} $G$-SPT, because when $G$ is finite Abelian, the Künneth formula implies that $H^2(G,U(1))$ is completely specified by the $G$-charge of $O_L(g)$ \cite{2014decorated}, which is trivial. Therefore, $W^\dagger$ is indeed an FDC with $G$-symmetric gates, according to Lemma~\ref{lemma:sFDC}.

}
\end{proof}

Theorem \ref{thm:Gauto} reduces any QCA to a composition of $\mathrm{T}$, $spt$, $out$, and KW, up to an sFDC, which does not alter the index and can therefore be ignored. The possible values of the index can therefore be deduced.  Calculating the index for these operations is straightforward, as they merely translate the generators of the symmetric subalgebra (with $spt$, as defined in Eq.\eqref{eq:spten}, dressing $X_i$ with products of neighboring generators, effectively relabeling without shifting the reference position). The index is then determined by counting the number of common generators in the overlapping region, as defined in Def.\ref{def:ind}.

{\Proposition
For two QCAs defined on $\B$, the index is multiplicative under composition, \ie 
\begin{equation} 
\ind(\alpha \circ \beta) = \ind(\alpha) \ind(\beta). 
\label{eq:composition}
\end{equation}
\label{ind-mul}
}

\begin{proof}
 A rigorous proof of this Proposition will be provided in a forthcoming work \cite{index}, which leverages the connection between the index defined in Def.\ref{def:ind} and the Jones index in von Neumann algebras \cite{2024Jonesindex}. We just mention that  for uQCA, Eq.\eqref{eq:composition} can be demonstrated by stacking two copies of the system and applying $\alpha$ and $\beta$ to each copy independently \cite{2020review}. However, this approach encounters a difficulty in the present case, as the symmetric subalgebra of the stacked system is not simply the tensor product of the symmetric subalgebras of the individual copies, as noted in Remark 2. While for a specific group $G$ one can check Eq. \eqref{eq:composition} manually, the general case requires tools from von Neumann algebra and the proof will be presented in Ref. \cite{index}.
\end{proof}

{\Proposition The index of a QCA defined on the subalgebra $\B$ symmetric under a finite Abelian group $G = \otimes_j \Z_{n_j}$ can take values in 
\begin{equation} 
\ind (\alpha) \in \Big\{ \prod_j (\sqrt{n_j})^{m_j} \prod_x p_x^{q_x} | q_x \in \Z,  m_j = 0,1 \Big\}, \end{equation} 
where $\{ p_x \}$ are the prime factors of $|G| = \prod_j n_j$.  
\label{thm:Gindex}
}

\begin{proof}

As per Theorem \ref{thm:Gauto}, any QCA on $\B$ is a composition of the three elementary transformations defined earlier in this section and a generalized translation, up to an FDC with symmetric gates, which, by Corollary \ref{cor:FDCinvariant}, does not alter the index and can thus be ignored.

We now examine the transformation of the standard generating set of $\B$, which is defined as the union of the standard generating sets of each cyclic subgroup of $G$, under the three elementary transformations and the generalized translation. The KW$_j$ operation defined in Eq.\eqref{eq:ZnKW} can be viewed as a ``half” translation of the standard generating set associated with the $j$-th cyclic subgroup. Meanwhile, $spt$ as defined in Eq.\eqref{eq:spten} and $out$ act by relabeling elements within this set, leaving their spatial positions unchanged. Direct verification shows that both $spt$ and $out$ have index 1, while $\ind(\mathrm{KW}_j) = \sqrt{n_j}$.

Below, we demonstrate that a generalized translation $\mathrm{T}$ has an index $\ind(\mathrm{T}) = \prod_x p_x^{q_x}$. This is achieved by showing that, for any QCA on $\B$ implementing a trivial anyon permutation, the index can be determined from the dimensions of the support algebras defined in Eq.\eqref{eq:definesupport}. Since $\mathrm{T}$ falls within this class of QCA, its index follows directly.

\begin{figure}
\begin{center}
  \includegraphics[width=.45\textwidth]{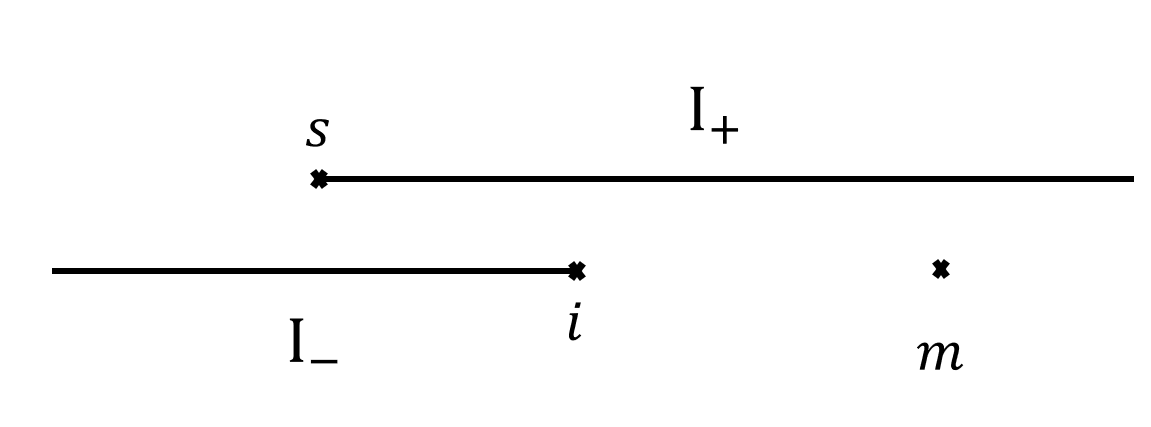} 
\end{center}
\caption{The geometric arrangement of $I_-$, $I_+$, and site $m$.
}
\label{Fig:composition}
\end{figure}

Let $\alpha$ be a QCA with spread $l$ that implements a trivial anyon permutation. Consider the geometry illustrated in Fig. \ref{Fig:composition}, where the distance from site $m$ to $I_-$ exceeds $2(2l+2)^2+2l$, and $m \notin \partial_l I_+$. Let the rightmost site of $I_-$ be site $i$, and denote the standard generating set on $I_-$ by $B$, while the symmetric subalgebra on $I_+$ is denoted as $\B_+$. Define $\B_1$ as the algebra generated by $B$ and $B' = \{ Z_i[j] Z_m[j]^\dagger, \, \forall j \}$. Following the same reasoning as in the proof of Theorem \ref{Thm:global}, we find
\begin{equation}
    \ind(\alpha) = \frac{\eta(\alpha(\B_1),\B_+)}{\eta(\B_1,\B_+)}.
\end{equation}
The right-hand side of this equation can be evaluated as follows. Let the leftmost site of $I_+$, which corresponds to the leftmost site of $I_{2n-1}$ shown in Fig. \ref{Fig:partition}, be denoted by $s$. The algebra $\B_1 \cap \B_+$ is isomorphic to the full operator algebra on the interval $[s,i]$, with a dimension of $|G|^{2(i-s+1)}$. On the other hand, the algebra $\alpha(\B_1) \cap \B_+$ is the tensor product of the support algebras with spatial support in $I_+$, \ie $\{ \Ri_{2a}, \Le_{2b} \mid a \geq n-1, b \geq n \}$. Consequently, by Eq.\eqref{eq:supp1G}, we find
\begin{equation}
    \ind(\alpha) = \frac{\sqrt{\dim(\Ri_{2n-2})}}{\sqrt{|G|^{2l}}},
\end{equation}
where $|G|^{2l}$ in the denominator represents the dimension of the full operator algebra within $I_{2n-1}$. Moreover, for a generalized translation $\mathrm{T}$, the algebra $\Ri_{2n-2}$ is a rank-$d_R=\prod_x p_x^{2l-q_x'}$ matrix algebra, where $\{ p_x \}$ are the prime factors of $|G|$. This establishes that $\ind(\mathrm{T}) = \prod_x p_x^{q_x}$, where $\{ q_x \}$ is a set of integers.
\end{proof}

Furthermore, given the structure of an arbitrary QCA on the symmetric subalgebra as described in Theorem \ref{thm:Gauto}, we claim that the equivalence classes of QCA's are fully characterized by their corresponding anyon permutation and the index.

{\theorem QCA on the $G$-symmetric subalgebra $\B$ are completely classified by their action on string operators and their index. {Specifically, two QCA's are associated with the same element in $\au(\C)$ and have the same index if and only if they differ by an FDC with $G$-symmetric gates.}
\label{thm:completeG}
}

\begin{proof} 
We first prove the ``only if'' direction. From Proposition \ref{ind-mul}, we immediately observe that \begin{equation} \ind(\alpha)\ind(\alpha^{-1}) = 1. \label{eq:indexinverse} 
\end{equation}

For two QCA's, $\alpha$ and $\beta$, that implement the same anyon permutation in $\au(\C)$, Theorem \ref{thm:Gauto} implies that \begin{equation} 
\alpha^{-1}\circ \beta (\B) = W^\dagger \circ \mathrm{T} (\B), 
\end{equation} 
where $\mathrm{T}$ is a generalized translation and $W$ is an FDC with symmetric gates. Additionally, we have 
\begin{equation} \ind(\alpha^{-1} \circ \beta) = \frac{\ind(\beta)}{\ind(\alpha)} = 1, \end{equation} 
indicating that the generalized translation $\mathrm{T}$ must be trivial.\footnote{ Recall that, in the construction in the proof of Theorem \ref{thm:Gauto}, $\mathrm{T}$ does not shift subspaces with the same prime dimension in opposite directions. Consequently, for the index to be trivial, the translation must also be trivial for all subspaces.} Consequently, we find 
\begin{equation} \beta(\B) = W' \circ \alpha(\B), \end{equation} 
where $W' = \alpha(W^\dagger)$ is also an FDC with symmetric gates, since the subgroup of FDC's with symmetric gates is normal within the QCA group on $\B$.  

\add{Now we prove the ``if'' direction. If two QCAs differ by an FDC composed of local symmetric gates, they must act on charge and symmetry strings in the same way, \ie they have the same transformation rules $\sigma$, $\mu$, $\gamma$, and $\nu$ as in Prop.~\ref{prop:Gtransform}. Due to the correspondence between string operators and anyons in $\C$, the two QCAs are associated with the same element in the anyon permutation symmetry $\au(\C)$. They also have the same index by Corollary~\ref{cor:FDCinvariant}.
}
\end{proof}

We note that, even without relying on the multiplicativity of the index under composition (\ie Proposition \ref{ind-mul}), whose rigorous proof requires more advanced mathematical tools, it can still be deduced that the index and the anyon permutation fully specify the QCA on $\B$. Specifically, if $\alpha$ and $\beta$ correspond to the same element in $\au(\C)$ and satisfy $\ind(\alpha^{-1} \circ \beta) = 1$, then they differ only by an FDC with $G$-symmetric gates.

Theorem \ref{thm:completeG} is one of the main results of this work. It states that for a chain with a regular representation of a finite Abelian group at each site, the group of QCA's defined on the symmetric subalgebra is completely classified by (1) the index, which quantifies the flow of the subalgebra, and (2) a homomorphism to $\au(\C)$, which describes how a QCA permutes anyons. The kernel of both homomorphisms is exactly the group of $G$-symmetric FDC's.

Before concluding this section, we present an additional result: any QCA on the symmetric subalgebra can be represented by a matrix product operator (MPO) with finite bond dimension. {MPO representations of the KW transformation have been constructed for $\Z_2$ in Ref. \cite{2024measurespt,2024Seibergshaoseif} and for a general finite $G$ in Ref.~\cite{2025GKW},} and their existence is naturally expected due to the correspondence between fusion categorical symmetries in 1D and anyons in a 2D topological order (specifically, the Drinfeld center of the fusion category \cite{2005stringnet}), which can be represented as an MPO at the virtual level of the 2D tensor network describing the topological order \cite{2020MPSreview,2023MPOsymmetry,2010virtual}, we here provide a proof of this existence from a purely 1D perspective.

{\Proposition Any QCA on $\B$, considered as an operator on the Hilbert space $\otimes_i \Hi_i$, can be represented by an MPO with finite bond dimension.}

\emph{Proof:} According to Theorem \ref{thm:Gauto}, any QCA on $\B$ can be decomposed as a composition of the transformations $spt$, $out$, $\mathrm{T}$, $W$, and the KW transformations for each cyclic subgroup. The first four transformations are all unitary QCAs and can be represented by matrix product unitaries \cite{2004QCA,2017MPUCirac}, so we only need to show that the KW transformation also admits an MPO representation. Let us focus on a single $\Z_n$ subgroup within $G$.

Consider a system of size $L$ with periodic boundary conditions, where each site hosts an $n$-dimensional qudit in the regular representation of $\Z_n$. We attach an additional $n$-dimensional ancilla qudit to each site, creating a copy of the original system denoted as $\otimes_i \Hi_i'$. Define the maximally entangled state between the ancilla and the physical system as $| \Phi \ra$: 
\begin{equation} | \Phi \ra = \otimes_i \left(\frac{1}{\sqrt{n}} \sum_{s_i=1}^n | s_i \ra \otimes | s_i \ra'\right), 
\end{equation} 
where we have 
\begin{equation} X_i |s_i \ra = e^{i2\pi \frac{s_i}{n}} |s_i \ra, \quad X_i' |s_i \ra' = e^{i2\pi \frac{s_i}{n}} |s_i \ra'. \end{equation}

Consider an operator $D$ acting on $\otimes_i \Hi_i$ that implements the $\Z_n$ KW duality. Specifically, we define $D$ as the operator satisfying: \begin{equation} D O = \mathrm{KW}(O) D, \quad \forall O \in \B. \end{equation} The operator $D$ can be represented as an MPO if and only if $|\Psi \rangle = D \otimes \mathbf{1}' | \Phi \rangle$ is a matrix product state (MPS), where $\mathbf{1}'$ is the identity operator acting on the ancilla $\otimes_i \Hi_i'$. This holds because $| \Psi \rangle$ is exactly the Choi state of $D$.

Notice that the maximally entangled state $|\Phi\rangle$ is stabilized by the following commuting operators:
\begin{equation}
    X_i X_i'^\dagger | \Phi\ra = | \Phi \ra,\quad Z_i Z_i' | \Phi\ra = | \Phi \ra.
\end{equation}
Therefore, applying the transformation rule of the $\Z_n$ KW transformation in Eq.\eqref{eq:ZnKW}, we obtain: 
\begin{equation} 
\begin{split} 
& Z_i^\dagger Z_{i+1} X_i'^\dagger | \Psi \rangle = | \Psi \rangle, \\ 
& X_{i+1} Z_i'^\dagger Z_{i+1}' | \Psi \rangle = | \Psi \rangle, 
\end{split} \end{equation} 
for all $i$. Consequently, $| \Psi \rangle$ is the unique gapped ground state of a local, commuting Hamiltonian $H_\Psi$, given by 
\begin{equation} 
H_\Psi = \sum_i -Z_i^\dagger Z_{i+1} X_i'^\dagger - X_{i+1} Z_i'^\dagger Z_{i+1}' +\hc, 
\end{equation} 
namely, $|\Psi\rangle$ is stabilized by $2L$ independent stabilizers, each of order $n$. Therefore, $| \Psi \rangle$ can be represented as an MPS with finite bond dimension \cite{2006MPSrep}. Extending this argument to all cyclic subgroups of $G$ completes the proof of the Proposition. \qed

\section{Example: The $\Z_2\times \Z_2$ symmetric subalgebra}
\label{sec:Z2Z2}

In this section, we provide an explicit example by considering a qubit chain with a $\Z_2^e \times \Z_2^o$ symmetry generated by \begin{equation} 
\eta^e = \prod_{i:\mathrm{even}} X_i,\quad \eta^o = \prod_{i:\mathrm{odd}} X_i, \end{equation} 
which act on all even and all odd sites, respectively. We study QCA's defined on $\B$, the $\Z_2^e \times \Z_2^o$ symmetric subalgebra, generated by the set $\{ Z_{i} Z_{i+2}, X_i \}$ for all sites $i$ on the lattice.

As stated in Theorem \ref{thm:completeG}, QCA's on $\B$ are completely classified by their action on string types and their index. We now focus on the action on string operators, or equivalently, to how QCA's permute anyons, corresponding to an element in $\au(\C)$. Here, the bulk topological order $\C$ is simply a $\Z_2 \times \Z_2$ gauge theory, \ie two copies of the toric code, with anyons given by $\{ e_1, m_1, e_2, m_2 \}$ and their compositions (where $e_1$ and $m_1$ are the gauge charge and flux associated with $\Z_2^e$, and $e_2$ and $m_2$ are those of $\Z_2^o$). It is known that, in this case, $\au(\C) = (S_3 \times S_3) \rtimes \Z_2$.

\begin{figure}
\begin{center}
  \includegraphics[width=.40\textwidth]{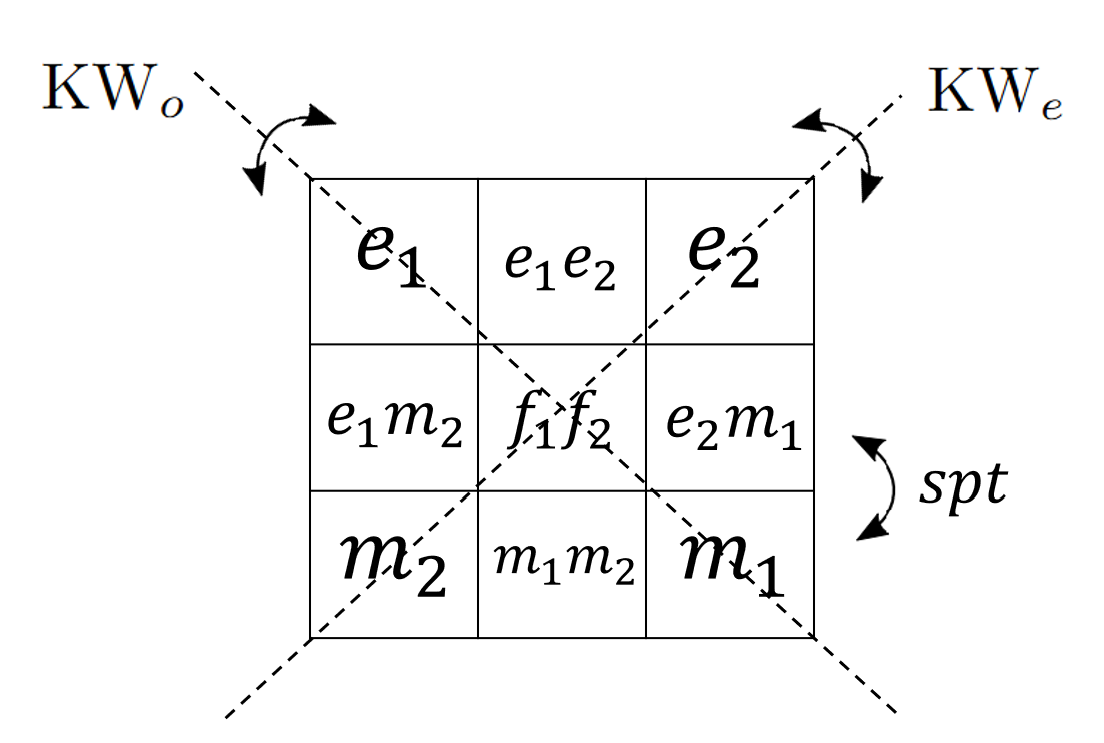} 
\end{center}
\caption{The actions of QCA on the table of bosons. The $out$ operation implements the $S_3$ permutation group on the three columns.}
\label{Fig:anyon}
\end{figure}

A clear way to understand this group of braided auto-equivalences is to examine the symmetry of the boson table in the 2D $\Z_2 \times \Z_2$ gauge theory \cite{2018colorcode}, as illustrated in Fig \ref{Fig:anyon}. The nine nontrivial bosons in this theory are arranged such that the following properties hold: (1) The product of any three bosons in the same row or column is the identity; (2) the braiding of bosons within the same row or column yields 1, and it gives -1 otherwise. Within $\au(\C)$, the two $S_3$ subgroups arise from permuting rows and columns, respectively, while the $\Z_2$ factor corresponds to transposing the boson table.

Now we list the three elementary transformations discussed in the previous section for the example of $G = \Z_2 \times \Z_2$:

\begin{itemize} 
\item The KW transformations: There are two KW transformations associated with the $\Z_2^e$ and $\Z_2^o$ symmetries, respectively. We denote them as KW$_e$ and KW$_o$.
\item $spt$: In 1D, there is a nontrivial SPT phase protected by $\Z_2^e \times \Z_2^o$ symmetry, specifically the cluster chain \cite{2001cluster,2020MPSreview,2013SPTChen}, labeled by the nontrivial element of $H^2(\Z_2 \times \Z_2, \U) = \Z_2$. The entangler $spt$ acts on local operators as
\begin{equation}
    spt: \, X_i \leftrightarrow Z_{i-1} X_i Z_{i+1}.
\end{equation}

\item $out$: The group of outer automorphisms of $\Z_2 \times \Z_2$ is $S_3$, corresponding to the permutation of the three non-identity elements in $\Z_2 \times \Z_2$.

\end{itemize}

The action of these three elementary transformations on anyons (equivalently, on the string types) is also depicted in Fig \ref{Fig:anyon}. Together, they generate the entire group of braided auto-equivalence of the $\Z_2 \times \Z_2$ gauge theory.

We now turn our attention to a specific class of QCA's on $\B$, aiming to connect with the concept of non-invertible symmetries \cite{2015GGS,2023ShaoTasi,2024noninvICTP}, an area of active study. Specifically, we consider a QCA $\alpha$ that meets the following conditions in its action on the symmetric subalgebra:
\begin{itemize}

\item  Under $\alpha$, the image of every pure charge string includes a nontrivial symmetry string. Equivalently, the homomorphism $\gamma$ in Prop.\ref{prop:Gtransform} maps each nontrivial element of $\hat{G}$ to a nontrivial element of $G$.

\item We require that $\alpha^2$ preserves all string types, \ie $\alpha^2$ implements a trivial element in $\au(\C)$.
\end{itemize}
Hereafter, we refer to these two requirements as the ``TY-conditions."

The motivation for considering QCAs of this form stems from the fusion rules of the $\Z_2^e \times \Z_2^o$ Tambara-Yamagami (TY) fusion category \cite{tambara1998tensor}, where a non-invertible transformation $\mathrm{D}$ and the symmetry operators obey the following algebra:
\begin{equation}
    \begin{split}
        & \eta^e \mathrm{D} = \mathrm{D} \eta^e = \eta^o \mathrm{D} = \mathrm{D} \eta^o = \mathrm{D}, \\
        & \mathrm{D}^2 = 1 + \eta^e + \eta^o +\eta^e\eta^o.
    \end{split}
    \label{eq:TYfusion}
\end{equation}
If we interpret $\mathrm{D}$ as a QCA acting on the symmetric subalgebra, which is also a locality-preserving transformation (though not necessarily a QCA) on the full local operator algebra, then the first line in Eq.\eqref{eq:TYfusion} indicates that $\mathrm{D}$ annihilates all charged local operators. (Physically, this operation corresponds to ``gauging.") However, if a charge string associated with an element $\hat{g} \in \hat{G}$ is transformed into a purely charge string, such that
\begin{equation} 
\alpha[V^L_{\hat{g}} (V^R_{\hat{g}})^\dagger] = O^L_{\hat{k}} (O^R_{\hat{k}})^\dagger, \end{equation}
where $V_{\hat{g}}$ and $O_{\hat{k}}$ are local unitary operators carrying charges $\hat{g}$ and $\hat{k}$, respectively, then one can naturally define the transformation of local charged operators by $\alpha(V^L_{\hat{g}}) = O^L_{\hat{k}}$ without being annihilated. Furthermore, any operator $\mathrm{D}$ that squares to the identity on the symmetric subalgebra must satisfy the second TY condition.

It is known that there are four TY fusion categories satisfying the fusion rule in Eq.\eqref{eq:TYfusion}, denoted by $\mathrm{TY}(\Z_2 \times \Z_2, \chi, \epsilon)$ \cite{2019Thorngren}. Here, $\chi: \Z_2\times\Z_2\rightarrow \U$ represents a symmetric, non-degenerate bicharacter, and $\epsilon = \pm 1$ denotes the Frobenius-Schur (FS) indicator. For group-like symmetries, it is established that the topological data—\ie the anomaly—can be derived from the symmetry operator \cite{2014elsenayak,2024Kapustin}, or equivalently, from its action on the local operator algebra, independent of the Hamiltonian. A natural question arises: can a similar analysis be extended to non-invertible symmetries? Specifically, how can we define $\chi$ and $\epsilon$ based on the assumption that $\alpha$ is a QCA on the symmetric subalgebra $\B$ satisfying the two TY conditions, such that their properties align with their counterparts as defined in field theory \cite{2019Thorngren}?

In the following, we provide an algebraic definition of the bicharacter purely based on the transformation of string operators under the QCA $\alpha$, independent of any reference to a symmetric lattice Hamiltonian. We show that it must be symmetric and non-degenerate under the TY conditions. Additionally, we present an example where the FS indicator is derived solely from the transformation of the symmetric subalgebra under a QCA.

We define the bicharacter associated with a QCA on $\B$ as follows:

\begin{equation} S_{\hat{h}}^\dagger \alpha(S_{\hat{g}}) S_{\hat{h}} = \chi({\hat{g}},{\hat{h}})\alpha(S_{\hat{g}}), \label{eq:defbicharac} 
\end{equation}
where $S_{\hat{h}}$ and $S_{\hat{g}}$ (with $\hat{g},\hat{h} \in H^1(G, \U)$) are charge strings of sufficiently large length such that $\partial_l(S_{\hat{g}}) \cap \partial_l(S_{\hat{h}}) = \varnothing$. The relative positioning of these two string operators is illustrated in Fig.\ref{Fig:braiding}. Effectively, $\chi$ quantifies the commutation relation between a charge $\hat{h}$ and the symmetry operator $\gamma(\hat{g})$ as described in Prop. \ref{prop:Gtransform}, and is denoted as:
\begin{equation} \chi({\hat{g}},{\hat{h}}) = \gamma(\hat{g})[\hat{h}], 
\end{equation}
representing the charge of $\hat{h}$ under the symmetry element $\gamma(\hat{g})$. From the results in Prop. \ref{prop:Gtransform}, it follows that:
\begin{equation} \begin{split} & \chi({\hat{g}},\hat{h}_1 \hat{h}_2) = \chi(\hat{g},\hat{h}_1) \chi(\hat{g},\hat{h}_2), \\ & \chi(\hat{g}_1 \hat{g}_2, \hat{h}) = \chi(\hat{g}_1, \hat{h}) \chi(\hat{g}_2, \hat{h}), \end{split} \label{eq:bicharacterc} 
\end{equation}
since $\gamma$ is a homomorphism from $\hat{G}$ to $G$. Equation \eqref{eq:bicharacterc} matches the standard definition of a bicharacter for a finite Abelian group $G\cong \hat{G}$.

Applying $\alpha$ to both sides of Eq.\eqref{eq:defbicharac}, we obtain \begin{equation} 
\alpha(S_{\hat{h}})^\dagger S_{\hat{g}}' \alpha(S_{\hat{h}}) = \chi({\hat{g}},{\hat{h}}) S_{\hat{g}}', 
\end{equation} where $S_{\hat{g}}' = \alpha^2(S_{\hat{g}})$. Since $\alpha^2$ preserves all string types by the TY condition, $S_{\hat{g}}'$ must again correspond to a charge string associated with ${\hat{g}}$. Consequently, we have 
\begin{equation} 
S_{\hat{g}}' \alpha(S_{\hat{h}}) (S_{\hat{g}}')^\dagger = \chi({\hat{g}},{\hat{h}}) \alpha(S_{\hat{h}}). 
\end{equation} 
For $G = \Z_2\times \Z_2$, since all non-identity elements in $H^1(G,U(1))$ are of order 2, it follows that $\chi({\hat{g}},{\hat{h}}) = \chi({\hat{h}},{\hat{g}})$, indicating that the bicharacter is symmetric. Finally, the first TY condition implies that for any ${\hat{g}} \in {\hat{G}}$, there exists at least one ${\hat{h}} \in {\hat{G}}$ such that
\begin{equation} \chi({\hat{g}},{\hat{h}}) \neq 1, \end{equation}
which ensures that $\chi$ is non-degenerate.

For $\hat{G} \cong G = \Z_2 \times \Z_2$, it is known that there exist two symmetric, non-degenerate bicharacters \cite{2019Thorngren}. Let $\hat{a}$ and $\hat{b}$ denote the elements in $\hat{G}$ with nontrivial charge under $\Z_2^e$ (while neutral under $\Z_2^o$) and under $\Z_2^o$ (while neutral under $\Z_2^e$), respectively.

\begin{itemize} 
\item The first possibility is $\chi^{(1)}(\hat{a},\hat{b}) = -1$ and $\chi^{(1)}(\hat{a},\hat{a}) = \chi^{(1)}(\hat{b},\hat{b}) = 1$. A concrete realization of this bicharacter is found in the fusion category Rep(D$_8$) $= \mathrm{TY}(\Z_2 \times \Z_2, \chi^{(1)}, \epsilon = 1)$, where the non-invertible operator  in Eq.\eqref{eq:TYfusion} is given by 
\begin{equation} \mathrm{D}_1 = \mathrm{T}^{-1} \circ \mathrm{KW}_e \circ \mathrm{KW}_o \circ\mathcal{P}, 
\label{eq:repD8}
\end{equation} 
where $\mathrm{T}$ represents a rightward lattice translation by one site,\footnote{In terms of the three elementary operations defined in Sec.\ref{sec:finiteabelian}, $\mathrm{T}$ is a SWAP between sites $2n$ and $2n+1$, an element of $\mathrm{Out}(\Z_2 \times \Z_2)$, followed by a two-site translation of all odd sites.} and $\mathcal{P}$ is the projector to the $\Z_2\times\Z_2$ invariant subspace. On the generators of the symmetric algebra $\B$, its action simply exchanges $X_i$ and $Z_{i-1} Z_{i+1}$. Computing the index of $\mathrm{D}$ gives \begin{equation} 
\ind(\mathrm{D}_1) = 1, 
\end{equation} 
which provides a quantitative characterization of the fact that the non-invertible operator in the Rep(D$_8$) fusion category does not mix with lattice translation \cite{2024seifshao}.

\item The second possibility is $\chi^{(2)}(\hat{a},\hat{a}) = \chi^{(2)}(\hat{b},\hat{b}) = -1$ and $\chi^{(2)}(\hat{a},\hat{b}) = 1$, which is realized in the fusion category Rep(H$_8$) $= \mathrm{TY}(\Z_2 \times \Z_2, \chi^{(2)}, \epsilon = 1)$. On lattice, such a bicharacter is realized by the following QCA: 
\begin{equation}
    \mathrm{D}_2 = \mathrm{KW}_e \circ \mathrm{KW}_o\circ \mathcal{P},
\end{equation}
which implements the transformation $X_i \mapsto Z_i Z_{i+2} \mapsto X_{i+2}$ on the standard generating set of the symmetric subalgebra $\B$. A direct computation yields
\begin{equation}
    \ind(\mathrm{D}_2) = 2.
\end{equation}
Notice that ${\rm D}_2$ mixes with translation, so its algebra is different from the TY fusion rule. 

\end{itemize}

Finally, we outline a first step towards a purely algebraic definition of the FS indicator based on the action of $\alpha$ on $\B$. Consider two QCA’s, $\alpha$ and $\beta$, on $\B$ that satisfy the TY condition. If they further possess the same symmetric, non-degenerate bicharacter (and thus implement the same anyon permutation in $\au(\C)$) and have the same index, then by Theorem \ref{thm:completeG}, their actions on $\B$ differ only by an FDC with symmetric gates, $W$, so that \begin{equation} 
\alpha = W \circ \beta. 
\end{equation} 
If we additionally require that $\alpha^2$ and $\beta^2$ act identically on $\B$, namely, $\alpha^2 = \beta^2$, it follows that 
\begin{equation} 
W \circ W' \circ \beta^2(O) = \beta^2(O),\quad \forall O\in\B, 
\end{equation} 
where $W' = \beta(W)$, which is again an FDC with symmetric gates. This relation indicates that the FDC $WW'$ must be a symmetry operator associated with some element $g \in \Z_2^e \times \Z_2^o$. As an FDC, we define the spread of $WW'$ as $l'$. Consequently, we can truncate $WW'$ to a finite region $\Lambda$ (with a length much larger than $l'$) by retaining gates with nontrivial support in $\Lambda$ and removing the others. We denote this truncated circuit by $U_\Lambda$. Two observations are useful: (1) Deep inside the region $\Lambda$ (i.e., away from $\partial_{l'} \Lambda$, the $l'$ boundary of $\Lambda$), $U_\Lambda$ behaves as $WW'$, and thus acts as the symmetry operator associated with $g$. (2) Deep outside $\Lambda$, $U_\Lambda$ acts as the identity by construction. As a result, we have 
\begin{equation} U_\Lambda = O_L S_g(\Lambda^{-l'}) O_R^\dagger, \end{equation} 
where $O_L$ and $O_R$ are unitary operators supported within a distance $l'$ from the left and right endpoints of $\Lambda$, respectively, and $S_g(\Lambda^{-l'})$ is the symmetry operator associated with $g$, restricted to the interior of $\Lambda$. Since $WW'$ consists of symmetric gates, the truncated circuit $U_\Lambda$ is also symmetric. Thus, the boundary operators $O_L$ and $O_R$ carry a well-defined charge $\lambda \in H^1(\Z_2 \times \Z_2, U(1))$, and their charges must be identical.

As an illustrative example, we show that, in certain cases, the FS indicator of the TY category can be determined (or exchanged) by the sFDC $W$. Specifically, starting from the fusion category $\text{Rep}(\mathrm{D}_8)$, we can obtain the fusion category $\text{Rep}(\mathrm{Q}_8) = \mathrm{TY}(\mathbb{Z}_2 \times \mathbb{Z}_2, \chi^{(1)}, \epsilon = -1)$—distinguished by a different FS indicator—by composing the non-invertible operator $\mathrm{D}$ in Eq.\eqref{eq:repD8} with an FDC $W$, chosen such that the spatial truncation of $WW'$, $U_\Lambda$, satisfies the following conditions: (1) the symmetry string $S_g(\Lambda^{-l'})$ of the truncated circuit $U_\Lambda$ is associated with the diagonal $\Z_2$ element of $\Z_2^e\times \Z_2^o$; (2) $O_L$ and $O_R$ carry nontrivial charges under both $\Z_2^e$ and $\Z_2^o$. As shown in Appendix \ref{app:Rep}, the new operator $\mathrm{D}' = W \circ \mathrm{D}$, together with the symmetry operators, indeed forms the Rep$(\mathrm{Q}_8)$ fusion category.

A lattice realization of the non-invertible operator $\mathrm{D}'$ of a Rep($\mathrm{Q}_8$) fusion category is given by~\cite{li2024non}
\begin{equation} \mathrm{D}' = W \circ \mathrm{T}^{-1} \circ \mathrm{KW}_e \circ \mathrm{KW}_o, 
\end{equation} 
where $W = \prod_i \sqrt{Z_{i-1} X_i Z_{i+1}}$. One can verify explicitly that the truncated unitary $U_\Lambda$ takes the form \begin{equation} U_\Lambda = Z_{m-1} Z_m \left(\prod_{j=m}^{n} X_j \right) Z_n Z_{n+1},
\end{equation}
which satisfies the conditions for the sFDC discussed in the preceding paragraph. This serves as an illustrative example, while a systematic definition and exploration of the FS indicator from the perspective of operator algebras is deferred to future studies.

\section{Discussions}
\label{sec:discussion}
In this work, we have systematically explored the definition, classification, and characterization of quantum cellular automata  on symmetric subalgebras in 1D spin systems, where each site carries a regular representation of a finite Abelian group $G$. We achieved a complete classification framework based on two key components: an index that quantifies the flow of the subalgebra, and the transformation rules of string operators, which correspond to anyon permutation symmetries in a 2D $G$ gauge theory. Notably, we have demonstrated that two QCAs correspond to the same anyon permutation and share the same index if and only if they differ by a $G$-symmetric finite-depth circuit.

Several open questions remain for future investigation:

\noindent\emph{General Symmetric Subalgebras:} It is natural to consider extending our results to more general symmetric subalgebras, which can be tentatively categorized into two classes: \begin{itemize} \item First, one might begin with an ordinary spin Hilbert space at each site, where the local operator algebra is a matrix algebra. We then consider the subalgebra of the full local operator algebra that is symmetric under a categorical symmetry or matrix product operator (MPO) symmetry forming a fusion category \cite{2023MPOsymmetry}. Intuitively, QCAs defined on this symmetric subalgebra should again be characterized by a mapping to an anyon permutation symmetry in a two-dimensional topological order, \ie the Drinfeld center of the fusion category, combined with an index that parametrizes the information flow. How can we make this intuition more precise? Do these considerations provide a complete classification of QCAs on this subalgebra?

\item Second, consider the case of an anyon chain \cite{2007anyonchain,2009anyonchain}, where the Hilbert space at each site is not a spin degree of freedom but an object $X$ in a fusion category. Consequently, the local operator algebra is no longer a tensor product of local matrix algebras (possibly with an additional symmetry constraint), but rather the endomorphism algebra of tensor powers of $X$. Progress has been made in investigating QCAs defined in this setting \cite{2023JonesDHR,2024JW}, and in particular, a generalization of the GNVW index based on the Jones index from subfactor theory has been proposed \cite{2024Jonesindex}. However, a comprehensive understanding of the full landscape of such QCAs remains lacking. Specifically, what set of data can completely classify QCAs defined on an anyon chain? Moreover, in this context, what is the relationship between the index based on the Jones index and the one proposed in our work, which is based on the overlap of two local subalgebras? We will address some of these issues in a forthcoming work \cite{index}.

\end{itemize}

\noindent \emph{Anomalies of Non-Invertible Symmetries:} For any group-like symmetry implemented by unitary QCAs on a quantum spin chain, one can define an anomaly index that takes values in the 3rd group cohomology \cite{2014elsenayak,2024Kapustin}. This anomaly index is purely kinematic, \ie it does not depend on the Hamiltonian, and whenever nontrivial, it prevents any group-symmetric Hamiltonian from supporting a unique gapped ground state. It is known that non-invertible symmetries can similarly impose strong constraints on the low-energy phase diagram of any local symmetric Hamiltonian, leading to anomalies or Lieb-Schultz-Mattis-type obstructions \cite{2024seibergshao,2024Seibergshaoseif,2023ShaoTasi}. Currently, most insights into the anomalies associated with non-invertible symmetries are framed within the paradigm of symmetry topological field theory—where a 1D system with (categorical) symmetry is viewed as the boundary of a 2D topological order described by the Drinfeld center. A natural question thus arises: can we define and characterize the anomalies of non-invertible symmetries from an operator algebra perspective, analogous to the approach used for invertible group-like symmetries \cite{2024Kapustin}?

\noindent\emph{Higher-dimensional Lattice Systems:} Significant progress has been made in recent years on the classification of unitary QCAs in spatial dimensions greater than 1 \cite{Haah:2019fqd, 2020FH,2023Haah}, with nontrivial examples in 3D now constructed \cite{2023FHH}. A natural and intriguing question arises: when considering a symmetric subalgebra in higher dimensions, how does one classify QCAs defined on such a subalgebra? For example, there is a generalization of the KW duality to 3D qubit lattice, known as the Wegner duality (for a recent discussion, see \cite{Gorantla:2024ocs}), which can be viewed as a QCA on the subalgebra symmetric under a $\Z_2$ 1-form symmetry. It would be interesting to investigate the classification of such QCAs for 1-form symmetric subalgebras.  

\begin{acknowledgements}
We thank Lukasz Fidkowski, Michael Levin, Laurens Lootens, Da-Chuan Lu, Shinsei Ryu, Daniel Spiegel, Alex Turzillo and Ruben Verresen for their insightful discussions. RM offers special gratitude to Corey Jones for his invaluable mathematical insights and to Zongping Gong for his guidance in the proof of Lemma \ref{lemma:sFDC}. MC is supported by NSF grant DMR-2424315. RM is supported in part by the Simons Investigator grant (990660) and by the Simons Collaboration on Ultra-Quantum Matter, which is a grant from the Simons Foundation (651442). YL is supported by the U.S. National Science Foundation under Grant No. NSF DMR-2316598.
\end{acknowledgements}

\appendix

\section{Connection to fermionic QCA}
\label{app:fermionize}
In this appendix, we provide a more direct illustration of the connection between QCA's on the $\Z_2$ symmetric subalgebra and those on a Majorana chain. The approach is similar to that in the proof of Theorem \ref{thm:Z2QCA}. However, here we select a subalgebra of the symmetric subalgebra $\B$ that mirrors the structure of a \emph{fermionic} operator algebra. In other words, we demonstrate how to fermionize a $\Z_2$ symmetric subalgebra.

Specifically, in Step 1 of the proof of Theorem, for each single-site $Z$ operator in the unit cells $I_{2n} \cup I_{2n+1}$, we now define an operator \begin{equation} 
\psi_i = Z_i (\otimes_{j = i+1}^{m-1} X_j ) Z_m, \label{eq:fermion} 
\end{equation} 
where $m - i > 2(2l+2)^2+2l$. The set of all such operators with $i \in I_{2n} \cup I_{2n+1}$ is denoted by $F_{2n}$, referred to as the fermionic operators. The reason for this terminology is clear: operators within this set obey a fermionic commutation relation, specifically, $\psi_i$ and $\psi_j$ for $i \neq j$ anti-commute. Additionally, we denote the set of single-site $X$ operators in $I_{2n} \cup I_{2n+1}$ as $A_{2n}$. We make the following observations:
\begin{itemize}
    \item The algebra $\B_{2n} = \langle A_{2n}, F_{2n} \rangle$ forms a subalgebra of the symmetric subalgebra $\B$, thus undergoing a well-defined transformation under $\alpha$. For each generator of $\B_{2n}$, we can define a \emph{local} fermion parity by examining the portion supported within $I_{2n} \cup I_{2n+1}$: operators in $A_{2n}$ are classified as locally even, while those in $F_{2n}$ are classified as \emph{locally} odd. By construction, any operator in $\B_{2n}$ can therefore be expressed as a linear combination of locally even and locally odd operators. Hereafter, we refer to the locally even operators (\ie those without a long tail) as bosonic operators.
    
    \item The algebra $\B_{2n}$ is isomorphic to a fermionic algebra. Specifically, consider a fermionic chain where each site has a two-dimensional Hilbert space, assumed to be uniform across the chain. This local Hilbert space is a $\Z_2$-graded Hilbert space, expressible as a direct sum of fermion parity even and odd components: \begin{equation} \Hi_{i} = \Hi_i^e \oplus \Hi_i^o, \end{equation} with each component having dimension 1. Accordingly, any operator $O$ also decomposes into fermion parity even and odd parts: \begin{equation} O = O^e \oplus O^o, \end{equation} where the even part $O^e$ connects states with the same fermion parity, while $O^o$ transitions between the two subspaces. Importantly, the algebra $\B_{2n}$ is exactly isomorphic to the operator algebra on the fermionic chain within the interval $I_{2n} \cup I_{2n+1}$. Specifically, two fermionic operators defined in Eq. \eqref{eq:fermion} at distinct sites satisfy the anti-commutation relation: \begin{equation} 
    \{ \psi_i, \psi_j \} = 0, \, \forall i \neq j, \label{eq:anticommute} 
    \end{equation} 
    and a fermionic operator commutes with a bosonic operator when they are supported on distinct sites: 
    \begin{equation} [\psi_i, O] = 0, 
    \label{eq:commute} 
    \end{equation} 
    where $O$ is a bosonic operator in $\B_{2n}$. Since $\psi_i$ has support over a large interval, ``disjoint supports” here means $i \notin \text{Supp}(O)$. More generally, for two operators $O_1$ and $O_2$ in $\B_{2n}$ with well-defined $\Z_2$ parity in $I_{2n} \cup I_{2n+1}$, we define the graded commutator as \cite{2019fQCA}: 
    \begin{equation} [O_1, O_2]_g = O_1 O_2 - (-1)^{|O_1|\cdot|O_2|} O_2 O_1, 
    \label{eq:gradedcommutator} 
    \end{equation} 
    where $|\cdot|$ denotes the $\Z_2$ parity within $I_{2n} \cup I_{2n+1}$: 0 for bosonic operators and 1 for fermionic ones. Operators $O_1$ and $O_2$ are said to ``graded commute” if $[O_1, O_2]_g = 0$. Eqs. \eqref{eq:anticommute} and \eqref{eq:commute} thus imply that operators supported on distinct sites graded commute.
\end{itemize}
We then repeat this construction for nearby unit cells $[I_{2k}, I_{2t+1}]$ ($k+3 < n < t-3$), all of which are sufficiently distant from site $m$. In this manner, we obtain a sequence of algebras, denoted $\B_{2k}, \dots, \B_{2t}$. By construction, fermionic operators within distinct $\B_{2n}$'s anti-commute with each other. Therefore, the algebra generated by this sequence, \begin{equation} \mathcal{M} = \langle \B_{2k}, \dots, \B_{2t} \rangle, \end{equation} is isomorphic to the full operator algebra supported on the interval $[I_{2k}, I_{2t+1}]$ on a fermionic chain. Moreover, $\mathcal{M}$ includes the $\Z_2$ symmetric subalgebra supported within $[I_{2k}, I_{2t+1}]$ as its bosonic part.

We now construct another sequence of algebras, which can be regarded as the ``image" of the preceding algebra sequence under the QCA $\alpha$. According to Prop.\ref{prop:Z2auto}, the fermionic operators are always mapped to other fermionic operators, specifically, \begin{equation} \alpha(\psi_i) = V_L S_X^{-l} V_R, \end{equation} where $V_L$ and $V_R$ are $\Z_2$-odd unitaries supported on $[i-l,i+l]$ and $[m-l,m+l]$, respectively, each chosen to be Hermitian. Since all fermionic operators defined above share the right endpoint operator $Z_m$, Prop.\ref{prop:Z2auto} implies that the right endpoint component of $\alpha(\psi_i)$, namely $V_R$, is identical for all these operators. This observation motivates the definition of a new set of fermionic operators as 
\begin{equation} 
\psi'_i = Z_i \Big(\prod_{j=i+1}^{m-l-1} X_j\Big) V_R, 
\end{equation} where the right component is now the transformed operator. We then define $\mathcal{C}_{2n-1} = \langle A_{2n-1}, F_{2n-1} \rangle$, where $A_{2n-1}$ denotes the set of single-site $X$ operators on $I_{2n-1} \cup I_{2n}$, and $F_{2n-1} = \{ \psi'_i | i \in I_{2n-1} \cup I_{2n} \}$. This construction is similarly repeated for nearby unit cells $[I_{2k-1}, I_{2t+2}]$.

By construction, the algebras in the sequence $\mathcal{C}_{2k-1}, \dots, \mathcal{C}_{2t+1}$ exhibit the following properties: (1) they graded-commute with each other; and (2) the algebra generated by this sequence, 
\begin{equation} 
\mathcal{N} = \langle \mathcal{C}_{2k-1}, \dots, \mathcal{C}_{2t+1} \rangle, 
\end{equation} 
is isomorphic to the algebra of all operators on a fermionic chain, supported within the interval $[I_{2k-1}, I_{2t+2}]$. Moreover, it contains the $\Z_2$ symmetric subalgebra within $[I_{2k-1}, I_{2t+2}]$ as its bosonic part.

The remaining steps in the proof of Theorem \ref{thm:Z2QCA} proceed in a similar manner. Since $\alpha(\B_{2n}) \subset \C_{2n-1} \otimes \C_{2n+1}$ (where $\otimes$ here denotes the graded tensor product, meaning that fermionic operators on distinct sites anti-commute), we can define the support algebras of $\alpha(\B_{2n})$ within $\mathcal{C}_{2n-1}$ and $\mathcal{C}_{2n+1}$, respectively: \begin{equation} 
\begin{split} \Le_{2n} & = \sa(\alpha(\B_{2n}), \mathcal{C}_{2n-1}), \\ 
\Ri_{2n} & = \sa(\alpha(\B_{2n}), \mathcal{C}_{2n+1}), 
\end{split} \end{equation} 
for $n \in [k, t]$. A geometric illustration of these algebras is shown in Fig. \ref{Fig:partition}. By the same argument as that presented in the proof of Theorem \ref{thm:Z2QCA} and Appendix A of Ref.\cite{2019fQCA}, these support algebras have the following properties:
\begin{itemize} 
\item We have 
\begin{equation} \begin{split} 
&\alpha(\B_{2n}) = \Le_{2n} \otimes \Ri_{2n}, \\ &\mathcal{C}_{2n-1} = \Ri_{2n-2} \otimes \Le_{2n}, \end{split} \label{eq:fsupp} \end{equation} 
for $n \in [k+1, t-1]$.
\item The support algebras $\Le_{2n}$ and $\Ri_{2n}$ for $n \in [k+1, t-1]$ are $\Z_2$-graded algebras, \ie they decompose as a direct sum of locally odd and locally even components. Furthermore, they have a trivial graded center, meaning that any element in $\Le_{2n}$ (or $\Ri_{2n}$) that graded commutes with all other elements in $\Le_{2n}$ (or $\Ri_{2n}$) must be proportional to the identity. Such algebras with a trivial graded center are referred to as central simple graded algebras.
\end{itemize}

It is known that a central simple $\Z_2$-graded algebra is classified by the $\Z_2$-graded Brauer group over the complex numbers \cite{knus1969algebras}, denoted Br$_{\Z_2}(\mathbb{C}) = \Z_2$. Specifically, such an algebra is isomorphic to one of the following types: 
\begin{itemize} \item End$_\mathbb{C}(V)$, where $V$ is a $\Z_2$-graded Hilbert space, corresponding to the trivial element in Br$_{\Z_2}(\mathbb{C}) = \Z_2$. As an ungraded algebra, this is simply a matrix algebra. In a fermionic chain, this is the support algebra associated with an FDC \cite{2019fQCA}.
\item The Clifford algebra, $Cl_1$, which corresponds to the nontrivial element in Br$_{\Z_2}(\mathbb{C}) = \Z_2$. In a fermionic chain, this is precisely the support algebra associated with a Majorana translation \cite{2019fQCA}. 
\end{itemize}

As an illustrative example, we examine the support algebras for the identity QCA and KW, the KW duality defined in Eq.\eqref{eq:KWdual}. Since the spread of both QCAs is upper bounded by $l=1$, each unit cell in Fig.\ref{Fig:partition} can be chosen to contain a single site. For the identity, we find the support algebra 
\begin{equation} \Le_{2n} = \langle X_{2n}, \psi_{2n} \rangle, 
\end{equation} 
which is isomorphic to a matrix algebra on a two-dimensional $\Z_2$-graded Hilbert space, \ie $\Le_{2n} \cong \mathrm{End}_\mathbb{C}(\mathbb{C}^{1|1})$. On the other hand, for the KW transformation, we find the support algebra 
\begin{equation} \Le_{2n} = \langle Z_{2n}(\prod_{j=2n+1}^{m-1} X_j) Y_{m} \rangle, \end{equation} 
which is generated by a single order-2 element. This can be interpreted as a single Majorana mode and corresponds precisely to the support algebra of a Majorana translation \cite{2019fQCA}.

\section{Comparison of the non-invertible operators in $\mathrm{Rep}(\mathrm{D}_8)$ and $\mathrm{Rep}(\mathrm{Q}_8)$}

\label{app:Rep}

In this appendix, we demonstrate that $\mathrm{D}'$, the non-invertible operator in the Rep(Q$_8$) fusion category, can be derived from $\mathrm{D}$, the non-invertible operator in the Rep(D$_8$) fusion category, via an FDC $W$ composed of $\Z_2 \times \Z_2$ symmetric gates. The spatial truncation of $WW'$, denoted $U_\Lambda = O_L S_g(\Lambda^{-l'})O_R^\dagger$ over a finite interval $\Lambda$, satisfies the following properties: (1) $S_g(\Lambda^{-l'})$ is a symmetry string associated with the diagonal $\Z_2$ element $ab$; (2) both $O_L$ and $O_R$ carry nontrivial charges under the symmetries $a$ and $b$.

To distinguish between the two non-invertible operators, we sequentially gauge a $\Z_2^3$ global symmetry in 2D, employing two distinct types of Dijkgraaf-Witten twists. The resulting 2D bulk theories correspond to the Drinfeld centers, $\mathcal{Z}(\mathrm{Rep}(\mathrm{D}_8))$ and $\mathcal{Z}(\mathrm{Rep}(\mathrm{Q}_8))$, respectively. By examining the bulk anyon lines, one can infer the difference between the two non-invertible symmetry operators acting in 1D.

The two Dijkgraaf-Witten twists are specified as Type III and Type II+II+III \cite{2015JCWang}, respectively: 
\begin{equation} 
\begin{split} 
&\mathcal{L}_1 = \frac{1}{2} A_1 \cup A_2 \cup A_3, \\ 
&\mathcal{L}_2 = \frac{1}{2}(A_1 \cup A_1 \cup A_3 + A_2 \cup A_2 \cup A_3 + A_1 \cup A_2 \cup A_3), 
\end{split} \end{equation} 
where $A_1$, $A_2$, and $A_3$ are background gauge fields associated with the three $\Z_2$ symmetries. We first promote $A_1$ and $A_2$ to dynamical gauge fields, denoted as $a_1$ and $a_2$. In both resulting theories, a remaining $\Z_2$ global symmetry persists, generated by two topological surface operators: 
\begin{equation} \begin{split} &U_1 = \exp(i\pi\int_M a_1 \cup a_2), \\ 
&U_2 = \exp(i\pi\int_M (a_1 \cup a_1 + a_2 \cup a_2 + a_1 \cup a_2)), 
\end{split} \label{eq:D8_Q8_sym_op}\end{equation} 
where the integrals are taken over $M$, the worldsheet of the symmetry operators. Three key observations will be useful for the subsequent discussion:
\begin{itemize} \item When $M$ is a closed surface with no boundary, we have $U_1 = U_2$, and both operators square to the identity. This aligns with the fact that they generate a $\Z_2$ symmetry.

\item Due to the presence of the $a_1 \cup a_2$ term, both surface operators carry a gauged $1+1$-dimensional $\Z_2 \times \Z_2$ SPT phase on their worldsheet. Thus, if we label the gauge charge of the $a_1$ ($a_2$) field as $e_1$ ($e_2$), and the corresponding gauge fluxes as $m_1$ ($m_2$), the symmetry operators act nontrivially on the anyons, implementing the following permutation:
\begin{equation}
    \begin{split}
        & e_1 \mapsto e_1, \quad m_1 \mapsto e_2 m_1, \\
        & e_2 \mapsto e_2, \quad m_2 \mapsto e_1 m_2.
    \end{split}
    \label{eq:thirdZ2}
\end{equation}

\item When the surface $M$ has a boundary, the two symmetry operators become distinct:
\begin{equation}
    U_2 = U_1 \exp(i\pi\int_{\partial M} \frac{1}{2}a_1 + \frac{1}{2}a_2),
    \label{eq:halfanyon}
\end{equation}
which can be understood by noting that, for a $\Z_2$ gauge field, $a_1 \cup a_1 = \frac{1}{2} d a_1$ and similarly for $a_2$. Importantly, along the boundary, the two operators differ by a ``half" anyon line associated with $e_1 e_2$.

\end{itemize}

We further note that when surface $M$ has a boundary, the symmetry operators in Eq.\eqref{eq:D8_Q8_sym_op} are not gauge invariant. To restore the gauge invariant, we can introduce ancillary degrees of freedom on $\partial M$, and write the symmetry operators as follows,
\begin{equation}
    D_i = U_i \cdot \sum_{\phi_1,\phi_2 }\exp(i\pi \int_{\partial M}\phi_1\cup \phi_2+\phi_1\cup a_2+a_1\cup \phi_2),
\end{equation}
where $\phi_1$ and $\phi_2$ are $\Z_2$-valued ancillary fields. Under a gauge transformation $a_i\rightarrow a_i+\delta\lambda_i$, these fields transform as $\phi_i\rightarrow\phi_i+\lambda_i$. Thus, we sacrifice the unitarity of symmetry operators in order to restore the gauge invariance. By straightforward calculation, we see that
\begin{equation}
    D_i^2=1+\exp(i\pi \int_{\partial M}a_1)+\exp(i\pi \int_{\partial M}a_2)+\exp(i\pi \int_{\partial M}a_1+a_2).
\end{equation}
This result agrees with the fusion rule of TY-categories in Eq.\eqref{eq:TYfusion} on the boundary, after a relabeling of anyons as $m_1 \rightarrow \tilde{e}_2$, $m_2 \rightarrow \tilde{e}_1$, $e_1 \rightarrow \tilde{e}_1\tilde{m}_2$, and $e_2 \rightarrow \tilde{e}_2\tilde{m}_1$.

Now we gauge the remaining $\mathbb{Z}_2$ global symmetry. After the above relabeling, when anyons braid around a gauge flux of the third $\mathbb{Z}_2$ symmetry (denoted by $\mu_1$ and $\mu_2$ for the two distinct types of Dijkgraaf-Witten twists), they are permuted according to the anyon permutation rule in Eq.\eqref{eq:thirdZ2}:
\begin{equation} \tilde{e}_1 \leftrightarrow \tilde{m}_2,\quad \tilde{e}_2 \leftrightarrow \tilde{m}_1. 
\end{equation}
Due to the ``PEM Theorem,"\cite{2015PEM,2020PEM} the 2D topological orders associated with the Dijkgraaf-Witten twists $\mathcal{L}_1$ and $\mathcal{L}_2$ are $\mathcal{Z}(\mathrm{Rep}(\mathrm{D}_8))$ and $\mathcal{Z}(\mathrm{Rep}(\mathrm{Q}_8))$, respectively. Consequently, the boundary descendants of $\tilde{m}_1$, $\tilde{m}_2$, and $\mu_1$ ($\mu_2$) form the fusion categories $\mathrm{Rep}(\mathrm{D}_8)$ ($\mathrm{Rep}(\mathrm{Q}_8)$). The boundary descendants of $\mu_1$ (or $\mu_2$), $\tilde{m}_1$, and $\tilde{m}_2$ correspond to the non-invertible operators $\mathrm{D}$ (or $\mathrm{D}'$), as well as the $\Z_2^e$ and $\Z_2^o$ symmetries, respectively. Furthermore, according to Eq.\eqref{eq:halfanyon}, if we fuse two $\mu_1$ and $\mu_2$ in the bulk, we find that the fusion results differ by an anyon $\tilde{e}_1\tilde{m}_1\tilde{e}_2\tilde{m}_2$. This leads to the difference between the two non-invertible operators $\mathrm{D}$ and $\mathrm{D}'$ stated at the beginning of this Appendix.

\section{Classification of $G$-symmetric FDCs}
\label{app:sFDC}

First let us briefly review the definition of $H^2$ cohomology class for a $G$-symmetric uQCA, using the matrix product unitary (MPU) representation. It has been shown in Ref.\cite{2017MPUCirac,2018MPUChen} that in 1D, any uQCA can be written as a MPU.  By blocking a finite number $k$ of sites, the MPU can always be brought into the ``standard form" \cite{2017MPUCirac} (\ie the ``two-layer construction”), as shown in Fig.\ref{Fig:twolayer}, where $u$ and $v$ are unitaries. Each coarse-grained site (represented by the black vertical lines in Fig.\ref{Fig:twolayer}) then carries $\rho_g = R_g^{\otimes k}$, \ie a $k$-th power of the regular representation of $G$, and the tensors $u$ and $v$ transform under the $G$ symmetry as shown in Figs \ref{fig:utrans} and \ref{fig:vtrans}. Here $x_g$ and $y_g$ are projective representations of $G$. The second cohomology class of the uQCA is defined to be the one associated with the projective representation $x_g$ (or $y_g^*$) \cite{2021Gonginfoflow}. One can show that the cohomology classes multiply under the composition of uQCAs. In addition, it is easy to see that a FDC with $G$-symmetric gates carries the trivial class.

Thus, to prove the classification in Theorem \ref{thm:GsymFDC}, it suffices to show that a $G$-symmetric MPU with a trivial GNVW index and trivial group cohomology is always an FDC with $G$-symmetric gates, if each site is in a regular representation of $G$.

\begin{figure}
\begin{center}
  \includegraphics[width=.45\textwidth]{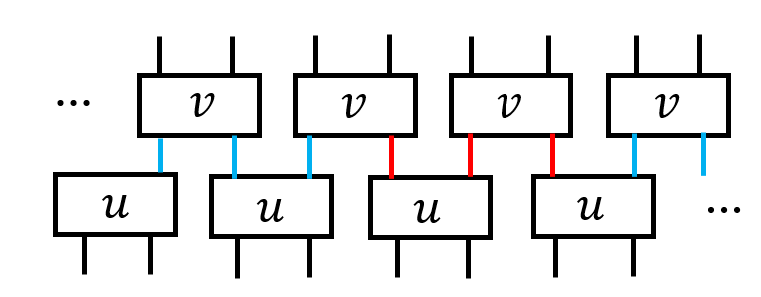} 
\end{center}
\caption{The standard form of a matrix product unitary (MPU).}
\label{Fig:twolayer}
\end{figure}

\begin{lemma}
    Let $G$ be a finite Abelian group, and suppose each site has a local Hilbert space in the regular representation of $G$. Then a $G$-symmetric FDC associated with the trivial element in $H^2(G, \U)$ is an FDC with $G$-symmetric gates.
    \label{lemma:sFDC}
\end{lemma}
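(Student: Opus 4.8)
\emph{Proof proposal.} The plan is to reduce the statement to the structure theory of (symmetric) matrix product unitaries and then ``symmetrize'' the standard two--layer form gate by gate, using the trivial $H^2$ class to kill the projective obstruction and the regular--representation hypothesis to redistribute the $G$--charge between the two layers. Concretely, since the $G$--symmetric FDC $U$ is a $1$D uQCA it admits an MPU representation \cite{2017MPUCirac}, and by blocking a sufficiently large (even) number $k$ of sites we may bring it to the standard form of Fig.~\ref{Fig:twolayer}, $U = (\bigotimes_i v_i)\,(\bigotimes_i u_i)$, a depth--$2$ brickwork circuit on the coarse--grained chain in which each coarse--grained Hilbert space factorizes as $\mathcal H_i \cong \mathcal A_i \otimes \mathcal B_i$, with $u_i$ acting on $\mathcal B_i\otimes\mathcal A_{i+1}$ and $v_i$ on $\mathcal A_i\otimes\mathcal B_i$. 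Because $U$ is an FDC its GNVW index is $1$, which forces the split to be balanced, $\dim\mathcal A_i = \dim\mathcal B_i$; this is where the hypothesis ``FDC'' (rather than a general $G$--symmetric uQCA) enters.

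Next I would incorporate the symmetry. Conjugating $U = Q_g U Q_g^{-1}$, with $Q_g = \bigotimes_i \rho_g$ the onsite symmetry on the coarse--grained chain, through the two--layer form shows that $Q_g$ passes through each $u_i$ and $v_i$ up to unitaries supported on the individual half--sites: there are unitaries $\ell_i(g)$ on $\mathcal A_i$ and $r_i(g)$ on $\mathcal B_i$ intertwining the symmetry across a single gate, and $\ell_i,r_i$ are projective representations of $G$ whose common cohomology class is precisely the $H^2(G,\U)$ invariant of $U$ (this is the definition recalled above via the tensors of Figs.~\ref{fig:utrans}--\ref{fig:vtrans}). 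Since that class is trivial, one rephases $\ell_i(g)\to\tilde\ell_i(g)$ and $r_i(g)\to\tilde r_i(g)$ into honest \emph{linear} representations, consistently and translation--invariantly.

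The core step is then to choose the half--site splitting to be $G$--equivariant with the representations $\tilde\ell_i,\tilde r_i$ just produced, i.e.\ to realize $\rho_g|_{\mathcal H_i}\cong \tilde\ell_i(g)\otimes\tilde r_i(g)$ genuinely (not merely up to phase). Here the regular--representation hypothesis is essential: on the coarse--grained site $\rho_g = R_g^{\otimes k}$ is, for abelian $G$, a multiple of the regular representation, $R_g^{\otimes k}\cong |G|^{k-1}R_g$, whose character content is flat, so one can always split off a factor carrying any prescribed linear representation on one half--site while the complementary half--site again carries a multiple of the regular representation; enlarging $k$ if necessary makes the dimensions match. With this equivariant splitting, the gates $u_i,v_i$ (after absorbing the compensating unitaries built from $\tilde\ell_i,\tilde r_i$) each commute with the $G$--action on their support, so they are $G$--symmetric gates, and $U = (\bigotimes_i v_i)(\bigotimes_i u_i)$ is exhibited as a depth--$2$ finite--depth circuit of $G$--symmetric gates on the coarse--grained lattice, hence an FDC with $G$--symmetric gates on the original chain.

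I expect the main obstacle to be precisely the bookkeeping in the core step: showing that the gauge freedom available in the symmetric MPU standard form (choice of bond basis, amount of blocking, and the splitting $\mathcal A_i\otimes\mathcal B_i$) is rich enough to simultaneously (i) trivialize the $2$--cocycle on \emph{both} layers in a mutually compatible, translation--invariant way, and (ii) keep the half--site factorization $G$--equivariant; and to verify that the regular--representation assumption is exactly what makes (ii) solvable without ancillas, in contrast to the stable classification of symmetric uQCAs \cite{2018sMPU}. The remaining ingredients --- existence of the symmetric two--layer form, and the facts that the index and the $H^2$ class compose multiplicatively and vanish on FDCs with symmetric gates --- are standard and already recalled above \cite{2017MPUCirac,2018sMPU,2021Gonginfoflow}.
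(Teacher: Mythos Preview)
Your overall strategy matches the paper's: pass to the MPU standard two--layer form, use index $=1$ to make the half--site split balanced, use the trivial $H^2$ class to make the bond operators $x_g,y_g$ honest linear representations, and then exploit the regular--representation hypothesis to make the intermediate (bond) legs carry the same $G$--representation as the physical legs so that the gates $u,v$ become individually $G$--symmetric after a gauge transformation.

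The place where your proposal is imprecise is exactly the ``core step'' you flag. You phrase it as choosing the splitting so that $\rho_g\cong\tilde\ell_i\otimes\tilde r_i$ with $\tilde r_i$ a multiple of the regular representation; but $\tilde\ell_i,\tilde r_i$ are not freely chosen --- they are the bond representations handed to you by the MPU and are fixed up to unitary equivalence, so you cannot simply ``split off a prescribed factor.'' What the paper does to resolve this is not to change the splitting but to block further: combining three consecutive $v$--gates (and likewise three $u$--gates) makes the new bond leg carry $y_g\otimes x_g\otimes y_g$. Since $y_g\otimes x_g\cong v^\dagger(\rho_g\otimes\rho_g)v\cong R_g^{\otimes 2k}$ is a multiple of the regular representation, one gets $\tr(y_g\otimes x_g\otimes y_g)=\delta_{e,g}\dim(R_g)^{3k}$, so the blocked bond leg is itself $\cong R_g^{\otimes 3k}$, i.e.\ exactly the physical representation on the three--site block. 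A unitary gauge $U_a,U_b$ then aligns bond and physical representations, and the blocked gates $\tilde u,\tilde v$ are $G$--symmetric. So your ``enlarging $k$ if necessary'' is the right instinct; the concrete mechanism is block--by--three together with the character identity $\tr(R_g^{\otimes m}\otimes V)=\delta_{e,g}\dim(R_g)^m\dim V$, which is what buys you the equivariance you were looking for without having to argue about freedom in the half--site factorization.
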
 

\begin{proof}
 Here, (1) $x_g$ and $y_g$ are linear representations of $G$, since the MPU has trivial group cohomology \cite{2018sMPU}, and (2) $x_g$ and $y_g$ both have the same dimension as $\rho_g$, since the MPU has a trivial GNVW index \cite{2017MPUCirac}. We now show that, with further blocking, the MPU becomes an FDC with symmetric gates.

\begin{figure}
     \centering
     \begin{subfigure}[b]{0.3\textwidth}
         \centering
         \includegraphics[width=\textwidth]{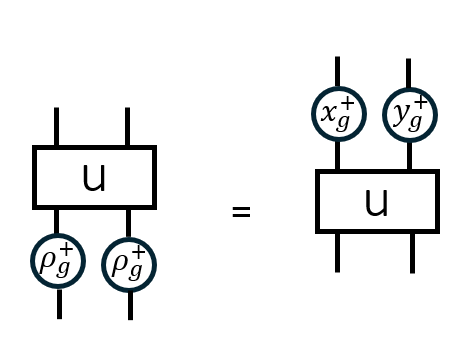}
         \caption{}
         \label{fig:utrans}
     \end{subfigure}
     \quad \quad \quad \quad 
     \begin{subfigure}[b]{0.3\textwidth}
         \centering
         \includegraphics[width=\textwidth]{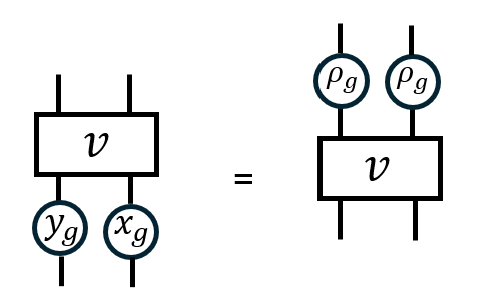}
         \caption{}
         \label{fig:vtrans}
     \end{subfigure}
     \caption{Transformation of the building blocks of the standard form under the action of the symmetry}
\end{figure}

Let us combine three $v$ gates into a coarse-grained gate, denoted by $v'$. Specifically, we combine three consecutive blue legs in Fig.\ref{Fig:twolayer}, carrying representations $y_g$, $x_g$, and $y_g$, into one leg. Given that $y_g \otimes x_g = v^\dagger (\rho_g \otimes \rho_g) v$ (the transformation of $v$ shown in Fig.\ref{fig:vtrans}), we see that $y_g \otimes x_g$ is equivalent to a $2k$-fold tensor product of the regular representation. Consequently, the combined leg carries a representation equivalent to a $3k$-fold tensor product of the regular representation, 
\begin{equation} y_g \otimes x_g \otimes y_g = U_a R_g^{\otimes 3k} U_a^\dagger, 
\end{equation} 
where $U_a$ is a unitary gauge transformation. This can be seen by checking the character: \begin{equation} 
\tr(y_g \otimes x_g \otimes y_g) = \tr(\rho_g \otimes \rho_g \otimes y_g) = \delta_{e,g} \dim(R_g)^{3k}, 
\end{equation} 
where $e$ is the identity element in $G$, and we use the fact that the regular representation has the property $\tr(R_g) = \delta_{e,g} \dim(R_g)$. Similarly, we combine three red legs into a single leg, which then carries a representation equivalent to a $3k$-fold tensor product of the regular representation, 
\begin{equation} 
x_g \otimes y_g \otimes x_g = U_b R_g^{\otimes 3k} U_b^\dagger. \end{equation} 
Thus, after a gauge transformation $\tilde{v} = v'(U_a \otimes U_b)$, the coarse-grained gate $\tilde{v}$, as shown in Fig.\ref{Fig:vtilde} is symmetric under an onsite $3k$-fold regular representation. Similarly, by combining three $u$ gates into a new coarse-grained gate and performing a gauge transformation, we obtain $\tilde{u} = (U_b^\dagger \otimes U_a^\dagger) u$, which is also symmetric under an onsite $3k$-fold regular representation. This completes the proof that, on a chain where each site carries a regular representation, a $G$-symmetric FDC with a trivial group cohomology is an FDC with symmetric gates, given by $\tilde{v}$ and $\tilde{u}$. 
\end{proof}

\begin{figure}
\begin{center}
  \includegraphics[width=.25\textwidth]{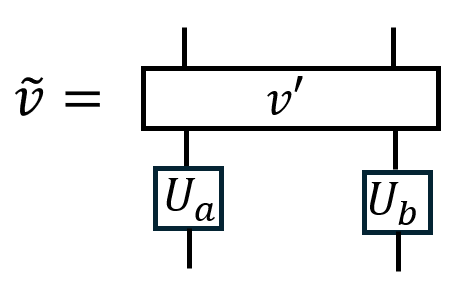} 
\end{center}
\caption{Building blocks of the sFDC: $v'$ and $u'$ (not shown) are symmetric local gates. Each leg in this diagram represents a combination of three legs from Fig. \ref{Fig:twolayer}.
}
\label{Fig:vtilde}
\end{figure}

\bibliographystyle{quantum}
\bibliography{Ref.bib} 
\end{document}